\newcolumntype{L}[1]{>{\raggedright\let\newline\\\arraybackslash\hspace{0pt}}p{#1}}
\newcolumntype{C}[1]{>{\centering\let\newline\\\arraybackslash\hspace{0pt}}p{#1}}
\newcolumntype{R}[1]{>{\raggedleft\let\newline\\\arraybackslash\hspace{0pt}}p{#1}}
\newcommand{\MC}{\boldsymbol \beta}
\newcommand{\vc}{{\beta}}
\newcommand{\MS}{\boldsymbol \alpha}
\newcommand{\vs}{{\alpha}}
\newcommand{\ssp}{{\kappa}}
\newcommand{\sssp}{{\boldsymbol\kappa}}
\newcommand{\ttt}{\boldsymbol \theta}
\newcommand{\SG}{\boldsymbol \Sigma}
\newcommand{\EE}{\mathbf 1}
\newcommand{\EEE}{\mathbb E}
\newcommand{\MM}{\mathbf M}
\newcommand{\g}{\mathbf g}
\newcommand{\HH}{\mathbf H}
\newcommand{\hh}{\mathbf h}
\newcommand{\III}{\mathcal I}
\newcommand{\rr}{\mathbf r}
\newcommand{\R}{\mathbb R}
\newcommand{\SSSS}{\mathcal S}
\newcommand{\Y}{\mathbf Y}
\newcommand{\qq}{\mbox{$\mathbf q$}}
\newcommand{\1}{\uppercase\expandafter{\romannumeral1}}
\newcommand{\2}{\uppercase\expandafter{\romannumeral2}}
\newcommand{\diag}{\text{diag}}
\newcommand{\var}{\text{var}}
\newcommand{\avar}{\text{avar}}
\newcommand{\CI}{\text{CI}}
\newcommand{\FDR}{\text{FDR}}
\newcommand{\argmin}{\operatornamewithlimits{arg\,min}}
\newtheorem{theorem}{Theorem}
\newtheorem{lemma}{Lemma}
\newtheorem{proposition}{Proposition}
\title{Efficient estimation and inference for the signed $\beta$-model in directed signed networks}
\author{Haoran Zhang$^\dag$ and Junhui Wang$^\ddag$\\ [10pt]
	$^\dag$ Department of Statistics and Data Science \\
	Southern University of Science and Technology 
	\and
	$^\ddag$Department of Statistics \\
	The Chinese University of Hong Kong
}
\date{}
\begin{document}
\maketitle

\onehalfspacing

\begin{abstract}
This paper proposes a novel  signed $\beta$-model for directed signed network, which is frequently encountered in application domains but largely neglected in literature. The proposed signed $\beta$-model decomposes a directed signed network as the difference of two unsigned networks and embeds each node with two latent factors for in-status and out-status. The presence of negative edges leads to a non-concave log-likelihood, and a one-step estimation algorithm is developed to facilitate parameter estimation, which is efficient both theoretically and computationally. We also develop an inferential procedure for pairwise and multiple node comparisons under the signed $\beta$-model, which fills the void of lacking uncertainty quantification for node ranking. Theoretical results are established for the coverage probability of confidence interval, as well as the false discovery rate (FDR) control for multiple node comparison. The finite sample performance of the signed $\beta$-model is also examined through extensive numerical experiments on both synthetic and real-life networks.
\end{abstract}

\noindent
KEY WORDS: Directed network, estimating equation, false discovery rate, node ranking, one-step estimation, status theory

\doublespacing

\section{Introduction}\label{Sec:intro}

Network data has attracted increasing attention from different scientific communities, due to its flexibility in describing various pairwise relations among multiple objects of interest. In literature, 
various network models have been developed, such as the Erd\"os-R\'enyi model \citep{Erdos1960evol}, the stochastic block model \citep{holland1983stochastic,Zhao2012cons}, the $\beta$-model \citep{chatterjee2011random}, the latent space model \citep{hoff2002latent}, and the network embedding model \citep{zhang2021directed}. Among them, the $\beta$-model is one of the most popular models \citep{rinaldo2013maximum, karwa2016inference, graham2017econometric, chen2021analysis}, which explicitly represents each node $i$ with a numeric factor $\beta_i$ to accommodate degree heterogeneity. Yet, most existing development of the $\beta$-model focuses on undirected and unsigned networks, and it is only recently that the directed $\beta$-model \citep{yan2016asymptotics,yan2019stat} has been developed to analyze directed unsigned networks. 

In this paper, we propose a novel signed $\beta$-model for directed signed network, which is frequently encountered in various application domains but largely neglected in literature. For instances, on many social network platforms such as Facebook or Twitter, users may send likes (positive edges) or dislikes (negative edges) to other users' posts, leading to a directed signed social network. In a citation network, authors may cite papers of other authors, where the citations can be categorized as either endorsement (positive edges) or criticism (negative edges). One interesting feature of directed signed network is the so-called status theory \citep{guha2004propagation}, which essentially suggests that a directed signed edge pointing from one node to another highly depends on their relative status. The status theory follows from the intuition that nodes with higher status tend to be more influential in the network and attract more attention from nodes with lower status \citep{leskovec2010signed}. 

Motivated by the status theory, the proposed signed $\beta$-model aims at quantifying the bi-faceted roles of each node, including its in-status and out-status. It models the probability of a directed signed edge from node $i$ to node $j$ in such a way that it is determined by both the out-status factor of node $i$ and the in-status factor of node $j$. More specifically, a node with lower out-status tends to send more negative edges and less positive edges to other nodes with low in-status, whereas a node with higher in-status tends to receive more positive edges and less negative edges from other nodes with high out-status. Furthermore, the signed $\beta$-model decomposes the directed signed network as the difference of two directed unsigned networks, corresponding to the positive and negative edges, respectively. The presence of negative edges leads to a non-concave log-likelihood, casting great challenges for parameter estimation. To circumvent the difficulty, a one-step estimation algorithm is developed, in which a single update is conducted from an initial estimate obtained via estimating equation. Besides computational efficiency, asymptotic estimation efficiency of the one-step estimate is also established.  The signed $\beta$-model also admits some novel inferential procedures for pairwise and multiple node comparisons with respect to either in-status or out-status, with theoretical guarantees on the coverage probability of confidence interval, as well as the FDR control for multiple node comparison.

{\bf Contribution.} The main contribution of this paper is three-fold. First, it proposes a novel statistical model for the under-investigated directed signed network, which decomposes it as the weighted difference of two unsigned network, and embeds each node with two latent factors for in-status and out-status. Second, it develops an efficient one-step estimation algorithm to address the non-concavity of the log-likelihood induced by the negative edges, as well as an estimation procedure for the negative sparsity parameters, which guarantees a theoretically efficient estimate and overcomes the computational-statistical gap. Third, to the best of our limited knowledge, this paper is the first attempt to provide an inferential procedure for pairwise and multiple node comparisons in directed signed networks, which fills the void of lacking uncertainty quantification for node ranking.  

{\bf Related works.} In addition to the directed $\beta$-model, there have been some recent works on directed unsigned networks, including the stochastic co-block model \citep{rohe2016co},  the sparse random graph model \citep{stein2021sparse}, and the network embedding model \citep{zhang2021directed}. All these models represent each node with two sets of latent factors, but largely rely on the nature of binary networks and cannot be directly extended to accommodate negative edges. Moreover, there have also been some other works  in machine learning literature on community detection in undirected signed networks, such as \cite{chiang2012scalable}, \cite{chiang2014prediction}, \cite{cucuringu2019sponge} and \cite{cucuringu2021regularized}. Most of these works focus on the balance theory for undirected signed network \citep{heider1946attitudes, cartwright1956structural}, which is substantially different from the status theory induced by the directed edges. 

The proposed model is also related to the Rasch model \citep{haberman1977maximum, chen2021note} in item response theory, which also relies on the exponential family assumption and thus cannot be applied to directed signed network. The Bradley-Terry model \citep{chen2019spectral, gao2021uncertainty, han2020asymptotic, chen2022optimal} has also been widely used for ranking problems, where the pairwise comparison is determined by the latent scores assigned to each item in the comparison. 
Yet, the Bradley-Terry model model is particularly designed for the ``skew-symmetric'' network, and it remains unclear how to extend the latent scores to incorporate both in-status and out-status in directed signed network.

{\bf Organization of the paper.} The rest of the paper is organized as follows. Section 2 presents the proposed signed $\beta$-model for directed signed network as well as a one-step estimation algorithm. Section 3 establishes the uniform estimation consistency and asymptotic normality of the one-step estimate. Section 4 presents the inferential procedures for pairwise and multiple node comparisons, as well as their theoretical guarantees. Section 5 conducts numerical experiments on synthetic and real-life networks to examine the finite sample performance of the proposed model. 
Section 6 concludes the paper with a brief discussion, and technical proofs and necessary lemmas are provided in the Appendix.

Throughout this paper, we use $c$ to denote a generic positive constant whose value may vary according to context. For two nonnegative sequences $a_n$ and $b_n$, $a_n\lesssim b_n$ means there exists a positive constant $c$ such that $a_n\leq cb_n$ when $n$ is sufficiently large. For a vector $\hh$, let $\hh_{[1:d]}$ denote its first $d$ entries; for a matrix $\HH$, let $\HH_{[1:d,1:d]}$ denote its upper left $d\times d$ block.

\section{Proposed method}

Suppose a directed signed network $\mathcal G$ is observed, with $n$ nodes labeled by $[n]=\{1,...,n\}$ and an adjacency matrix $\Y = (y_{ij})_{n\times n}$ with $y_{ij} \in \{-1,0,1\}$. Here, $y_{ij} = 1$ if there is a positive edge from node $i$ to node $j,$ $y_{ij} = -1$ if there is a negative edge from node $i$ to node $j,$ and $y_{ij} = 0$ if no edge is observed at all. Suppose no self loop is allowed, and thus $y_{ii}=0$ for all $i \in [n]$.

\subsection{Signed $\beta$-model}

The proposed signed $\beta$-model first decomposes $\mathcal G$ as the difference of two unsigned networks. Specifically, it formulates $y_{ij} = z^+_{ij} - z^{-}_{ij}$, where $z_{ij}^+$ and $z_{ij}^{-}$ are two independent Bernoulli random variables, and
$$
\Pr(z^+_{ij}=1) = \frac{e^{\vs_i+\vc_j}}{1+e^{\vs_i+\vc_j}},~~\mbox{and}~~\Pr(z^{-}_{ij}=1) = \frac{\ssp_i}{1+e^{\vs_i+\vc_j}}.
$$ 
Here, $\vs_i+\vc_j$ measures the relative status between nodes $i$ and $j$, and $\ssp_i \in\{\ssp_{00},\ssp_{01}\}$ with $0<\ssp_{00}<\ssp_{01}<1$ quantifies two different patterns of sending negative edges.

It is clear that as $\vs_i+\vc_j$ increases, node $i$ is more likely to send a positive edge and less likely to send a negative edge to node $j$. The probability mass function of $y_{ij}$ can be specified as
\begin{equation}
	\label{eq:model}
p(y \mid \vs_i+\vc_j,\ssp_i) =
\left\{\begin{aligned}
&\frac{e^{2(\vs_i+\vc_j)}+e^{\vs_i+\vc_j}(1-\ssp_i)}{(1+e^{\vs_i+\vc_j})^2}, ~~&& \mbox{if}~y= 1;\\
&\frac{e^{\vs_i+\vc_j}(1+\ssp_i)+ (1-\ssp_i)}{(1+e^{\vs_i+\vc_j})^2}, ~~ && \mbox{if}~y=0; \\
&\frac{\ssp_i}{(1+e^{\vs_i+\vc_j})^2}, ~~ && \mbox{if}~y=-1.
\end{aligned}
\right.
\end{equation}
It is interesting to note that \eqref{eq:model} accommodates the status theory \citep{guha2004propagation, leskovec2010signed} for directed signed network, where $\vc_j$ represents the in-status for node $j$ and $\vs_i$ represents the out-status for node $i$. It implies that a node with higher in-status tends to receive more positive edges, and a node with higher out-status tends to send more positive edges.

The signed $\beta$-model is flexible and includes the standard $\beta$-model \citep{chatterjee2011random, graham2017econometric} and the directed $\beta$-model \citep{yan2016asymptotics} as its special cases. Particularly, the signed $\beta$-model reduces to the directed $\beta$-model if all $\ssp_i$'s are set as 0, and the standard $\beta$-model if we further set $\beta_i = \alpha_i$. More interestingly, if we set $\ssp_i = 1$, the signed $\beta$-model reduce to two separate $\beta$-models, one for $z^+$ and the other for $z^-$,  except that $\EEE z^+$ increases as $\vs_i+\vc_j$ increases, while $\EEE z^-$ decreases as $\vs_i+\vc_j$ increases. However, as negative edges are often much less frequently observed than positive edges in signed networks \citep{tang2016survey}, it is more appropriate to employ small $\ssp_i$ in the signed $\beta$-model. 
In particular, we suppose $\ssp_i$ could take two different values, $\ssp_i\in\{\ssp_{00},\ssp_{01}\}$, to characterize two different patterns of sending negative edges, where $\ssp_{00} < \ssp_{01}$ and both of them may decay with $n$ to accommodate sparse networks. For example, we may employ an extremely small $\ssp_{00}$ for nodes who rarely send negative edges, while $\kappa_{01}$ could be estimated from data for those nodes who occasionally send negative edges. An estimation procedure determining the class of each $\ssp_i$ and the value of $\kappa_{01}$ is provided in the supplement.
The signed $\beta$-model is also closely related with the ordinal regression model \citep{hoff2021additive} when $y_{ij}$ is regarded as ordinal response.


Note that the parameters are not identifiable in \eqref{eq:model}, as one can add a constant to $\alpha_i$ and subtract it from $\beta_j$ without affecting the distribution of $y_{ij}$. We thus set $\vc_n = 0$ for identifiability, and  denote $\ttt = (\MS ^\top, \MC^\top)^\top\in\R^{2n-1}$ as the unknown parameters to be estimated, with $\MS = (\vs_1,...,\vs_n)^\top$ and $\MC = (\vc_1,...,\vc_{n-1})^\top$. We also denote $\sssp = (\ssp_1,...,\ssp_n)^\top$ as the sparsity parameters for negative edges. The presence of negative edges in the directed signed network casts new challenges to the analysis of the signed $\beta$-model. 
Specifically, define $l_{ij}(\vs_i+\vc_j;\ssp_i) = \log p(y_{ij}\mid\vs_i+\vc_j;\ssp_i)$ as the log-likelihood for edge $y_{ij}$. 
Then, given that $y_{ij}$'s are mutually independent, the log-likelihood function of $\mathcal G$ takes the form 
\begin{equation}\label{eq:likelihood}
l(\ttt;\sssp) = \sum_{i,j=1,i\neq j}^n l_{ij}(\vs_i+\vc_j;\ssp_i).
\end{equation}
As the positive value of $\ssp_i$ leads to a non-concave $l_{ij}$ with respect to $\vs_i$ and $\vc_j$, which further leads a non-concave log-likelihood function with respect to $\ttt$, the standard maximum likelihood estimation as in \cite{yan2016asymptotics} is no longer feasible. 
To facilitate parameter estimation, we develop an efficient one-step estimation algorithm, which does not require global optimum but still achieves asymptotical efficiency. In sharp contrast to the asymptotic analysis in \cite{yan2016asymptotics}, the signed $\beta$-model is generally not a member of exponential family, and thus it becomes substantially more challenging to quantify the asymptotic behavior of the one-step estimate.


\subsection{One-step estimation}\label{subsec:estimate}

To circumvent the non-concavity issue of $l(\ttt;\sssp)$ in \eqref{eq:likelihood}, the proposed one-step estimation algorithm performs  a single update  from an initial estimate of $\ttt$ obtained from the estimating equation approach. We assume known $\sssp$ and conduct the asymptotic analysis in the sequel, while an estimation procedure for $\sssp$ and its asymptotic properties are deferred to the supplement.

First, it follows from \eqref{eq:model} that 
$$
\EEE [y_{ij}] = \frac{e^{\vs_i+\vc_j} -\ssp_i}{1+e^{\vs_i+\vc_j}}.
$$ 
Denote $F(\ttt;\sssp) = (F_1(\ttt;\sssp),...,F_{2n-1}(\ttt;\sssp))^\top$, with
\begin{align*}
F_i(\ttt;\sssp) & = \sum_{k=1,k\neq i}^n y_{ik} - \frac{e^{\vs_i+\vc_k} -\ssp_i}{1+e^{\vs_i+\vc_k}},~\text{for}~i \in [n], \\
F_{n+j}(\ttt;\sssp) & = \sum_{k=1,k\neq j}^n y_{kj} - \frac{e^{\vs_k+\vc_j} -\ssp_k}{1+e^{\vs_k+\vc_j}},~\text{for}~j \in [n-1].
\end{align*}
Then the initial estimate of  $\ttt$ can be obtained by solving the following estimating equations,
\begin{equation}\label{eq:est1}
F(\ttt;\sssp) = \bf0.
\end{equation}
As will be shown in Theorem~\ref{thm:initial}, \eqref{eq:est1} has a unique solution, denoted as $\check\ttt =(\check \MS^\top, \check \MC^\top)^\top$. We remark that $\check\ttt$ is derived in the same way as in \cite{yan2016asymptotics}, which can be further refined via a one-step estimation algorithm.

Let $\check\vc_n = 0$, and define
\begin{equation}\label{eq:u def}
\begin{aligned}
\check u_i =& -\frac{\partial^2 l(\check\ttt;\sssp)}{\partial\vs_i^2} = -\sum_{k=1,k\neq i}^n l''_{ik}(\check\vs_i+\check\vc_k;\ssp_i),~\text{for}~i \in [n],\\
\check u_{n+j} =& -\frac{\partial^2 l(\check\ttt;\sssp)}{\partial\vc_j^2} = -\sum_{k=1,k\neq j}^n l''_{kj}(\check\vs_k+\check\vc_j;\ssp_k),~\text{for}~j \in [n-1],\\
\end{aligned}
\end{equation}
and let $\check u_{2n} = \sum_{i=1}^n \check u_i - \sum_{j=1}^{n-1} \check u_{n+j}$. As will be shown in the proof of Theorem~\ref{thm:normality}, the inverse Fisher information matrix $\left[-\partial^2 l(\check\ttt;\sssp)/\partial\ttt^2\right]^{-1}$ can be approximated by
\begin{equation}\label{eq:H def}
\check\HH  = 
\left(\begin{aligned}
  \check\HH_{11} ~&~ \check\HH_{12} \\ 
  \check\HH_{12}^\top ~&~ \check\HH_{22}
\end{aligned}\right),
\end{equation}
where $\check\HH_{11} = \diag(\check u_1^{-1},...,\check u_n^{-1}) + \check u_{2n}^{-1}\EE_{n}\EE_{n}^\top$, $\check\HH_{22} = \diag(\check u_{n+1}^{-1},...,\check u_{2n-1}^{-1}) + \check u_{2n}^{-1}\EE_{n-1}\EE_{n-1}^\top$, and $\check\HH_{12} = -\check u_{2n}^{-1}\EE_{n}\EE_{n-1}^\top$.
Here $\EE_{2n-1}$ denotes a vector with all ones. 
Then the one-step estimate is given as
\begin{equation}\label{eq:est2}
\widehat\ttt  = \check\ttt + \check\HH \left(\frac{\partial l(\check\ttt;\sssp)}{\partial\ttt}\right),
\end{equation}
which is equivalent to 
$$
\begin{aligned}
\widehat\vs_i &= \check\vs_i + \check u_{i}^{-1} \frac{\partial l(\check\ttt;\sssp)}{\partial\vs_i} + \check u_{2n}^{-1}\sum_{k=1}^{n}\frac{\partial l(\check\ttt;\sssp)}{\partial\vs_k} - \check u_{2n}^{-1}\sum_{l=1}^{n-1}\frac{\partial l(\check\ttt;\sssp)}{\partial\vc_l},~\text{for}~i \in [n],\\
\widehat\vc_j &= \check\vc_j + \check u_{n+j}^{-1} \frac{\partial l(\check\ttt;\sssp)}{\partial\vc_j} - \check u_{2n}^{-1}\sum_{k=1}^{n}\frac{\partial l(\check\ttt;\sssp)}{\partial\vs_k} + \check u_{2n}^{-1}\sum_{l=1}^{n-1}\frac{\partial l(\check\ttt;\sssp)}{\partial\vc_l},~\text{for}~j \in [n-1].
\end{aligned}
$$ 
The final estimate is denoted as $\widehat\ttt = (\widehat\MS^\top, \widehat\MC^\top)^\top =  (\widehat\vs_1,...,\widehat\vs_n,\widehat\vc_1,...,\widehat\vc_{n-1})$ and $\widehat\vc_n = 0$. It is worthy pointing out that the one-step estimation in \eqref{eq:est2} needs not to calculate the inverse Hessian matrix as standard Newton-Raphson update, and thus is computationally more efficient. More importantly, this one-step estimation also attains asymptotic estimation efficiency without assuming the intractable global optimum, as will be shown in Theorem~\ref{thm:normality}.

\section{Asymptotic theory}\label{sec:theory}

This section establishes the uniform consistency and asymptotic normality of the one-step estimate $\widehat \ttt$ in Section \ref{subsec:estimate}. Let $\ttt^* = (\vs_1^*,...\vs_{n}^*,\vc_1^*,...,\vc_{n-1}^*)^\top$ denote the true parameters, and $\| \ttt^* \|_{\infty} = \max\{|\vs_1^*|,...,|\vs_{n}^*|,|\vc_1^*|,...,|\vc_{n-1}^*|\}$.

For $i,j \in [n]$, let
 \begin{equation}\label{eq:uvw def}
 \begin{aligned}
 	&u_i = \EEE \left[-\frac{\partial^2 l(\ttt^*;\sssp)}{\partial \vs_i^2} \right], &&v_i = \sum_{k=1,k\neq i}^n \frac{(1+\ssp_i)e^{\vs_i^*+\vc_k^*}}{(1+e^{\vs_i^*+\vc_k^*})^2},&&w_i = \sum_{k=1,k\neq i}^n\var(y_{ik}),\\
 	&u_{n+j} = \EEE \left[-\frac{\partial^2 l(\ttt^*;\sssp)}{\partial \vc_j^2} \right],&&v_{n+j} = \sum_{k=1,k\neq j}^n \frac{(1+\ssp_k)e^{\vs_k^*+\vc_j^*}}{(1+e^{\vs_k^*+\vc_j^*})^2},&&w_{n+j}=\sum_{k=1,k\neq j}^n\var(y_{kj}),
 \end{aligned}
 \end{equation} 
 where $u_{2n} = \EEE \left[-\partial^2 l(\ttt^*;\sssp)/\partial \vc_n^2 \right]$ is defined by $u_{2n} = \sum_{i=1}^n u_i - \sum_{j=1}^{n-1} u_{n+j}$. We define two matrices as following, which will be shown as the asymptotic covariance matrices for $\check\ttt$ and $\widehat\ttt$ in Theorems~\ref{thm:initial} and \ref{thm:normality}, $$
\SG = \left(\begin{aligned}
  \SG_{11} ~&~ \SG_{12} \\ 
  \SG_{12}^\top ~&~ \SG_{22}
\end{aligned}\right)
~~~~\text{and}~~~~
\HH = \left(\begin{aligned}
  \HH_{11} ~&~ \HH_{12} \\ 
  \HH_{12}^\top ~&~ \HH_{22}
\end{aligned}\right),
$$ where $\SG_{12} = -w_{2n} v_{2n}^{-2}\EE_{n}\EE_{n-1}^\top$, $\HH_{12} = -u_{2n}^{-1}\EE_{n}\EE_{n-1}^\top$, and $$
\begin{aligned}
&\SG_{11} = \diag(w_1 v_1^{-2},...,w_{n} v_n^{-2}) + w_{2n} v_{2n}^{-2}\EE_{n}\EE_{n}^\top,\\
&\SG_{22} = \diag(w_{n+1} v_{n+1}^{-2},...,w_{2n-1} v_{2n-1}^{-2}) + w_{2n} v_{2n}^{-2}\EE_{n-1}\EE_{n-1}^\top,\\
&\HH_{11} = \diag(u_1^{-1},...,u_n^{-1}) + u_{2n}^{-1}\EE_{n}\EE_{n}^\top,\\
&\HH_{22} = \diag(u_{n+1}^{-1},..., u_{2n-1}^{-1}) + u_{2n}^{-1}\EE_{n-1}\EE_{n-1}^\top.
\end{aligned}
$$

We first establish the uniform consistency and asymptotic normality of the initial estimate $\check\ttt$ from the first step. 

 \begin{theorem}\label{thm:initial}
 	Suppose $\|\ttt^*\|_{\infty} \leq c\log n$ with $0<c<1/40$ and $\|\boldsymbol\ssp - \boldsymbol\ssp^*\|_{\infty} \lesssim e^{12\|\ttt^*\|_{\infty}}\log n/n$.
	Then as $n$ goes to infinity, with probability at least $1-c_1/n$ for a constant $c_1$, it holds true that \eqref{eq:est1} has a unique solution $\check\ttt$, which satisfies that
 	\begin{equation}\label{eq:consis0}
 		\|\check\ttt-\ttt^*\|_{\infty} \lesssim e^{6\|\ttt^*\|_{\infty}} \sqrt{\frac{\log n}{n}}.
 	\end{equation} 
 	Further, for any fixed $d$, $(\check\ttt-\ttt^*)_{[1:d]}$ is asymptotically multivariate normal with mean ${\bf0}$ and covariance matrix given by the upper $d\times d$ block of $\SG$.
\end{theorem}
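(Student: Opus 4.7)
The plan is to follow the Newton--Kantorovich strategy employed by \cite{yan2016asymptotics} for the directed $\beta$-model, adapted to accommodate two new features of the signed model: the estimating equation is driven by the plug-in value $\sssp$ rather than the truth $\sssp^*$, which introduces a deterministic bias in the ``score'' vector $F(\ttt^*;\sssp)$, and the map $F$ is no longer the score of an exponential-family likelihood, so the explicit inverse approximation of the Jacobian and its error bounds have to be re-derived. Throughout I would exploit the fact that the variance parameters $v_i$ defined in \eqref{eq:uvw def} coincide with the negative diagonal entries of $F'(\ttt;\sssp)$.

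First, a direct calculation gives $\partial F_i/\partial\vs_i = -v_i$ and $\partial F_i/\partial\vc_k = -(1+\ssp_i)e^{\vs_i+\vc_k}/(1+e^{\vs_i+\vc_k})^2$ for $k\neq i$, with analogous expressions for the derivatives in $\vc_j$. Hence $-F'(\ttt;\sssp)$ has exactly the diagonally-dominant block pattern analysed in \cite{yan2016asymptotics}, whose inverse admits a closed-form diagonal-plus-rank-one approximation (with $v_i^{-1}$ on the diagonal and $v_{2n}^{-1}$ in the rank-one correction) up to a uniformly controlled remainder. In parallel I would control the noise vector $F(\ttt^*;\sssp)$ by splitting
\[
F_i(\ttt^*;\sssp)=\sum_{k\neq i}(y_{ik}-\EEE y_{ik})+\sum_{k\neq i}\frac{\ssp_i-\ssp_i^*}{1+e^{\vs_i^*+\vc_k^*}}.
\]
The first sum is a sum of $n-1$ independent bounded zero-mean random variables, so Bernstein's inequality and a union bound yield $\max_i|\sum_{k\neq i}(y_{ik}-\EEE y_{ik})|\lesssim\sqrt{n\log n}$ with probability $1-O(n^{-1})$; the second is a deterministic bias of size at most $n\|\sssp-\sssp^*\|_\infty\lesssim e^{12\|\ttt^*\|_\infty}\log n$, which is comparable to or smaller than the stochastic term under the stated assumption. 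Plugging these two ingredients into the Newton iterate $\ttt^{(t+1)}=\ttt^{(t)}-[F'(\ttt^{(t)};\sssp)]^{-1}F(\ttt^{(t)};\sssp)$ started at $\ttt^{(0)}=\ttt^*$, together with a Lipschitz estimate for $F'$ on a ball of radius $r_n\asymp e^{6\|\ttt^*\|_\infty}\sqrt{\log n/n}$, a Newton--Kantorovich argument shows that the iterates remain in that ball and converge to a limit $\check\ttt$; the Jacobian is uniformly non-singular on the ball, which simultaneously yields existence, uniqueness of the solution of \eqref{eq:est1} inside the ball, and the rate \eqref{eq:consis0}.

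For the asymptotic normality of $(\check\ttt-\ttt^*)_{[1:d]}$, I would start from $F(\check\ttt;\sssp)=\mathbf 0$ and Taylor-expand around $\ttt^*$, obtaining $\check\ttt-\ttt^*=-[F'(\ttt^*;\sssp)]^{-1}F(\ttt^*;\sssp)+R_n$ with $\|R_n\|_\infty=O(\|\check\ttt-\ttt^*\|_\infty^2)$ times the operator norm of the inverse Jacobian. The consistency rate together with the restriction $c<1/40$ ensures $\|R_n\|_\infty=o(n^{-1/2})$; replacing $[F'(\ttt^*;\sssp)]^{-1}$ by its diagonal-plus-rank-one approximation introduces a further negligible error, leaving an explicit linear representation of each coordinate of $\check\ttt-\ttt^*$ as a weighted sum of the independent centred variables $y_{ij}-\EEE y_{ij}$, with weights determined by $v_i^{-1}$ and $v_{2n}^{-1}$. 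The multivariate Lindeberg--Feller CLT then delivers joint asymptotic normality of the first $d$ coordinates, whose covariance is read off directly as the upper-left $d\times d$ block of $\SG$.

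The hard part is the careful book-keeping of three competing error sources---the stochastic fluctuation of $F(\ttt^*;\sssp)$, the deterministic bias from $\sssp\neq\sssp^*$, and the higher-order Taylor remainder---each of which has to be controlled against the worst-case factor $e^{O(\|\ttt^*\|_\infty)}$ appearing in both the Jacobian inverse and the Lipschitz constant of $F'$. The restriction $\|\ttt^*\|_\infty\leq c\log n$ with $c<1/40$ is precisely what keeps $r_n^2$ below $v_i^{-1}\asymp e^{O(\|\ttt^*\|_\infty)}/n$ after amplification, so that both the second-order remainder and the Jacobian-approximation error remain $o(n^{-1/2})$ at the normality stage. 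Because the log-likelihood is not concave, there is no convexity shortcut as in \cite{yan2016asymptotics}: the explicit inverse approximation of the Jacobian must be carried through every step of the Newton--Kantorovich iteration and every step of the Taylor expansion, and its error has to be re-quantified at each level of the argument.
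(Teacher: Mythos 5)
Your route is essentially the paper's: the paper also controls $F(\ttt^*;\sssp)$ by splitting off the deterministic bias $n\|\sssp-\sssp^*\|_\infty$ from the Hoeffding-type fluctuation of the degrees, invokes the Newton--Kantorovich machinery of Yan et al.\ (their Theorem 7) with the diagonally dominant Jacobian class $\mathcal L(c_1e^{-2\|\ttt^*\|_\infty},c_2)$ to get existence and the rate \eqref{eq:consis0}, and then obtains normality from the expansion $\check\ttt-\ttt^*=-[F'(\ttt^*;\sssp)]^{-1}(\g-\EEE\g)-[F'(\ttt^*;\sssp)]^{-1}(\g(\ttt^*;\sssp^*)-\g(\ttt^*;\sssp))-[F'(\ttt^*;\sssp)]^{-1}\hh$, replacing the inverse Jacobian by the diagonal-plus-rank-one approximation (Yan et al.'s Lemmas 8--9) and checking that the remainder, of order $e^{18\|\ttt^*\|_\infty}\log n/n$, is $o(1)$ after standardization because $c<1/40$. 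So the decomposition, the key lemmas, and the bookkeeping you describe all match.

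The one concrete shortfall is uniqueness: your Newton--Kantorovich argument only yields that $\check\ttt$ is the unique zero of $F(\cdot;\sssp)$ \emph{inside the ball of radius $r_n$}, whereas the theorem asserts that \eqref{eq:est1} has a unique solution outright. The paper closes this by a separate global argument: since $-\partial F(\ttt;\sssp)/\partial\ttt$ is diagonally dominant with positive diagonals for \emph{every} $\ttt$, it is positive definite everywhere, and if $\widetilde\ttt\neq\check\ttt$ were a second zero, the function $h(t)=(t\widetilde\ttt+(1-t)\check\ttt)^\top F(t\widetilde\ttt+(1-t)\check\ttt;\sssp)$ would satisfy $h'(t)<0$ on $[0,1]$, contradicting $h(0)=h(1)=0$. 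You already have the ingredient (global diagonal dominance of $-F'$, which here holds because $(1+\ssp_i)e^{m}/(1+e^{m})^2>0$ for all $m$), but you need to draw this global monotonicity conclusion explicitly rather than stopping at local uniqueness; otherwise the stated claim is not fully proved.
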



Theorem \ref{thm:initial} shows that the initial estimate $\check \ttt$ is a fairly good estimate and converges to $\ttt^*$ at a fast rate. The asymptotic variance of $\check\theta_i$ is given as
$$
\avar(\check\theta_i)= w_i v_i^{-2}+w_{2n} v_{2n}^{-2},
$$
where the term $w_{2n} v_{2n}^{-2}$ is due to the identifiability constraint that $\vc_n = 0$. Further, if $\cal G$ is an unsigned network with $\ssp_1=...=\ssp_n=0$, then $w_i = v_i$ and $\avar(\check\theta_i)= v_i^{-1}+ v_{2n}^{-1}$, which coincides with the result in Theorem 2 of \cite{yan2016asymptotics}. More interestingly, Theorem \ref{thm:initial} holds true for sparse signed networks by allowing $\|\ttt^*\|_{\infty} \leq c\log n$, which matches up with the existing sparsity results for unsigned $\beta$-model \citep{yan2016asymptotics}.

We are now ready to establish the consistency and asymptotic normality of $\widehat\ttt$.

 \begin{theorem}\label{thm:normality}
	Under the same condition of Theorem~\ref{thm:initial}, with probability at least $1-c_2/n$ for a constant $c_2$, we have 
	\begin{equation}\label{eq:consis1}
 		\|\widehat\ttt-\ttt^*\|_{\infty} \lesssim e^{2\|\ttt^*\|_{\infty}} \sqrt{\frac{\log n}{n}}.
 	\end{equation} 
	Further, for any fixed $d$, $(\widehat\ttt-\ttt^*)_{[1:d]}$ is asymptotically multivariate normal with mean ${\bf0}$ and covariance matrix given by the upper $d\times d$ block of $\HH$. 
\end{theorem}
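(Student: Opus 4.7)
The plan is to analyze the one-step update through a second-order Taylor expansion of the score around $\ttt^*$ and to exploit the fact that $\check\HH$ is an explicit, analytically tractable approximation to the inverse Fisher information. Starting from \eqref{eq:est2}, I would write
\begin{equation*}
\widehat\ttt - \ttt^* \;=\; \bigl[\II + \check\HH\,\nabla^2 l(\ttt^*;\sssp)\bigr](\check\ttt-\ttt^*) \;+\; \check\HH\,\nabla l(\ttt^*;\sssp) \;+\; \check\HH\,R,
\end{equation*}
where $R$ is the second-order Taylor remainder in the expansion of $\nabla l(\check\ttt;\sssp)$ about $\ttt^*$. The three pieces correspond, respectively, to a matrix-approximation error that should vanish because $\check\HH$ is designed to be an approximate inverse of $-\nabla^2 l$, the leading stochastic term that will drive the CLT, and a higher-order Taylor remainder.

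First, I would control the matrix factor $\II + \check\HH\,\nabla^2 l(\ttt^*;\sssp)$ in the $\ell_\infty\!\to\!\ell_\infty$ operator norm. The expected Hessian $-\EEE\,\nabla^2 l(\ttt^*;\sssp)$ has a diagonal-plus-structured form, and the explicit matrix $\HH$ built in Section~2.2 is designed so that $\HH\cdot(-\EEE\,\nabla^2 l(\ttt^*;\sssp))$ equals $\II$ up to a row-summable correction that decays like $1/n$; this is a Sherman--Morrison-type identity tailored to the cross-partials $-l''_{ij}(\vs_i^*+\vc_j^*;\ssp_i)$ between the $\MS$ and $\MC$ blocks, and to the identifiability correction through $u_{2n}$. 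Combining this algebraic identity with entrywise concentration of $\check u_k$ around $u_k$ (obtained by plugging the uniform rate \eqref{eq:consis0} of Theorem~\ref{thm:initial} into the smooth functionals $l''_{ij}$) and with concentration of $\nabla^2 l(\ttt^*;\sssp)$ around its mean yields that $\|\II + \check\HH\,\nabla^2 l(\ttt^*;\sssp)\|_{\infty\to\infty}$ decays fast enough to make the first term strictly of smaller order than $e^{2\|\ttt^*\|_\infty}\sqrt{\log n/n}$.

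For the remainder, I would combine the bound $\|R\|_\infty \lesssim n\,\|\nabla^3 l\|_\infty \|\check\ttt-\ttt^*\|_\infty^2$ with a uniform third-derivative bound of order $e^{O(\|\ttt^*\|_\infty)}$, the squared initial rate $\|\check\ttt-\ttt^*\|_\infty^2 \lesssim e^{12\|\ttt^*\|_\infty}\log n / n$ from Theorem~\ref{thm:initial}, and the row-sum bound $\|\check\HH\|_{\infty\to\infty} \lesssim e^{O(\|\ttt^*\|_\infty)}/n$ inherited from the diagonal-plus-rank-one structure of $\check\HH$. Tracking the exponential factors yields $\|\check\HH R\|_\infty = o(e^{2\|\ttt^*\|_\infty}\sqrt{\log n/n})$ precisely under the sparsity restriction $c<1/40$, proving \eqref{eq:consis1}. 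For the asymptotic normality, I would replace $\check\HH$ in the leading term $\check\HH\,\nabla l(\ttt^*;\sssp)$ by its deterministic counterpart $\HH$ (contributing a negligible correction), project onto the first $d$ coordinates, and apply the Cram\'er--Wold device together with a Lindeberg CLT for any fixed linear combination. Each such combination is a sum of independent mean-zero bounded contributions $y_{ik}-\EEE y_{ik}$ and $y_{kj}-\EEE y_{kj}$, and the covariance reduces to $\HH_{[1:d,1:d]}$ because the information identity $\mathrm{Cov}(\nabla l(\ttt^*;\sssp)) = -\EEE\,\nabla^2 l(\ttt^*;\sssp)$ still holds for the signed $\beta$-model under standard regularity.

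The main obstacle will be the approximate-inverse analysis for $\check\HH$. Unlike the exponential-family directed $\beta$-model of \cite{yan2016asymptotics}, the signed $\beta$-model has Hessian entries that depend on $\ssp_i$ in a non-canonical way, so the Sherman--Morrison-type identity yielding the explicit inverse in the unsigned case must be re-derived with bounds uniform in the admissible values of $\ssp_i$, and the fluctuations of $\check u_k$ around $u_k$ together with those of $\nabla^2 l(\ttt^*;\sssp)$ around its mean must be shown to multiply to an error that still absorbs into the target rate after the various $e^{O(\|\ttt^*\|_\infty)}$ factors are accounted for. This exponent bookkeeping is the technically delicate part and is the reason behind the restriction $c<1/40$.
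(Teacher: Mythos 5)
Your skeleton (Taylor-expand the score at $\ttt^*$, let $\check\HH\,\nabla l(\ttt^*;\sssp)$ drive the CLT, show everything else is $o_p(n^{-1/2})$) is the same as the paper's, but the step you rely on to kill the linear term is exactly where the real difficulty sits, and as stated it fails. You propose to bound $\bigl[\II + \check\HH\,\nabla^2 l(\ttt^*;\sssp)\bigr](\check\ttt-\ttt^*)$ by an $\ell_\infty\!\to\!\ell_\infty$ operator-norm bound times $\|\check\ttt-\ttt^*\|_\infty$. However, the Sherman--Morrison-type approximation underlying $\HH$ is only entrywise accurate: $\|\HH - V^{-1}\|_{\max}=O(n^{-2})$ with $V=-\EEE\,\nabla^2 l(\ttt^*;\sssp)$, so the entries of $\II-\HH V$ are $O(n^{-1})$ and its row sums are only $O(1)$; the stochastic part is no better, since row $i$ of $\check\HH\bigl(\nabla^2 l(\ttt^*;\sssp)-\EEE\,\nabla^2 l(\ttt^*;\sssp)\bigr)$ has $n$ entries of size $u_i^{-1}\III_{ij}\asymp n^{-1}$, again giving row sums of order one. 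What you need is operator norm $o\bigl(e^{-6\|\ttt^*\|_\infty}/\sqrt{\log n}\bigr)$, so the product with $\|\check\ttt-\ttt^*\|_\infty\asymp e^{6\|\ttt^*\|_\infty}\sqrt{\log n/n}$ cannot be made $o_p(n^{-1/2})$ by any such norm bound. The paper avoids this by never bounding the cross term through a matrix norm: it substitutes the explicit linearization $\check\theta_k-\theta_k^* = v_k^{-1}(g_k-\EEE g_k)\pm v_{2n}^{-1}(g_{2n}-\EEE g_{2n})+\epsilon_k$ (Lemma~\ref{lem:initial decomp}) into the cross-partial sums and then exploits independence — sub-Gaussian concentration of the weighted sums with deterministic weights $\EEE[\partial^2 l/\partial\vs_i\partial\vc_l]$ and with the centered weights $\III_{il}$, using the leave-one-out quantities $g_{n+l\backslash i}$ to decouple $\III_{il}$ from $g_{n+l}$ (Lemmas~\ref{lem:multi concentrate12}--\ref{lem:multi cent2}). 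This averaging is what produces the bound $O_P(e^{20\|\ttt^*\|_\infty}\log n/n)$; no deterministic row-sum argument can reproduce it.

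Two further points. First, your claimed bound $\|\check\HH\|_{\infty\to\infty}\lesssim e^{O(\|\ttt^*\|_\infty)}/n$ is false: the rank-one blocks $\check u_{2n}^{-1}\EE\,\EE^\top$ contribute row sums of order $n\cdot\check u_{2n}^{-1}\asymp e^{2\|\ttt^*\|_\infty}$, so $\|\check\HH\|_{\infty\to\infty}=O(e^{2\|\ttt^*\|_\infty})$, and your remainder bound $\|\check\HH R\|_\infty\le\|\check\HH\|_{\infty\to\infty}\|R\|_\infty$ then blows up to order $\log n$. This piece is repairable, but only coordinatewise, using the identity $\partial l/\partial\vc_n=\sum_{k}\partial l/\partial\vs_k-\sum_{l}\partial l/\partial\vc_l$ so that the rank-one part of $\check\HH$ acts on the single scalar score $\partial l(\check\ttt;\sssp)/\partial\vc_n$ (a sum of only $n-1$ edge terms), which is how the paper writes $\widehat\vs_i=\check\vs_i+\check u_i^{-1}\partial l/\partial\vs_i+\check u_{2n}^{-1}\partial l/\partial\vc_n$ before expanding. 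Second, the diagonal piece $(\check\vs_i-\vs_i^*)\bigl(1+\check u_i^{-1}\partial^2 l(\widetilde\ttt;\sssp)/\partial\vs_i^2\bigr)$ needs the Hessian concentration \eqref{eq:con3} together with the Lipschitz bound of Lemma~\ref{lem:var}; this is implicit in your sketch and fine. But without replacing the operator-norm step by the linearization-plus-concentration argument (or an equivalent device exploiting the independence structure), the proof of both \eqref{eq:consis1} and the asymptotic normality does not go through.
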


 
 Theorem \ref{thm:normality} shows that the one-step estimate $\widehat \ttt$ also converges to $\ttt^*$ at a fast rate, and its asymptotic variance is given as
 $$
 \avar(\widehat\theta_i) = u_i^{-1}+u_{2n}^{-1},
 $$ where the term $u_{2n}^{-1}$ is also due to the identifiability constraint that $\vc_n = 0$. It is important to remark that the one-step estimate $\widehat \ttt$ is as efficient as the global maximizer of the non-concave log-likelihood function of \eqref{eq:likelihood}, which is difficult to obtain in directed signed networks, if not impossible. Proposition~\ref{prop:var compare} shows that $\widehat \ttt$ outperforms the initial estimate $\check \ttt$ asymptotically by reducing the estimation variance, confirming the advantage of the one-step estimate. 
 
\begin{proposition}\label{prop:var compare}
Under the same conditions of Theorem~\ref{thm:initial}, we have $\avar(\check\theta_i) \geq \avar(\widehat\theta_i)$.
\end{proposition}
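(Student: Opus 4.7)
The plan is to reduce the proposition to a pair of scalar inequalities. Both $\avar(\check\theta_i)$ and $\avar(\widehat\theta_i)$ split as a node-specific term plus a constant term arising from the identifiability constraint $\vc_n=0$: explicitly, $\avar(\check\theta_i)=w_iv_i^{-2}+w_{2n}v_{2n}^{-2}$ and $\avar(\widehat\theta_i)=u_i^{-1}+u_{2n}^{-1}$. Hence it suffices to prove the single template inequality $w_i v_i^{-2}\ge u_i^{-1}$ for each index $i\in[2n]$, since adding its $i$-th and $2n$-th instances gives the claim. This is an information-type bound comparing the sandwich variance of the mean-based estimating equation $F_i$ to the Fisher information, and I expect it to follow from Cauchy--Schwarz together with the information equality.

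To carry this out for $i\in[n]$, let $\mu_{ik}(\ttt):=\EEE[y_{ik}]=(e^{\vs_i^*+\vc_k^*}-\ssp_i)/(1+e^{\vs_i^*+\vc_k^*})$. Differentiating $\sum_y y\,p(y;\vs_i+\vc_k,\ssp_i)$ under the sum and using $\EEE[\partial l_{ik}/\partial\vs_i]=0$ yields the standard identity
\begin{equation*}
\frac{\partial\mu_{ik}}{\partial\vs_i}=\EEE\!\left[y_{ik}\,\frac{\partial l_{ik}}{\partial\vs_i}\right]=\mathrm{cov}\!\left(y_{ik},\frac{\partial l_{ik}}{\partial\vs_i}\right).
\end{equation*}
Summing over $k\ne i$ and exploiting the fact that $y_{ik}$ and $\partial l_{ik'}/\partial\vs_i$ are independent whenever $k\ne k'$ (each edge-wise score depends on only one $y_{ik'}$), the cross-terms vanish and I obtain $v_i=\mathrm{cov}\bigl(\sum_{k\ne i}y_{ik},\ \sum_{k\ne i}\partial l_{ik}/\partial\vs_i\bigr)$. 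Cauchy--Schwarz then gives $v_i^2\le w_i\cdot u_i$, where I use that the marginal variance of the sum of scores is $u_i$ by the information equality (obtained by differentiating the identity $\int p\,dy=1$ twice with respect to $\vs_i$ and summing over $k$). Rearranging produces $w_iv_i^{-2}\ge u_i^{-1}$; the argument for $i\in\{n+1,\ldots,2n-1\}$ is identical after exchanging the roles of $\vs$ and $\vc$.

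For the $i=2n$ term I have to first justify that $u_{2n}$, $v_{2n}$, and $w_{2n}$ really are the Fisher information, the sensitivity of the estimating equation, and the variance of the estimating function associated with $\vc_n$ (were $\vc_n$ treated as a free parameter). Because each $l_{ik}$ depends on $\vs_i$ and $\vc_k$ only through the sum $\vs_i+\vc_k$, one has $\partial^2 l_{ik}/\partial\vs_i^2=\partial^2 l_{ik}/\partial\vc_k^2$ and the analogous symmetries for the score-squared and edge-variance pieces; summing over $(i,k)$ and subtracting the contributions of $\vc_1,\ldots,\vc_{n-1}$ yields the defining identity $u_{2n}=\sum_iu_i-\sum_{j<n}u_{n+j}$ together with the parallel identities for $v_{2n}$ and $w_{2n}$. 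Once these identifications are in place, the same Cauchy--Schwarz argument delivers $w_{2n}v_{2n}^{-2}\ge u_{2n}^{-1}$, and adding the two inequalities completes the proof. I do not anticipate a substantive obstacle: the result is essentially the textbook statement that the MLE saturates the Cram\'er--Rao bound whereas a mean-based M-estimator in general does not. The only mild bookkeeping hazard is verifying the symmetry identifications for the $2n$-th coordinate, which is why I have isolated it as a separate step.
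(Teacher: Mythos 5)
Your proof is correct, and it reaches the paper's key scalar inequality by a different mechanism. The paper also reduces the claim to $w_i v_i^{-2}\ge u_i^{-1}$ coordinatewise (then concludes via the mediant inequality $\frac{a+b}{c+d}\ge\min\{a/c,\,b/d\}$ rather than your simpler summation of the $i$-th and $2n$-th instances), but it proves that inequality by introducing an auxiliary one-dimensional estimator $\bar\vs_i$ solving the single estimating equation with all other parameters fixed at truth, citing Theorem~\ref{thm:initial} for its asymptotic variance $w_iv_i^{-2}$, invoking the Cram\'er--Rao bound for $\var(\bar\vs_i)$, and letting $n\to\infty$. Your route instead proves $v_i^2\le w_iu_i$ directly: $v_i=\mathrm{cov}\bigl(g_i,\partial l/\partial\vs_i\bigr)$ by differentiating the mean under the score identity, $w_i=\var(g_i)$ and $u_i=\var(\partial l/\partial\vs_i)$ by independence and the information equality, and Cauchy--Schwarz finishes. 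This is arguably cleaner: it is an exact, non-asymptotic inequality and avoids the paper's slightly informal step of applying the finite-sample Cram\'er--Rao bound to an M-estimator that is only asymptotically unbiased; the paper's version, in exchange, is shorter given Theorem~\ref{thm:initial} and makes the statistical interpretation (MLE-type update attains the information bound, the moment-based initial estimate need not) explicit. Your bookkeeping for the $2n$-th coordinate is also fine, and in fact lighter than you suggest: $v_{2n}$ and $w_{2n}$ are already defined in \eqref{eq:uvw def} as the $j=n$ instances, and $u_{2n}=\sum_i u_i-\sum_{j<n}u_{n+j}$ coincides with $\EEE[-\partial^2 l(\ttt^*;\sssp)/\partial\vc_n^2]$ by the appendix's convention, so the same Cauchy--Schwarz step applies verbatim. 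The only caveat, shared with the paper's own proof, is that the exact identities hold when the plug-in $\sssp$ equals $\sssp^*$; under the condition $\|\sssp-\sssp^*\|_\infty\lesssim e^{12\|\ttt^*\|_\infty}\log n/n$ the discrepancy is asymptotically negligible.
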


We should point out that the efficiency gain in Proposition 1 is non-negligible, even in the sparse regime. For example, suppose $\vc_1^* = ... = \vc_n^* = 0$, and let $\vs_i^*\to-\infty$ and $\ssp_i\to0$, and then both positive and negative edges are sparse. It can be verified that $\frac{w_i v_i^{-2}}{u_i^{-1}} \to1+ \frac{\ssp_i}{e^{\vs_i^*}}$, and thus
$\frac{\avar(\check\theta_i)}{\avar(\widehat\theta_i)} > 1$ as long as $\ssp_i = \omega(e^{\vs_i^*})$, which implies a non-negligible gain in terms of the asymptotic variance.

\section{Inference for node ranking}\label{sec:infer}

Node ranking has been an important task in network data analysis \citep{wasserman1994social}, which aims to rank nodes based on their importance or centrality. 
In literature, many ranking algorithms have been developed for node ranking in directed unsigned networks, including \cite{freeman1978centrality, latora2007measure, page1999pagerank, kleinberg1999authoritative}. These algorithms have also been extended to directed signed network, such as \cite{bonacich2004calculating, zolfaghar2010mining, shahriari2014ranking}; readers may refer to \cite{tang2016survey} for a complete literature review on node ranking in directed signed networks. Despite the rich literature, most aforementioned algorithms are intuition driven and lack of theoretical justification, not to mention developing an inferential framework to conduct uncertainty quantification for node ranking. Based on the signed $\beta$-model, this section develops some inferential procedures for pairwise and multiple node comparisons with respect to either in-status or out-status.


\subsection{Pairwise comparison}

The statistical inference for $\vs_i^* - \vs_j^*$ and $\vc_i^* - \vc_j^*$ represents the relative differences between nodes $i$ and $j$ in terms of their in-status and out-status, respectively. For $i\in[n]$ and $j\in[n-1],$ define 
\begin{equation}\label{eq:u def2}
\begin{aligned}
\widehat u_i = -\frac{\partial^2 l(\widehat\ttt;\sssp)}{\partial\vs_i^2},~~\text{and}~~ 
\widehat u_{n+j} = -\frac{\partial^2 l(\widehat\ttt;\sssp)}{\partial\vc_j^2}.
\end{aligned}
\end{equation}
Further,  for any $i\neq j\in[2n-1]$, define 
\begin{equation}\label{eq:delta def}
\widehat\delta_{ij}^{2} = \widehat u_i^{-1} + \widehat u_j^{-1},~~\text{and}~~(\delta_{ij}^*)^{2} = u_i^{-1} + (u_j^*)^{-1}.
\end{equation} 
We first establish the asymptotic normality for both $(\widehat\vs_i - \widehat\vs_j)$ and $(\widehat\vc_i - \widehat\vc_j)$.

\begin{theorem}\label{thm:ranking}
	Under the same condition of Theorem~\ref{thm:initial}, 
	for any $i\neq j\in[n],$ it holds true that 
	\begin{equation}\label{eq:rank normality}
		\begin{aligned}
			(\delta_{ij}^*)^{-1} \left[(\widehat\vs_i - \widehat\vs_j) - (\vs_i^*- \vs_j^*)\right] &\to N(0,1),\\
			(\delta_{n+i,n+j}^*)^{-1}\left[(\widehat\vc_i - \widehat\vc_j) - (\vc_i^*- \vc_j^*)\right]&\to N(0,1)
		\end{aligned}
	\end{equation}
	in distribution.
	Furthermore, we also have
	\begin{equation}\label{eq:CI}
		\begin{aligned}
			\widehat\delta_{ij}^{-1} \left[(\widehat\vs_i - \widehat\vs_j) - (\vs_i^*- \vs_j^*)\right] &\to N(0,1),\\
			\widehat\delta_{n+i,n+j}^{-1}\left[(\widehat\vc_i - \widehat\vc_j) - (\vc_i^*- \vc_j^*)\right]&\to N(0,1)
		\end{aligned}
	\end{equation}
	in distribution.
\end{theorem}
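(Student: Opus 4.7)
\textbf{Proof plan for Theorem~\ref{thm:ranking}.}
My strategy is to derive the two limiting statements from Theorem~\ref{thm:normality} combined with a Slutsky-type argument for the plug-in variance. For fixed $i\neq j$ I may (after relabeling) assume $\{i,j\}\subset\{1,\dots,d\}$ with $d$ fixed, so that Theorem~\ref{thm:normality} gives the joint asymptotic normality of $\bigl(\widehat\alpha_i-\alpha_i^*,\widehat\alpha_j-\alpha_j^*\bigr)$ with covariance equal to the corresponding $2\times 2$ block of $\HH_{11}$, namely
\[
\begin{pmatrix} u_i^{-1}+u_{2n}^{-1} & u_{2n}^{-1} \\ u_{2n}^{-1} & u_j^{-1}+u_{2n}^{-1}\end{pmatrix}.
\]
Applying the linear functional $(1,-1)$ to this vector immediately yields that $(\widehat\alpha_i-\widehat\alpha_j)-(\alpha_i^*-\alpha_j^*)$ is asymptotically normal with variance $(1,-1)\,\cdot\,(\text{above})\,\cdot\,(1,-1)^\top = u_i^{-1}+u_j^{-1}=(\delta_{ij}^*)^2$, where the $u_{2n}^{-1}$ contributions (arising from the identifiability constraint $\beta_n=0$) cancel. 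The analogous calculation on the lower block $\HH_{22}$ handles the $\beta$-coordinates, producing $(\delta_{n+i,n+j}^*)^2=u_{n+i}^{-1}+u_{n+j}^{-1}$.

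For \eqref{eq:CI}, the remaining task is to show $\widehat\delta_{ij}/\delta_{ij}^*\to 1$ in probability, after which Slutsky's theorem upgrades \eqref{eq:rank normality} to \eqref{eq:CI}. It suffices to verify $\widehat u_i/u_i\to 1$ in probability for each fixed $i\in[2n-1]$, since then $\widehat\delta_{ij}^2/(\delta_{ij}^*)^2$ is a ratio of two sums each of which is asymptotic to its population counterpart. I will split the difference as
\[
\widehat u_i - u_i = \Bigl[-\!\!\sum_{k\neq i} l''_{ik}(\widehat\alpha_i+\widehat\beta_k;\kappa_i) + \sum_{k\neq i} l''_{ik}(\alpha_i^*+\beta_k^*;\kappa_i)\Bigr] + \Bigl[-\!\!\sum_{k\neq i} l''_{ik}(\alpha_i^*+\beta_k^*;\kappa_i) - u_i\Bigr].
\]
The second bracket is a centered sum of $n-1$ independent bounded-variance terms, so by Hoeffding/Bernstein it is $O_P(\sqrt{n\log n})$, hence negligible compared to $u_i$, which is of order at least $n\,e^{-2\|\ttt^*\|_\infty}$ under the theorem's sparsity regime. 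For the first bracket, I will Taylor-expand $l''_{ik}$ around $(\alpha_i^*+\beta_k^*)$: uniform boundedness of the third derivative (which is a smooth rational function of $e^{\alpha_i+\beta_k}$) together with the uniform rate $\|\widehat\ttt-\ttt^*\|_\infty\lesssim e^{2\|\ttt^*\|_\infty}\sqrt{\log n/n}$ from Theorem~\ref{thm:normality} yields a remainder of order $n\cdot e^{c\|\ttt^*\|_\infty}\sqrt{\log n/n}$, again $o_P(u_i)$ under $\|\ttt^*\|_\infty\le c\log n$ with $c<1/40$.

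The main technical obstacle is the last calibration: controlling the third derivative of $l_{ij}$ carefully enough that the Taylor remainder beats $u_i$ across the full sparse range $\|\ttt^*\|_\infty\le c\log n$. The presence of $\kappa_i$ in the non-exponential-family likelihood makes $l''_{ij}$ and $l'''_{ij}$ considerably less compact than in \cite{yan2016asymptotics}, so I expect to bound these derivatives by carefully tracking powers of $e^{\alpha_i+\beta_j}$ in the numerator and denominator of \eqref{eq:model}; the constant $1/40$ in Theorem~\ref{thm:initial} leaves ample slack. Once $\widehat u_i/u_i\to 1$ is established, the ratio $\widehat\delta_{ij}/\delta_{ij}^*\to 1$ follows, and Slutsky's theorem combined with \eqref{eq:rank normality} gives \eqref{eq:CI}, completing the proof.
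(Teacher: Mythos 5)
Your proposal is correct and follows essentially the same route as the paper: the first display comes from the linearization underlying Theorem~\ref{thm:normality} with the constraint-induced $u_{2n}^{-1}$ contributions canceling in the contrast $(1,-1)$, and the second display comes from consistency of the plug-in variances $\widehat u_i$ (via the same split into a Hessian-Lipschitz term controlled by $\|\widehat\ttt-\ttt^*\|_{\infty}$ and a Hoeffding-type concentration of the Hessian at $\ttt^*$, exactly as in the paper's Lemma~\ref{lem:var}, Lemma~\ref{lem:concentrate} and the bound \eqref{eq:inverse}) followed by Slutsky, whereas the paper equivalently bounds $|\widehat\delta_{ij}^{-1}-(\delta_{ij}^*)^{-1}|$ and multiplies it by the estimation error. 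One small caution: the literal statement of Theorem~\ref{thm:normality} only covers the first $d$ coordinates (hence only $\alpha$'s), so for the $\beta$-pair you should invoke the uniform residual expansions established in its proof (as the paper does via \eqref{eq:vs residual}--\eqref{eq:vc residual}) rather than a relabeling of the stated $d\times d$ block.
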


According to \eqref{eq:rank normality}, the asymptotic variance for $\widehat\vs_i - \widehat\vs_j$ is $\delta_{ij}^*$ which is defined in \eqref{eq:delta def}, suggesting that the proposed estimate for $\vs_i^*- \vs_j^*$ is oracle, in the sense that its asymptotic distribution is the same as the maximum likelihood estimate with known $\{\vs_k\}_{k\neq i,j}$ and $\{\vc_j\}_{j=1}^{n-1}$.

Furthermore, given the asymptotic normality results in Theorem \ref{thm:normality}, we can construct a confidence interval for $\vs_i^* - \vs_j^*$ as 
$$
\CI(\vs_i^*-\vs_j^*) = \left[(\widehat\vs_i - \widehat\vs_j) - Z_{\alpha/2}\widehat\delta_{ij}, (\widehat\vs_i - \widehat\vs_j) + Z_{\alpha/2}\widehat\delta_{ij}  \right],
$$ 
and a confidence interval  for $\vc_i - \vc_j$ as 
$$
\CI(\vc_i^*-\vc_j^*) = \left[(\widehat\vc_i - \widehat\vc_j) - Z_{\alpha/2}\widehat\delta_{n+i,n+j}, (\widehat\vc_i - \widehat\vc_j) + Z_{\alpha/2}\widehat\delta_{n+i,n+j}  \right],
$$ 
where $Z_{\alpha}$ denotes the $\alpha$-th upper percentile of the standard normal distribution. Furthermore, we can also test whether $\vs_i^* > \vs_j^*$ based on the indicator $1_{\{(\widehat\vs_i - \widehat\vs_j) - Z_{\alpha/2}\widehat\delta_{ij} > 0\}},$ and similarly for testing $\vc_i^* > \vc_j^*$.

\subsection{Multiple comparison}

We now focus on some particular node $i\in[n]$, and find its relative rank within a subgroup of nodes. We take out-status for illustration, and similar procedure can be developed for in-status. Particularly, let $\SSSS\subseteq[n]/\{i\}$ be the subgroup of nodes, and $K = |\SSSS|$. For each $k\in\SSSS,$ we want to test 
$$
H^{(k)}_{0}: \vs_i^*=\vs_k^*~~\text{v.s.}~~H^{(k)}_{a}:\vs_i^*\neq\vs_k^*.
$$ 

We employ the Benjamini-Hochberg procedure \citep{benjamini2001control} to control the false discovery rate for this multiple testing problem. Let $p_k$ denote the p-value for testing $H_0^{(k)},$ which takes the form 
$$
p_k = 2\left[1 - \Phi\left(\widehat\delta_{ik}^{-1}|\widehat\vs_i-\widehat\vs_k|\right)\right],
$$ 
where $\Phi(\cdot)$ is the cumulative distribution function of the standard normal distribution. We order the $K$ p-values as $p^{(1)}\leq...\leq p^{(K)}$, and define 
\begin{equation}\label{eq:cutoff}
r = \max_{1\leq l\leq K}\left\{l:p^{(l)}\leq\frac{\alpha l}{KL}\right\},
\end{equation}
which is a modification of the traditional Benjamini-Hochberg procedure \citep[Theorem 1.3,][]{benjamini2001control}. Here $\alpha>0$ is a given significance level, and $L = \sum_{l=1}^K1/l.$ 
Then, for any $k\in\SSSS,$ we reject $H_0^{(k)}$ if $p_k \leq p^{(r)}.$ If the set in \eqref{eq:cutoff} is empty, we reject all null hypotheses. 
Let $\SSSS_0 = \{k\in\SSSS: \vs_i^*=\vs_k^*\}$ be the set of true null hypotheses, then the false discovery rate for the Benjamini-Hochberg procedure takes the form
$$
\FDR = \EEE\left[\frac{\sum_{k\in\SSSS_0} 1_{\{p_k\leq p^{(r)}\}}}{\max\left\{ \sum_{k\in\SSSS} 1_{\{p_k\leq p^{(r)}\}},1 \right\}}\right] = \EEE\left[\frac{\sum_{k\in\SSSS_0}1_{\{p_k\leq\frac{\alpha r}{KL}\}}}{\max\{r,1\}}\right].
$$

\begin{theorem}\label{thm:fdr}
Under the same condition of Theorem~\ref{thm:initial}, further suppose $e^{20\|\ttt^*\|_{\infty}}K_0n^{-1/2}(\log n)^2 = o(1)$, with $K_0 = |\SSSS_0|$. Then, we have 
$$
\FDR \leq \frac{\alpha K_0}{K}\left(1+\frac{1}{L}\right) + o(1).
$$ 
\end{theorem}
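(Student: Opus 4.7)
The plan is to combine the classical Benjamini--Yekutieli argument for FDR control under arbitrary dependence with a quantitative refinement of Theorem~\ref{thm:ranking} that controls the null $p$-value distribution at the rate dictated by the hypothesis $e^{20\|\ttt^*\|_\infty}K_0 n^{-1/2}(\log n)^2 = o(1)$.

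First, I would expand the FDR via
\begin{align*}
\FDR = \sum_{k \in \SSSS_0}\sum_{l=1}^K \frac{1}{l}\Pr\!\left(p_k \leq \tfrac{\alpha l}{KL},\, r = l\right),
\end{align*}
and apply the self-consistency argument for the BY step-up procedure. For each $k \in \SSSS_0$, one constructs the modified sequence in which $p_k$ is replaced by $\alpha l/(KL)$ and uses the monotonicity of the cutoff $r$ to telescope the inner sum in $l$. The upshot is the deterministic inequality
\begin{align*}
\FDR \leq \frac{\alpha K_0}{K}\bigl(1 + \tfrac{1}{L}\bigr)\sup_{k \in \SSSS_0,\, t \in (0,1]}\frac{\Pr(p_k \leq t)}{t},
\end{align*}
where the additional $1/L$ absorbs the boundary terms and the discretization loss incurred when converting $\sum_{l=1}^K 1/l = L$ back into a target of order $\alpha K_0/K$.

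Next, I would upgrade Theorem~\ref{thm:ranking} into a finite-sample Berry--Esseen statement
\begin{align*}
\sup_{t \in [0,1]}\bigl|\Pr(p_k \leq t) - t\bigr| \lesssim e^{20\|\ttt^*\|_\infty}n^{-1/2}(\log n)^2,
\end{align*}
uniformly over $k \in \SSSS_0$. The one-step expansion underlying Theorem~\ref{thm:normality} writes $\widehat\vs_i - \vs_i^* = \widehat u_i^{-1}\,\partial l/\partial \vs_i + R_i$ with a remainder $R_i$ of order $e^{c\|\ttt^*\|_\infty}(\log n)/n$ by \eqref{eq:consis1}, so the pairwise statistic $\widehat\vs_i - \widehat\vs_k - (\vs_i^* - \vs_k^*)$ decomposes into a sum of independent bounded summands plus a negligible remainder. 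A Berry--Esseen bound for independent summands, combined with uniform concentration of the random normalizer $\widehat\delta_{ik}$ around $\delta_{ik}^*$ and the Lipschitz continuity of $\Phi$, delivers the displayed tail estimate; the exponent $20\|\ttt^*\|_\infty$ tracks the accumulated effect of second- and third-moment bounds in the sparse regime.

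Combining the two steps, the supremum ratio in the first display equals $1 + O\!\left(e^{20\|\ttt^*\|_\infty}n^{-1/2}(\log n)^2\right)$. The hypothesis $K_0 e^{20\|\ttt^*\|_\infty}n^{-1/2}(\log n)^2 = o(1)$, together with $L = O(\log K)$, forces the multiplicative slack $\tfrac{\alpha K_0}{K}(1 + 1/L)\cdot C e^{20\|\ttt^*\|_\infty}n^{-1/2}(\log n)^2$ to be $o(1)$, yielding the claimed bound.

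The main obstacle will be the uniform Berry--Esseen statement, since Theorem~\ref{thm:ranking} only records convergence in distribution. Reopening its proof to extract explicit rates requires (i) a uniform-in-$k$ Berry--Esseen bound whose leading constants depend only on $\|\ttt^*\|_\infty$, (ii) an upgrade of the in-probability control of the one-step remainder $R_i$ in \eqref{eq:consis1} to a bound that survives inside the Kolmogorov distance calculation, and (iii) verifying that replacing $\widehat u_i, \widehat\delta_{ik}$ by their population analogues contributes only a lower-order term whose exponent fits within the $e^{20\|\ttt^*\|_\infty}$ budget stipulated in the condition. Keeping this exponent tight is what ties the approximation error to the stated sparsity assumption.
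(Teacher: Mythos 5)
Your overall architecture (a BY-type expansion of the FDR plus a quantitative refinement of the pairwise CLT with additive error $\epsilon_n := e^{20\|\ttt^*\|_{\infty}}n^{-1/2}(\log n)^2$) is close in spirit to the paper's proof, and your second step --- decomposing $\widehat\vs_i-\widehat\vs_k$ via the one-step expansion into an independent-sum part $V_{ik}$ plus a remainder of order $e^{20\|\ttt^*\|_{\infty}}\log n/n$, applying Berry--Esseen to $(\delta_{ik}^*)^{-1}V_{ik}$, and controlling the replacement of $\delta_{ik}^*$ by $\widehat\delta_{ik}$ --- is exactly what the paper does. The genuine gap is in how you combine the two steps. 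You reduce everything to the claim that $\sup_{k\in\SSSS_0,\,t\in(0,1]}\Pr(p_k\leq t)/t = 1+O(\epsilon_n)$, but this does not follow from the uniform \emph{additive} bound $\sup_t|\Pr(p_k\leq t)-t|\lesssim\epsilon_n$: at the smallest threshold actually used by the procedure, $t=\alpha/(KL)$, the induced relative error is of order $\epsilon_n KL/\alpha$, which diverges whenever $K$ grows (e.g.\ $K\asymp n$). A multiplicative statement of the form $\Pr(p_k\leq t)\leq t(1+o(1))$ uniformly down to $t\asymp 1/(KL)$ is a moderate-deviations-type result for the null p-values near zero, which neither Theorem~\ref{thm:ranking} nor a Kolmogorov-distance bound supplies, so the key display in your combination step is unjustified as stated.

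The paper avoids this by never forming the ratio: it keeps the error additive throughout. Using the identity $\frac{1}{r}1_{\{p_k\leq \alpha r/(KL)\}}1_{\{r>0\}}=\sum_{l\geq 1}\frac{1}{l(l+1)}1_{\{p_k\leq\alpha r/(KL)\}}1_{\{0<r\leq l\}}$ it bounds
$\FDR\leq\sum_{l=1}^{\infty}\frac{1}{l(l+1)}\sum_{k\in\SSSS_0}\Pr\bigl(p_k\leq\frac{\alpha\min(l,K)}{KL}\bigr)$, plugs in $\Pr\bigl(p_k\leq\frac{\alpha\min(l,K)}{KL}\bigr)=\frac{\alpha\min(l,K)}{KL}+O(\epsilon_n)$ uniformly in $k$ and $l$, and sums: the main terms give $\frac{\alpha K_0}{K}(1+\frac1L)$ while the additive errors aggregate to $O(K_0\epsilon_n)=o(1)$ precisely under the stated hypothesis. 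Note also that even a patched version of your ratio bound (replacing $1+O(\epsilon_n)$ by $1+\epsilon_n KL/\alpha$) would yield a slack of order $K_0 L\epsilon_n$, an extra logarithmic factor beyond what the assumption $K_0\epsilon_n=o(1)$ controls. To repair your argument, replace the supremum-ratio reduction by the additive accounting above; your Berry--Esseen ingredient then suffices as is.
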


Theorem \ref{thm:fdr} immediately implies that when $K_0$ is dominated by $e^{-20\|\ttt^*\|_{\infty}}n^{1/2}(\log n)^{-2},$ the false discovery rate of the Benjamini-Hochberg procedure is upper bounded by $\alpha$ asymptotically as long as $\frac{K_0}{K} \big ( 1+\frac{1}{L} \big ) \leq 1$, which is a fairly mild condition since $K_0\leq K$ and $L$ is roughly $\log K$.

\section{Numerical experiments}

This section examines the finite sample performance of the proposed one-step estimate as well as the inferential procedures, where the sparse factor $\sssp$ is estimated as described in the supplement and the estimating equation in \eqref{eq:est1} is solved by Newton's method.

\subsection{Simulation}\label{subsec:simu}

The simulated directed signed networks are generated as follows. We first generate 10 groups of nodes, where nodes in the same group have the same in-status and out-status. Specifically, let $\psi_i \in \{1,\ldots,10\}$ denote the group membership of node $i$, generated independently from a multinomial distribution with probabilities $(0.15 \times {\bf 1}_5^\top, 0.05 \times {\bf 1}_5^\top)$ to accommodate unbalanced groups. We then set $\vs_i^* = a_{\psi_i}\sim N(-0.5, 0.5),~\vc_i^* = b_{\psi_i}\sim N(0, 0.5)$ and $\vc_n^* = 0.$ The directed edges, $y_{ij},$ are then generated independently from \eqref{eq:model}. The $\ssp_i$ are randomly generated from $\{\ssp_{00},\ssp_{01}\}$ with $\Pr(\ssp_i=\ssp_{01}) = 0.8$. Various scenarios are considered, with $n \in \{200, 600, 1000\}$, $\ssp_{01} \in \{0.05, 0.1, 0.2, 0.25\}$ and $\ssp_{00} = 0.001$. 

In each scenario, the averaged estimation errors, measured by $\|\widehat\ttt-\ttt^*\|_{\infty}$ and $\|\widehat\ttt-\ttt^*\|_2^2/(2n-1)$, over 500 independent replications with their standard errors are reported in Tables 1 and 2.  The averaged coverage frequencies of the 95\% confidence interval for 100 randomly selected $\widehat\vs_i - \widehat\vs_j,$ together with their standard errors, are reported in Table 3.  In addition, we randomly select a node from the group with the largest in-status, and compare the in-status of this node with the rest nodes in this group and the first 10 nodes from each of the other  groups. The averaged false discovery proportions (FDP) and power, as well as their standard errors, are reported in Tables 4 and 5. We also report these evaluation metrics of the initial estimate $\check\ttt$ for comparison.


\begin{table}[!h]
\begin{center}
\caption{The averaged estimation errors over 500 independent replications and their standard errors in parenthesis}
\label{tab:consis}
\begin{small}
\begin{tabular}{ c|c|c|c|c|c } 
\hline
\hline
 & $n$ & $\ssp_{01} = 0.05$ & $\ssp_{01} = 0.1$ & $\ssp_{01} = 0.2$ & $\ssp_{01} = 0.25$   \\
\hline
\hline
\multirow{3}{5em}{$\|\check\ttt-\ttt^*\|_{\infty}$} & 200 & 0.6087 (0.1098) & 0.6210 (0.1179)& 0.6873 (0.1296)& 0.7483 (0.1467) \\ \cline{2-6} 
& 600 &0.3756 (0.0617)& 0.4415 (0.0697)& 0.6474 (0.0923)& 0.7528 (0.1007) \\ \cline{2-6} 
& 1000& 0.3057 (0.0483)& 0.3929 (0.0686)& 0.6352 (0.0862)& 0.7445 (0.0865) \\ \cline{2-6} 
\hline
\hline
\multirow{3}{5em}{$\|\widehat\ttt-\ttt^*\|_{\infty}$} & 200 &0.5905 (0.1003)& 0.5908 (0.1064)& 0.6196 (0.1103)& 0.6616 (0.1237) \\ \cline{2-6} 
& 600 &0.3603 (0.0577)& 0.3820 (0.0581)& 0.5070 (0.0784)& 0.5939 (0.0892) \\ \cline{2-6} 
& 1000 &0.2899 (0.0453)& 0.3133 (0.0511)& 0.4666 (0.0749)& 0.5690 (0.0759) \\ \cline{2-6} 
\hline
\hline
\end{tabular}
\end{small}
\end{center}
\end{table}

\begin{table}[!h]
\begin{center}
\caption{The averaged mean squared errors over 500 independent replications and their standard errors in parenthesis}
\label{tab:consis_check}
\begin{small}
\begin{tabular}{ c|c|c|c|c|c } 
\hline
\hline
 & $n$ & $\ssp_{01} = 0.05$ & $\ssp_{01} = 0.1$ & $\ssp_{01} = 0.2$  & $\ssp_{01} = 0.25$ \\
\hline
\hline
\multirow{3}{5em}{{\large $\frac{\|\check\ttt-\ttt^*\|_2^2}{2n-1}$}} & 200 &0.0467 (0.0296)& 0.0473 (0.0299)& 0.0508 (0.0324)& 0.0525 (0.0309) \\ \cline{2-6} 
& 600 &0.0147 (0.0103)& 0.0156 (0.0093)& 0.0180 (0.0085)& 0.0195 (0.0090) \\ \cline{2-6} 
& 1000 &0.0091 (0.0067)& 0.0093 (0.0060)& 0.0107 (0.0070)& 0.0114 (0.0066) \\ \cline{2-6} 
\hline
\hline
\multirow{3}{5em}{{\large $\frac{\|\widehat\ttt-\ttt^*\|_2^2}{2n-1}$}} & 200 &0.0449 (0.0289)& 0.0449 (0.0292)& 0.0468 (0.0298)& 0.0478 (0.0290) \\ \cline{2-6} 
& 600 &0.0141 (0.0100)& 0.0144 (0.0093)& 0.0157 (0.0082)& 0.0168 (0.0084) \\ \cline{2-6} 
& 1000 &0.0088 (0.0067)& 0.0087 (0.0061)& 0.0096 (0.0066)& 0.0100 (0.0063) \\ \cline{2-6} 
\hline
\hline
\end{tabular}
\end{small}
\end{center}
\end{table}

\begin{table}[!h]
\begin{center}
\caption{The averaged coverage frequencies of the 95\% confidence interval over 500 independent replications for 100 randomly selected pairs $(i,j),$ and their standard errors in parenthesis}
\label{tab:infer1}
\begin{small}
\begin{tabular}{ c|c|c|c|c|c } 
\hline
\hline
 & $n$ & $\ssp_{01} = 0.05$ & $\ssp_{01} = 0.1$ & $\ssp_{01} = 0.2$ & $\ssp_{01} = 0.25$  \\
\hline
\hline
\multirow{3}{5em}{Coverage Frequency ($\check\ttt$)} & 200 &0.9371 (0.0185)& 0.9341 (0.0245)& 0.9051 (0.0603)& 0.8886 (0.0817) \\ \cline{2-6} 
& 600 &0.9423 (0.0166)& 0.9206 (0.0491)& 0.8887 (0.0921)& 0.8852 (0.0944) \\ \cline{2-6} 
& 1000 &0.9431 (0.0145)& 0.9319 (0.0339)& 0.9248 (0.0439)& 0.9211 (0.0472) \\ \cline{2-6} 
\hline
\hline
\multirow{3}{5em}{Coverage Frequency ($\widehat\ttt$)} & 200 &0.9441 (0.0124)& 0.9419 (0.0155)& 0.9218 (0.0388)& 0.9043 (0.0610) \\ \cline{2-6} 
& 600 &0.9477 (0.0098)& 0.9353 (0.0237)& 0.8979 (0.0778)& 0.8895 (0.0894) \\ \cline{2-6} 
& 1000 &0.9466 (0.0101)& 0.9399 (0.0188)& 0.9262 (0.0415)& 0.9217 (0.0471) \\ \cline{2-6} 
\hline
\hline
\end{tabular}
\end{small}
\end{center}
\end{table}

\begin{table}[!h]
\begin{center}
\caption{The averaged FDP over 500 independent replications and their standard errors in parenthesis}
\label{tab:infer2}
\begin{small}
\begin{tabular}{ c|c|c|c|c|c } 
\hline
\hline
 & $n$ & $\ssp_{01} = 0.05$ & $\ssp_{01} = 0.1$ & $\ssp_{01} = 0.2$ & $\ssp_{01} = 0.25$  \\
\hline
\hline
\multirow{3}{5em}{FDP ($\check\ttt$)} & 200 &0.0013 (0.0062)& 0.0015 (0.0063)& 0.0030 (0.0094)& 0.0051 (0.0122) \\ \cline{2-6} 
& 600 &0.0027 (0.0095)& 0.0065 (0.0176)& 0.0289 (0.0539)& 0.0341 (0.0618) \\ \cline{2-6} 
& 1000 &0.0053 (0.0233)& 0.0077 (0.0301)& 0.0182 (0.0572)& 0.0240 (0.0676) \\ \cline{2-6} 
\hline
\hline
\multirow{3}{5em}{FDP ($\widehat\ttt$)} & 200 &0.0014 (0.0069)& 0.0014 (0.0060)& 0.0027 (0.0089)& 0.0042 (0.0112) \\ \cline{2-6} 
& 600 & 0.0025 (0.0089)& 0.0048 (0.0143)& 0.0225 (0.0447)& 0.0296 (0.0570) \\ \cline{2-6} 
& 1000 &0.0050 (0.0229)& 0.0065 (0.0265)& 0.0154 (0.0512)& 0.0227 (0.0652) \\ \cline{2-6} 
\hline
\hline
\end{tabular}
\end{small}
\end{center}
\end{table}

\begin{table}[!h]
\begin{center}
\caption{The averaged power over 500 independent replications and their standard errors in parenthesis}
\label{tab:infer3}
\begin{small}
\begin{tabular}{ c|c|c|c|c|c } 
\hline
\hline
 & $n$ & $\ssp_{01} = 0.05$ & $\ssp_{01} = 0.1$ & $\ssp_{01} = 0.2$ & $\ssp_{01} = 0.25$  \\
\hline
\hline
\multirow{3}{5em}{Power $(\check\ttt)$} & 200 &0.6610 (0.1068)& 0.6440 (0.1220)& 0.6553 (0.1355)& 0.6652 (0.1410) \\ \cline{2-6} 
& 600 &0.7871 (0.0712)& 0.7935 (0.0767)& 0.8029 (0.0920)& 0.8042 (0.0937) \\ \cline{2-6} 
& 1000 &0.8370 (0.0608)& 0.8388 (0.0609)& 0.8432 (0.0634)& 0.8468 (0.0624) \\ \cline{2-6} 
\hline
\hline
\multirow{3}{5em}{Power $(\widehat\ttt)$} & 200 &0.6531 (0.1087)& 0.6428 (0.1185)& 0.6558 (0.1303)& 0.6670 (0.1346)  \\ \cline{2-6} 
& 600 &0.7863 (0.0712)& 0.7935 (0.0753)& 0.8041 (0.0872)& 0.8071 (0.0898) \\ \cline{2-6} 
& 1000 &0.8375 (0.0605)& 0.8395 (0.0605)& 0.8438 (0.0625)& 0.8494 (0.0609) \\ \cline{2-6} 
\hline
\hline
\end{tabular}
\end{small}
\end{center}
\end{table}

It is evident from Tables 1 and 2 that the estimation errors for both $\check\ttt$ and $\widehat\ttt$ decrease as $n$ increases, which validates the asymptotic estimation consistency in Theorems \ref{thm:initial} and \ref{thm:normality}. Also, the performance of $\widehat\ttt$ is consistently better than that of $\check\ttt$ in all scenarios, which is expected according to Theorem \ref{thm:normality}. It is also interesting to note that the estimation errors of $\widehat\ttt$ are fairly robust against the sparsity level of the negative edges. In Table 3, it is clear that the coverage frequencies for the 95\% confidence interval are close to the nominal level as $n$ grows, which supports the asymptotic normality results in Theorem \ref{thm:ranking}. As for multiple testing, as shown in Tables 4 and 5, the false discovery proportion is way below 0.05 in all scenarios and the power increases as $n$ grows, justifying the false discovery rate control in Theorem \ref{thm:fdr}. 
Similarly, the performance of $\widehat\ttt$ is better than $\check\ttt$.

\subsection{Signed citation network}

We now apply the proposed method to analyze a real-life signed citation network \citep{kumar2016structure}, which is collected based on citations within the natural language processing (NLP) community during 1975-2013. Particularly, each author is represented as a node, which sends edges to other nodes by citing their papers. According to the citation sentiments, the citations can be categorized into ``endorsement'', ``criticism'' and ``neural''. Both ``endorsement'' and ``neural'' are treated as positive edges, while ``criticism'' is treated as negative edge. We remove all authors who send or receive less than 5 edges, as well as about 80 authors who send or receive more negative edges than positive edges, leading to spuriously high out-status or low in-status. This pre-processing step leads to a directed network with 1849 nodes, 56517 positive edges and 3726 negative edges. We set $\ssp_i = \ssp_{00} = 0.001$ for those nodes who do not send negative edges, and $\ssp_i = \ssp_{01}$ for the rest, with $\ssp_{01}$ estimated as described in the supplement.


The left panel of Figure \ref{fig:barplot citation} shows the in-degrees for the 10 nodes with highest estimated in-status.  In particular, each grey bar above the x-axis
shows the positive in-degree $\sum_{j=1,j\neq i}^ny_{ji}1_{\{y_{ji}>0\}},$ while the red bar below the x-axis shows the negative in-degree $\sum_{j=1,j\neq i}^n|y_{ji}|1_{\{y_{ji}<0\}}.$ The black line further shows the estimated $\widehat\vc_i$ for the 10 nodes after rescaling.
The right panel of Figure~\ref{fig:barplot citation} shows the out-degrees for the 10 nodes with highest estimated out-status. It can be seen that the estimated in-status is highly positively correlated with the positive in-degree, reflecting the attractiveness or popularity of the node; whereas the estimated out-status is highly positively correlated with the positive out-degree, reflecting the social status of the node to some extent. It is also interesting to note that in the right panel, the 6th node actually sends out less positive edges than the 7th node, but it possesses a higher out-status as it also sends out less negative edges.

To further scrutinize the results in Figure \ref{fig:barplot citation}, we note that the 10 nodes with highest estimated in-status are Christopher Manning, Daniel Klein, Michael Collins, Salim Roukos, Franz Josef Och, Fernando Pereira, Daniel Marcu, Philipp Koehn, Vincent J. Della Pietra and Eugene Charniak. Clearly, they are all highly cited researchers in the NLP community, with google scholar citation counts ranging from 25000 to 189000. We also note that the 10 nodes with highest estimated out-status are Joakim Nivre, Noah Smith, Mirella Lapata, Timothy Baldwin, Christopher Manning, Chris Callison-Burch, Junichi Tsujii, Eduard Hovy, Dan Roth and Regina Barzilay. These researchers are very active and productive, with number of publications ranging from 254 to 732, according to google scholar. It is also interesting to note that Christopher Manning is among both lists, who is well regarded as an influential and productive NLP expert.

\begin{figure}[!htb]
	\centering
	\includegraphics[scale = 0.4]{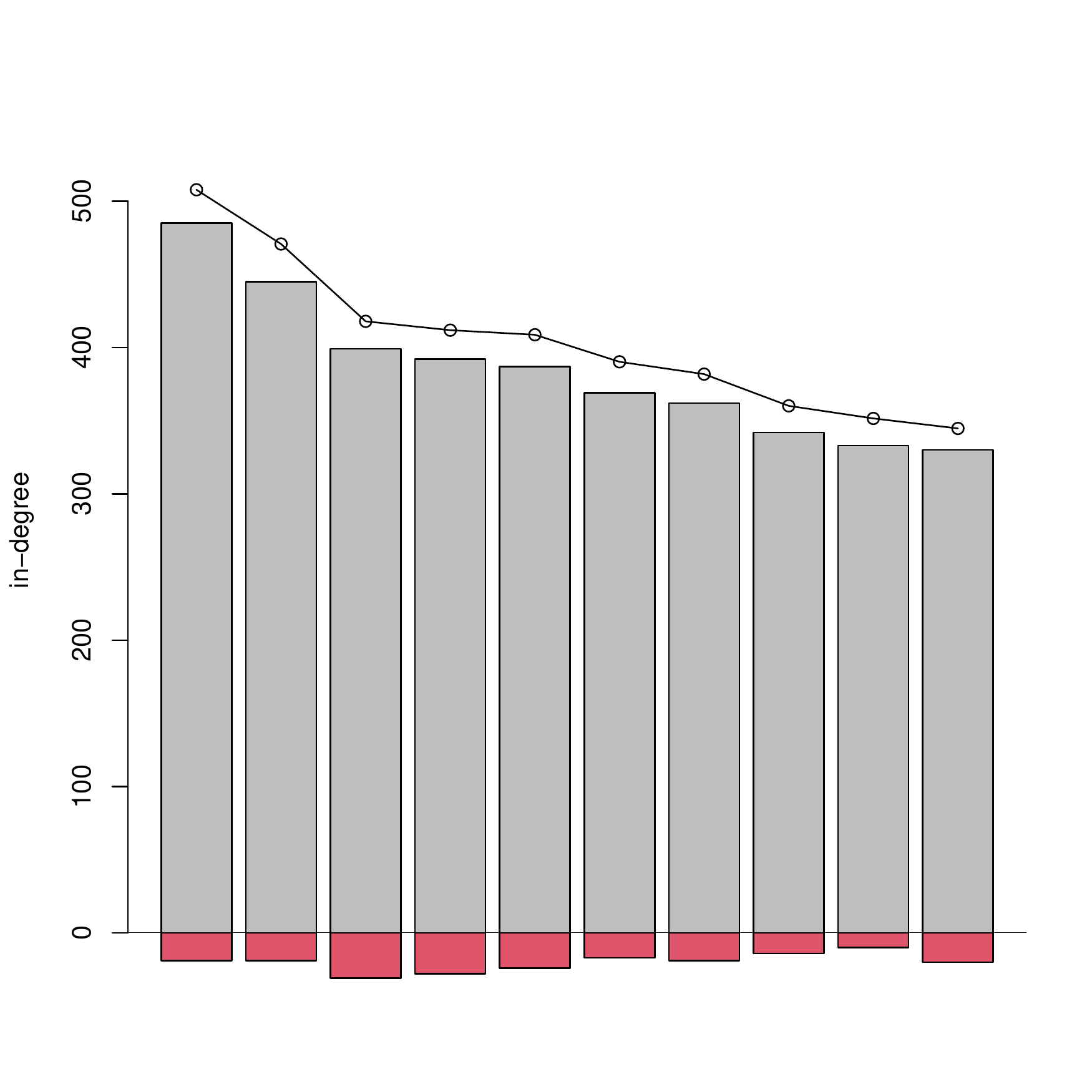}
	\includegraphics[scale = 0.4]{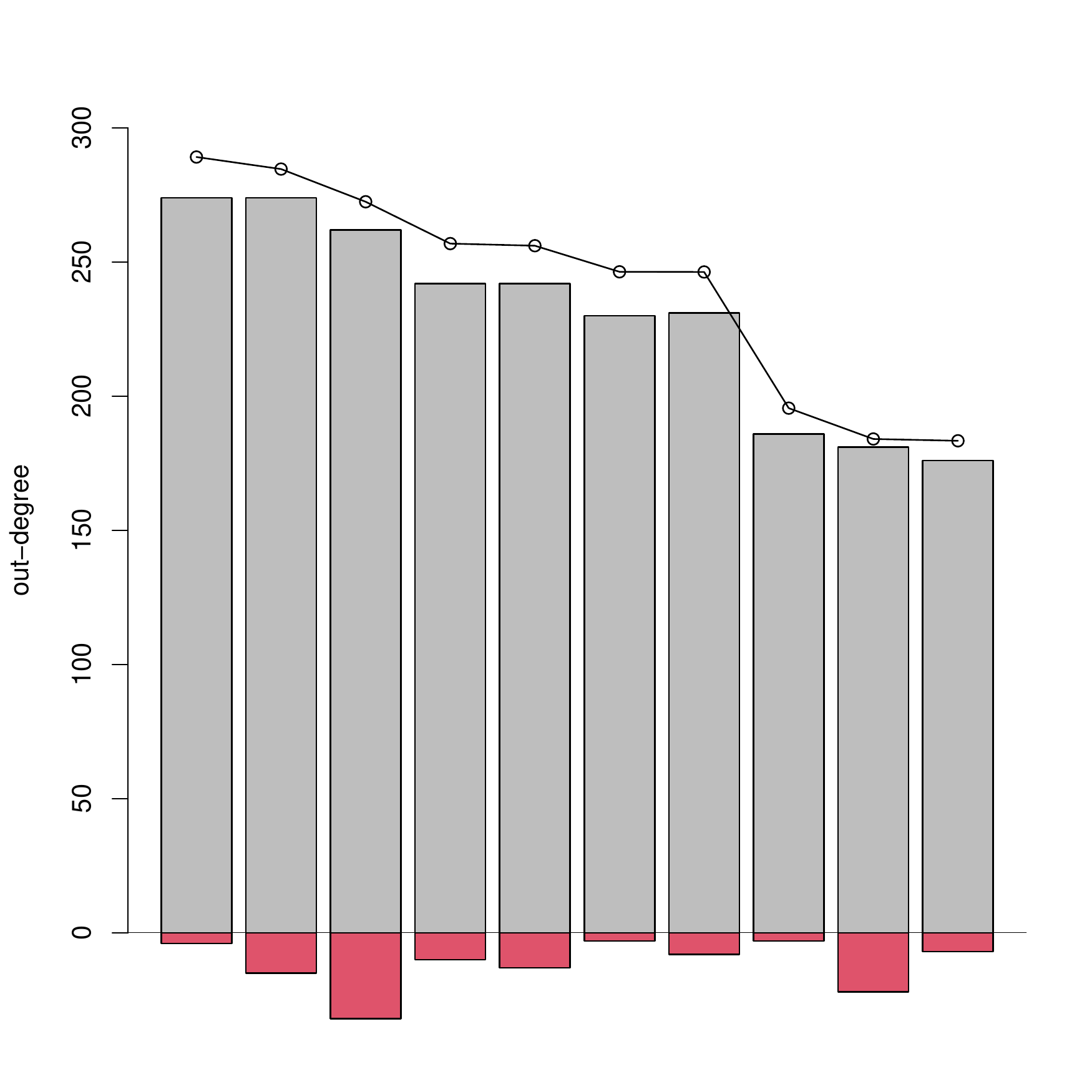}
	\caption{The left panel shows the positive in-degree (above x-axis) and negative in-degree (below x-axis) for the 10 nodes with largest $\widehat\vc_i.$ The black line shows the corresponding $\widehat\vc_i$ after rescaling. The right panel shows the positive out-degree (above x-axis) and negative out-degree (below x-axis) for the 10 nodes with largest $\widehat\vs_i.$ The black line shows the corresponding $\widehat\vs_i$ after rescaling.}
	\label{fig:barplot citation}
\end{figure}

Furthermore, we give two examples of node comparison with respect to their in-status. Specifically, in each example, we choose a node $i$ and compare it with a randomly chosen subset $\SSSS = \{i_{k}:k=1,...,50\}$ with respect to in-status. 
Figure~\ref{fig:in-status comparison} shows the in-degree of nodes in $\{i\}\cup\SSSS$ after reordering, where each grey bar above the x-axis shows the positive in-degree and the red bar below the x-axis shows the negative in-degree. The solid line shows the estimated in-status for the 51 nodes after rescaling, where the blue triangle represents node $i,$ and the circles represent nodes in $\SSSS.$ The red and green points represent those nodes whose in-status are significantly different from that of node $i$ for the corresponding individual testing, whereas green points also indicate statistical significance for multiple testing. In the left panel, it is clear that the leftmost 14 nodes are significantly less attractive than node $i$ whereas the rightmost 16 nodes are significantly more attractive. It is further shown that the in-status of the rightmost 13 nodes are significantly different from that of node $i$ simultaneously. In the right panel, node $i$ is chosen to be Christopher Manning, and it is evident that his in-status is significantly higher than that of all nodes in $\SSSS$ simultaneously.

\begin{figure}[!htb]
	\centering
	\includegraphics[scale = 0.8]{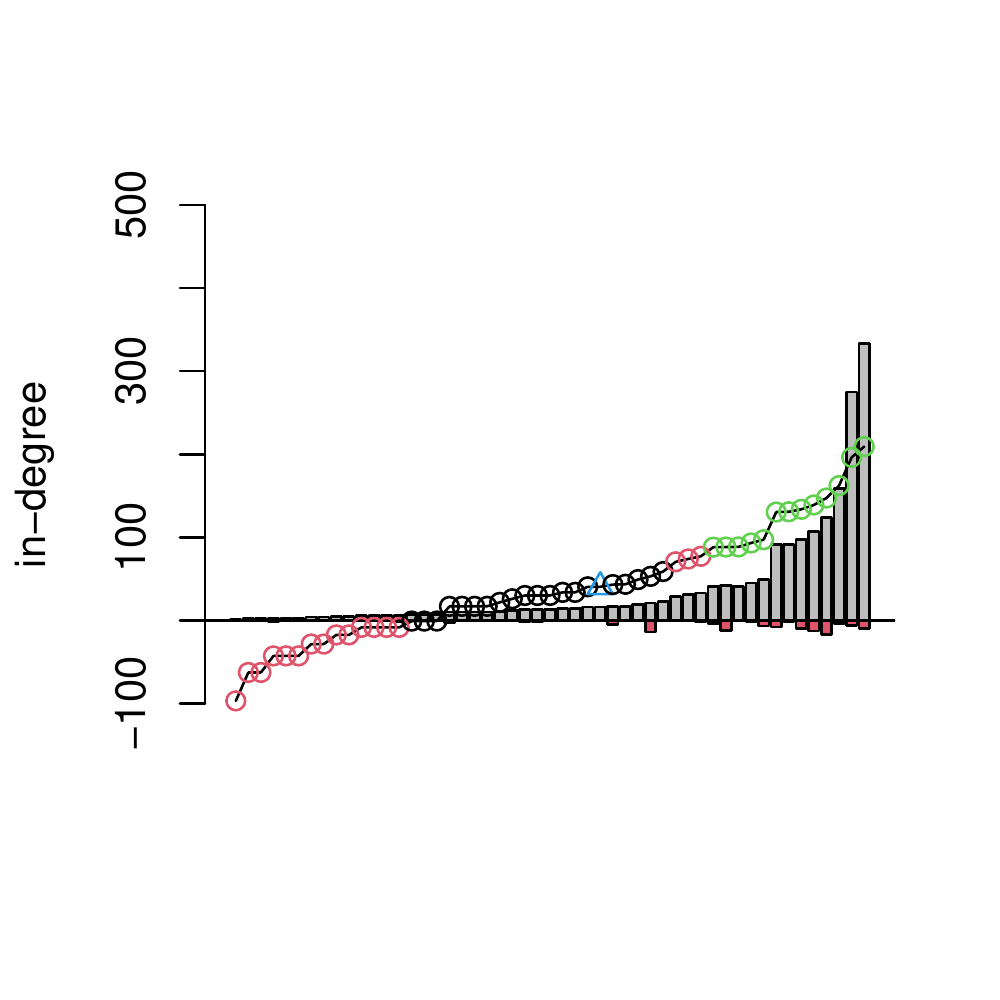}
	\includegraphics[scale = 0.8]{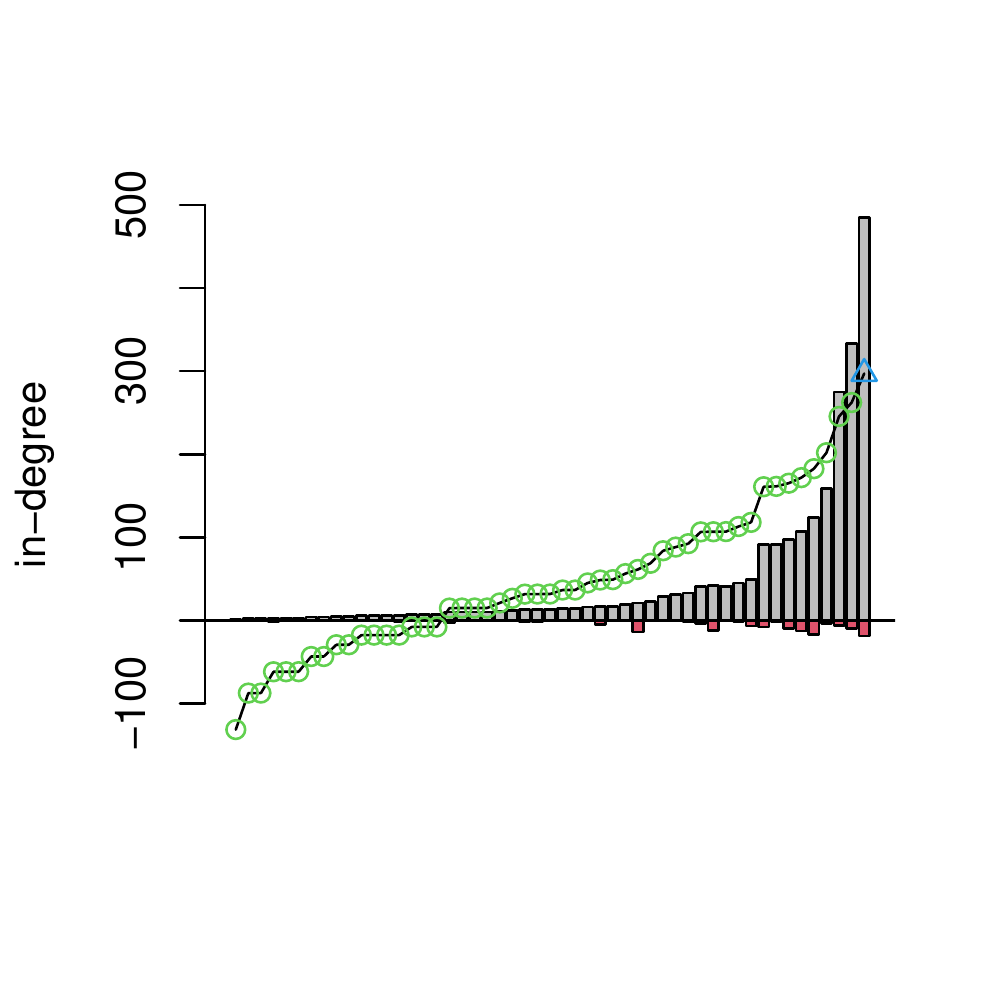}
	\vspace{-1.5cm}
	\caption{The estimated $\widehat\vc_j$ for $j\in\{i\}\cup\SSSS$ after rescaling in two examples, where the blue triangle represents node $i,$ and the circles represents nodes in $\SSSS.$  Both red and green points indicate statistical significance at level 95\% in comparison with node $i,$ whereas green points also indicate statistical significance for multiple testing. The bars further show the positive/negative in-degrees for these nodes.}
	\label{fig:in-status comparison}
\end{figure}

\section{Conclusion}

This paper proposes a general signed $\beta$-model for directed signed network, which embeds each node with two latent factors for in-status and out-status and includes the standard $\beta$-model and the directed $\beta$-model as special cases. We develop an efficient one-step estimation algorithm for parameter estimation and inferential procedure for node comparison, which leads to a computationally feasible estimation with asymptotic estimation efficiency. Note that the proposed model builds upon the mutual independence between the latent variables $z_{ij}^+$ and $z_{ij}^-$, it is thus of interest to relax this assumption and consider correlated latent variables. This will bring challenges in both computation and theoretical derivation, as the resultant likelihood function can be more complex than \eqref{eq:likelihood} and the current one-step estimation procedure is no longer applicable. Further, it is also interesting to extend the current procedure to analyze extremely sparse signed networks with exploitation of the regularization method \citep{zhang20212}. We leave it for future investigation.

\section*{Acknowledgment} 

This research is supported in part by HK RGC Grants GRF-11304520, GRF-11301521, GRF-11311022, and CUHK Startup Grant 4937091.

\appendix

\section*{Appendix A: notations and necessary lemmas}


For two negative sequences $a_n$ and $b_n$, let $a_n\lesssim_P b_n$ mean there exist positive constants $c_1,c_2$ such that $\Pr(a_n\geq c_1b_n) \leq c_2n^{-1}$ for $n$ large enough. With slight abuse of notations, we rewrite $\ttt = (\MS^\top,\MC^\top,\beta_n)^\top$ with $\theta_{2n} = \beta_n=0$, and define $\partial^2 l(\ttt;\sssp)/\partial\vs_i\partial\vc_n =  l''_{in}(\vs_i+\vc_n;\ssp_i)$ for $i \in [n-1]$ and $\partial^2 l(\ttt;\sssp)/\partial\vs_n\partial\vc_n = 0$. Further define $$
\frac{\partial l(\ttt;\sssp)}{\partial \vc_n} = \sum_{i=1}^{n-1} l'_{in}(\vs_i+\vc_n;\ssp_i)~~\text{and}~~\frac{\partial^2 l(\ttt;\sssp)}{\partial \vc_n^2} = \sum_{i=1}^{n-1} l''_{in}(\vs_i+\vc_n;\ssp_i).
$$ Similarly, $\check \ttt$, $\widehat \ttt$ and $\ttt^*$ are all augmented with an additional 0 entry, and the corresponding partial derivatives of $l(\check\ttt;\sssp)$, $l(\widehat\ttt;\sssp)$ and $l(\ttt^*;\sssp)$ are defined as above. Also, we denote
$$
\III_{ij} = \frac{\partial^2 l(\ttt^*;\sssp)}{\partial\vs_i\partial\vc_j}-\EEE\left[\frac{\partial^2 l(\ttt^*;\sssp)}{\partial\vs_i\partial\vc_j}\right],~\mbox{for}~i,j \in [n],
$$ 
and $\III_{ii} = 0$ for $i \in [n]$. 
Define  $d_i = \sum_{k=1,k\neq i}^n y_{ik}$ for any $i \in [n]$, $b_j = \sum_{k=1,k\neq j}^n y_{kj}$ for any $j \in [n]$, and $(g_i)_{i=1}^{2n} = (d_1,...,d_n,b_1,...,b_{n})^\top$. Further, for any $i,j\in[n],$ define 
$$
g_{i\backslash j} = g_i - y_{ij} = \sum_{k=1,k\neq i,j}^n y_{il},~\mbox{and}~g_{n+j\backslash i} = g_{n+j} - y_{ij} = \sum_{k=1,k\neq i,j}^n y_{kj}.
$$ 

Recall the definitions in \eqref{eq:u def}, \eqref{eq:uvw def}, \eqref{eq:u def2} and \eqref{eq:delta def}: $$
 \begin{aligned}
 	v_i =& \sum_{k=1,k\neq i}^n \frac{(1+\ssp_i)e^{\vs_i^*+\vc_k^*}}{(1+e^{\vs_i^*+\vc_k^*})^2},~&&w_i = \sum_{k=1,k\neq i}^n\var(y_{ik}),&&~\mbox{for}~i \in [n],\\
 	v_{n+j} =& \sum_{k=1,k\neq j}^n \frac{(1+\ssp_k)e^{\vs_k^*+\vc_j^*}}{(1+e^{\vs_k^*+\vc_j^*})^2},~&&w_{n+j}=\sum_{k=1,k\neq j}^n\var(y_{kj}),&&~\mbox{for}~j \in [n],
\end{aligned}
$$ $$
	 u_i = -\frac{\partial^2 l(\ttt^*;\sssp)}{\partial\theta_i^2},~~~~\check u_i = -\frac{\partial^2 l(\check\ttt;\sssp)}{\partial\theta_i^2},~~~~ \widehat u_{i} = -\frac{\partial^2 l(\widehat\ttt;\sssp)}{\partial\theta_i^2},~~\mbox{for}~i\in[2n].
$$ and $\widehat\delta_{ij}^{2} = \widehat u_i^{-1} + \widehat u_j^{-1},~(\delta_{ij}^*)^{2} = u_i^{-1} + (u_j^*)^{-1}$.
Below we list some necessary lemmas, and their proof are provided in Appendix C.

\begin{lemma}\label{lem:var}
There exists a positive constant $c$ such that $$
\begin{aligned}
c^{-1}ne^{-2\|\ttt^*\|_{\infty}}&\leq\min_{1\leq i\leq 2n} u_i\leq \max_{1\leq i\leq 2n} u_i\leq cn,\\
c^{-1}ne^{-2\|\ttt^*\|_{\infty}}&\leq\min_{1\leq i\leq 2n} v_i \leq \max_{1\leq i\leq 2n} v_i \leq cn,\\
c^{-1}ne^{-2\|\ttt^*\|_{\infty}}&\leq\min_{1\leq i\leq 2n} w_i \leq \max_{1\leq i\leq 2n} w_i \leq cn.
\end{aligned}
$$
Further, there exists a constant $\epsilon>0$ such that for any $\ttt,\sssp$ satisfying $\|\ttt-\ttt^*\|_{\infty}\leq \epsilon$ and $\|\sssp-\sssp^*\|_{\infty}\leq \epsilon$,
we have $$
n^{-1} \max_{1\leq i\leq 2n} \left| \frac{\partial^2 l(\ttt;\sssp)}{\partial \theta_i^2} - \frac{\partial^2 l(\ttt^*;\sssp)}{\partial \theta_i^2} \right| = O(\|\ttt-\ttt^*\|_{\infty}),
$$ $$
n^{-1} \max_{1\leq i\leq 2n} \left| \frac{\partial^2 l(\ttt;\sssp)}{\partial \theta_i^2} - \frac{\partial^2 l(\ttt;\sssp^*)}{\partial \theta_i^2} \right| = O(\|\sssp-\sssp^*\|_{\infty}).
$$
\end{lemma}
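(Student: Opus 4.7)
The plan is to verify each claim by direct manipulation of the probability mass function~\eqref{eq:model}, using throughout the uniform bound $|\vs_i^* + \vc_k^*| \leq 2\|\ttt^*\|_\infty$ on the linear predictor together with the elementary identity $e^x/(1+e^x)^2 = (e^{x/2} + e^{-x/2})^{-2}$, which yields the two-sided inequality $e^{-2\|\ttt^*\|_\infty}/4 \leq e^{\vs_i^* + \vc_k^*}/(1 + e^{\vs_i^* + \vc_k^*})^2 \leq 1/4$.

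For $v_i$ (and symmetrically for $v_{n+j}$), I would sum the above inequality over the $n-1$ admissible indices, observing that the factor $(1+\ssp_i)$ is bounded between $1$ and $2$ because $0 < \ssp_{00} \leq \ssp_i \leq \ssp_{01} < 1$; this delivers both the lower bound of order $n e^{-2\|\ttt^*\|_\infty}$ and the upper bound of order $n$. For $w_i$, I would use the decomposition $y_{ik} = z_{ik}^+ - z_{ik}^-$ with independent Bernoullis, giving $\var(y_{ik}) = p_{ik}^+(1 - p_{ik}^+) + p_{ik}^-(1 - p_{ik}^-)$ where $p_{ik}^+ = e^{\vs_i^* + \vc_k^*}/(1+e^{\vs_i^* + \vc_k^*})$ and $p_{ik}^- = \ssp_i/(1+e^{\vs_i^* + \vc_k^*})$; the bounds on $w_i$ then follow from the previous inequality on $p_{ik}^+(1-p_{ik}^+)$ together with $\ssp_i < 1$. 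For $u_i$, I would differentiate $l_{ik}$ twice with respect to $\vs_i$ directly from~\eqref{eq:model}, take expectations, and simplify; the resulting expression is, up to a factor bounded away from $0$ and $\infty$ in the assumed regime, the same sum that appears in $v_i$, which reduces its bounds to those already established.

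For the two smoothness claims, the approach is to bound the relevant third-order partial derivatives summand-wise and invoke the mean value theorem coordinate by coordinate. For the $\ttt$-smoothness, I would show $|\partial^3 l_{ik}/\partial \theta_i^3| = O(1)$ uniformly over $\|\ttt - \ttt^*\|_\infty \leq \epsilon$ and $\ssp_i \in [\ssp_{00}, \ssp_{01}]$; summing this bound over the $n-1$ terms that make up $\partial^2 l/\partial \theta_i^2$ and applying the mean value theorem along the segment from $\ttt^*$ to $\ttt$ yields the claimed $O(n \|\ttt - \ttt^*\|_\infty)$ rate. The $\sssp$-smoothness is handled analogously via a uniform bound on $|\partial^3 l_{ik}/(\partial \theta_i^2 \partial \ssp_i)|$, which is straightforward because the dependence of $l_{ik}$ on $\ssp_i$ is rational with denominators bounded away from zero in our regime.

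The main technical obstacle is verifying the uniform boundedness of these third derivatives, since unlike the unsigned $\beta$-model the log-density $l_{ik}$ is not log-affine in $\vs_i$ when $\ssp_i > 0$. In particular, the denominators $e^{\vs_i + \vc_k} + 1 - \ssp_i$ (arising from the $y=1$ branch) and $\ssp_i$ (arising from the $y=-1$ branch) must both be controlled from below. Under $\|\ttt^*\|_\infty \leq c\log n$, $\|\ttt-\ttt^*\|_\infty \leq \epsilon$, and $\ssp_{00} \leq \ssp_i \leq \ssp_{01} < 1$, each such denominator stays above a factor of order $e^{-\|\ttt^*\|_\infty}$, which is enough because the lemma's lower bounds themselves already carry $e^{-2\|\ttt^*\|_\infty}$. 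This uniform control is what allows all derivative calculations to proceed cleanly despite the signed $\beta$-model falling outside the exponential family.
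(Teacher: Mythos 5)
Your treatment of the first half of the lemma is correct and matches the paper: bound each summand of $v_i$, $w_i$ and (after computing the expected second derivative) $u_i$ between $c^{-1}e^{-2\|\ttt^*\|_{\infty}}$ and $c$, then sum over the $n-1$ indices; likewise, your plan for the two Lipschitz claims (uniform third-derivative bounds plus the mean value theorem, exploiting that $l_{ik}$ depends on $(\vs_i,\vc_k)$ only through $m_{ik}=\vs_i+\vc_k$) is exactly the paper's argument. The problem is in your last paragraph, where you justify the key uniform bound. First, the denominator $\ssp_i$ from the $y=-1$ branch never actually appears: that branch contributes $\log\ssp_i-2\log(1+e^{m_{ik}})$, and the $\log\ssp_i$ term is free of $m_{ik}$, so it drops out of $\partial^2 l/\partial\theta_i^2$ and of its derivative in $\ssp_i$. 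This matters because your proposed lower bound of order $e^{-\|\ttt^*\|_{\infty}}$ for $\ssp_i$ is not available in the paper's sparse regime, where $\ssp_{00}$ may be as small as $\log n/n$ while $e^{-\|\ttt^*\|_{\infty}}\geq n^{-c}$ with $c<1/40$.

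Second, and more substantively, the claim that denominators bounded below by a factor of order $e^{-\|\ttt^*\|_{\infty}}$ is ``enough because the lemma's lower bounds themselves already carry $e^{-2\|\ttt^*\|_{\infty}}$'' conflates the two halves of the lemma. The Lipschitz statements assert $n^{-1}\max_i\big|\partial^2 l(\ttt;\sssp)/\partial\theta_i^2-\partial^2 l(\ttt^*;\sssp)/\partial\theta_i^2\big|=O(\|\ttt-\ttt^*\|_{\infty})$ with an absolute constant; since $\|\ttt^*\|_{\infty}$ may grow like $c\log n$, a third-derivative bound that carries any factor $e^{c'\|\ttt^*\|_{\infty}}$ would only yield $O(e^{c'\|\ttt^*\|_{\infty}}\|\ttt-\ttt^*\|_{\infty})$, which is weaker than the statement and would corrupt the downstream rate calculations that use this lemma (for instance the bound $\max_i|\check u_i-u_i|\lesssim_P e^{6\|\ttt^*\|_{\infty}}\sqrt{n\log n}$). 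The correct and simpler observation is that, on each branch, $\partial^2 l_{ik}/\partial m_{ik}^2$ is built from shifted logistic functions of $m_{ik}$, namely $e^{m}/(e^{m}+1-\ssp_i)$ and $(1+\ssp_i)e^{m}/\{(1+\ssp_i)e^{m}+1-\ssp_i\}$, whose derivatives of all orders in $m$ are bounded by universal constants (and whose $\ssp_i$-derivatives are bounded once $\ssp_{01}$ stays away from $1$), so $\sup|\partial^3 l_{ik}/\partial m_{ik}^3|=O(1)$ with no dependence on $\|\ttt^*\|_{\infty}$ — this is precisely the $O(1)$ bound the paper invokes. With that correction, your argument goes through and is the same as the paper's.
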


\begin{lemma}\label{lem:concentrate}
It holds true that 
\begin{equation}\label{eq:con1}
\Pr\left(\max_{1\leq i\leq 2n}|g_i - \EEE g_i| \leq \sqrt{4(n-1)\log (n-1)}\right) \geq 1-\frac{4n}{(n-1)^2}.
\end{equation}
Further, there exists a constant $c>0$ such that for $n$ large enough, we have 
\begin{equation}\label{eq:con2}
\Pr\left(\max_{1\leq i\leq 2n}\left|\frac{\partial l(\ttt^*;\sssp)}{\partial\theta_i}\right| \leq \sqrt{16(n-1)\log (n-1)} + cn\|\sssp-\sssp^*\|_{\infty}\right) \geq 1-\frac{4n}{(n-1)^2},
\end{equation} 
\begin{equation}\label{eq:con3}
\Pr\left(\max_{1\leq i\leq 2n} \left | \frac{\partial^2 l(\ttt^*;\sssp)}{\partial \theta_i^2} +u_i \right | \leq c\sqrt{(n-1)\log (n-1)}\right) \geq 1-\frac{4n}{(n-1)^2}.
\end{equation}
\end{lemma}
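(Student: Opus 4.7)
\textbf{Proof proposal for Lemma~\ref{lem:concentrate}.} All three bounds are concentration inequalities for sums of $n-1$ independent bounded random variables, so the unified strategy is to apply Hoeffding's inequality coordinate-by-coordinate and then take a union bound over the $2n$ indices. The fact that all three bounds share the denominator $(n-1)^2$ and the factor $4n$ in the numerator is exactly the signature of ``Hoeffding + union bound.''

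\emph{Part (1).} The variable $g_i = \sum_{k \neq i} y_{ik}$ is a sum of $n-1$ independent random variables taking values in $[-1,1]$. By Hoeffding's inequality, $\Pr(|g_i - \EEE g_i| > t) \leq 2\exp(-t^2/(2(n-1)))$, and setting $t = \sqrt{4(n-1)\log(n-1)}$ gives per-coordinate probability $2/(n-1)^2$. A union bound over the $2n$ indices ($i = 1,\dots,n$ for the $d_i$'s and $j = 1,\dots,n$ for the $b_j$'s) yields the stated bound $4n/(n-1)^2$.

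\emph{Part (3).} The quantity $\partial^2 l(\ttt^*;\sssp)/\partial\theta_i^2 = \sum_{k \neq i} l''_{ik}(\vs_i^* + \vc_k^*;\ssp_i)$ is once again a sum of $n-1$ independent random variables (independent because each $l''_{ik}$ depends only on $y_{ik}$, and the $y_{ik}$'s are mutually independent). The main preliminary step here is to verify that each summand is uniformly bounded; from the explicit form in \eqref{eq:model}, one checks that $|l''_{ik}|$ is bounded by a constant depending only on $\|\ttt^*\|_\infty$ and $\sssp$ (which is itself bounded). Then Hoeffding with $t = c\sqrt{(n-1)\log(n-1)}$ for a suitably chosen constant $c$ (absorbing the $\|\ttt^*\|_\infty$-dependent bound on $l''$) gives per-coordinate probability $\le 2/(n-1)^2$, and a union bound over $2n$ indices finishes it. Note that $u_i$ is precisely the mean of $-\sum_{k \neq i} l''_{ik}$, so the centering in the lemma is correct.

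\emph{Part (2).} The score $\partial l(\ttt^*;\sssp)/\partial\theta_i$ is \emph{not} mean-zero because $\sssp$ is used instead of the true $\sssp^*$. The plan is to decompose
\[
\frac{\partial l(\ttt^*;\sssp)}{\partial\theta_i} = \frac{\partial l(\ttt^*;\sssp^*)}{\partial\theta_i} + \left[\frac{\partial l(\ttt^*;\sssp)}{\partial\theta_i} - \frac{\partial l(\ttt^*;\sssp^*)}{\partial\theta_i}\right].
\]
The first term is a sum of $n-1$ independent mean-zero bounded random variables (each $l'_{ik}$ at the true parameters has mean zero by the score identity, and is uniformly bounded by a constant as in Part (3)), so Hoeffding gives per-coordinate tail $\le 2/(n-1)^2$ at level $\sqrt{16(n-1)\log(n-1)}$. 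The second, deterministic-in-$\sssp$ term is bounded uniformly by $cn\|\sssp-\sssp^*\|_\infty$ using the same Lipschitz-in-$\sssp$ argument that underlies Lemma~\ref{lem:var}: the score is a sum of $n-1$ terms each of which is Lipschitz in $\sssp_i$ with a constant depending only on $\|\ttt^*\|_\infty$. Combining the two contributions and taking a union bound over $2n$ indices yields the stated inequality.

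\emph{Main obstacle.} The routine part is the Hoeffding/union-bound bookkeeping; the subtle step is verifying that $l'_{ik}$ and $l''_{ik}$ are uniformly bounded with constants that are tame in $\|\ttt^*\|_\infty$ and do not blow up when $\ssp_{00}$ is small. The denominators $1 + e^{\vs_i^* + \vc_k^*}$ and the mass $p(y \mid \cdot)$ must be kept bounded away from zero, which uses both the assumption $\|\ttt^*\|_\infty \le c \log n$ and the fact that $\ssp_i$ only appears linearly in the numerator of the likelihood (so that $l'$ and $l''$ remain well-behaved as $\ssp \to 0$). Once these uniform bounds are in place, all three parts collapse into parallel applications of Hoeffding.
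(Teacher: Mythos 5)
Your proposal is correct and follows essentially the same route as the paper: coordinate-wise Hoeffding plus a union bound over the $2n$ indices, with \eqref{eq:con2} handled exactly as the paper does, by centering the score at $\sssp^*$ and bounding the remaining deterministic shift by $cn\|\sssp-\sssp^*\|_\infty$ via the Lipschitz-in-$\sssp$ argument underlying Lemma~\ref{lem:var}. The only refinement worth noting is that $l'_{ik}$ and $l''_{ik}$ are in fact bounded by absolute constants (uniformly in the linear predictor and in $\ssp_i\in[0,1]$), so the constant $c$ in \eqref{eq:con3} does not need to absorb any $\|\ttt^*\|_{\infty}$-dependent quantity, which is what makes the fixed-constant statement of the lemma legitimate.
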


\begin{lemma}\label{lem:initial consis}
Under the conditions of Theorem~\ref{thm:initial}, with probability at least $1-4n/(n-1)^2$, it holds true that \eqref{eq:est1} has a unique solution $\check\ttt,$ and
$$
\|\check\ttt - \ttt^*\|_{\infty} \lesssim e^{6\|\ttt^*\|_{\infty}}\sqrt{\frac{\log n}{n}}.
$$ 
\end{lemma}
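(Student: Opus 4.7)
The plan is a Newton--Kantorovich style fixed-point argument applied to the estimating equation $F(\ttt;\sssp)={\bf 0}$. Writing $V(\ttt) = -\partial F(\ttt;\sssp)/\partial\ttt = -\partial^2 l(\ttt;\sssp)/\partial\ttt^2$ for the Jacobian, I would set up the Newton map $\Phi(\ttt) = \ttt + V(\ttt^*)^{-1}F(\ttt;\sssp)$ and show that $\Phi$ is a contraction mapping the ball $\mathcal{B} = \{\ttt:\|\ttt-\ttt^*\|_{\infty}\leq r\}$ into itself for $r \asymp e^{6\|\ttt^*\|_{\infty}}\sqrt{\log n /n}$. Banach's fixed-point theorem then produces the unique solution $\check\ttt\in\mathcal{B}$ with the desired error bound, while global uniqueness follows because the Jacobian $V(\ttt)$ is positive definite on the identified parameter space for every $\ttt$, rendering $F$ globally injective.

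The three ingredients are as follows. First, at the truth $F_i(\ttt^*;\sssp^*)$ equals, up to signs and the augmentation of Appendix~A, the coordinate $\partial l(\ttt^*;\sssp^*)/\partial\theta_i$. Hence the concentration bound \eqref{eq:con2} of Lemma~\ref{lem:concentrate} yields $\|F(\ttt^*;\sssp)\|_{\infty}\lesssim\sqrt{n\log n} + n\|\sssp-\sssp^*\|_{\infty}$ on a high-probability event, and the hypothesis $\|\sssp-\sssp^*\|_{\infty}\lesssim e^{12\|\ttt^*\|_{\infty}}\log n/n$ together with $\|\ttt^*\|_{\infty}\leq c\log n$, $c<1/40$, keeps the perturbation term below $\sqrt{n\log n}$. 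Second, Lemma~\ref{lem:var} gives $v_i \asymp ne^{-2\|\ttt^*\|_{\infty}}$ on the diagonal of $V(\ttt^*)$, so the diagonal-plus-rank-one approximation of $V(\ttt^*)^{-1}$ developed in \cite{yan2016asymptotics} for $\beta$-model Jacobians produces an approximant $S$ whose entries are of order $e^{2\|\ttt^*\|_{\infty}}/n$ and whose approximation error is of smaller order. Third, the Lipschitz estimate in Lemma~\ref{lem:var} controls $\|V(\ttt)-V(\ttt^*)\|_{\infty\to\infty}\lesssim n\|\ttt-\ttt^*\|_{\infty}$ (with further $e^{O(\|\ttt^*\|_{\infty})}$ factors absorbed), which is what is needed for the contraction property.

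Combining the ingredients, $\|\Phi(\ttt^*)-\ttt^*\|_{\infty}$ is bounded via the diagonal approximation of $V(\ttt^*)^{-1}$ applied to $F(\ttt^*;\sssp)$, which lands comfortably inside $\mathcal{B}$ once the leading constant is chosen large enough. The Lipschitz constant of $\Phi$ on $\mathcal{B}$ is bounded by $\|V(\ttt^*)^{-1}\|$ against the Taylor remainder of $F$, and is $o(1)$ under $c<1/40$ since $\|\ttt-\ttt^*\|_{\infty}\leq r$ is very small. Banach's fixed-point theorem yields the unique $\check\ttt\in\mathcal{B}$ solving $F(\check\ttt;\sssp)={\bf 0}$, and the standard Newton--Kantorovich estimate gives $\|\check\ttt-\ttt^*\|_{\infty}\leq 2\|\Phi(\ttt^*)-\ttt^*\|_{\infty}\lesssim e^{6\|\ttt^*\|_{\infty}}\sqrt{\log n/n}$.

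The main obstacle is bookkeeping the exponential factors $e^{k\|\ttt^*\|_{\infty}}$. The inverse Jacobian contributes a factor $e^{2\|\ttt^*\|_{\infty}}$ through the diagonal bound on $v_i^{-1}$, the nonlinear Taylor remainder along $\mathcal{B}$ contributes further powers through the second derivatives of $l_{ij}$, and the off-diagonal error of the diagonal-plus-rank-one approximation of $V(\ttt^*)^{-1}$ injects yet more; the $\sssp$-perturbation term independently already carries $e^{12\|\ttt^*\|_{\infty}}$. Balancing these competing powers forces both the exponent $6$ in the final rate and the specific threshold $c<1/40$ in Theorem~\ref{thm:initial}, and it is this delicate accounting, rather than the structural fixed-point mechanism, that occupies the technical heart of the argument.
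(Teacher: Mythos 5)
Your overall route is the same as the paper's: the paper also bounds $\|F(\ttt^*;\sssp)\|_{\infty}$ by concentration plus the $\sssp$-perturbation, verifies via Lemma~\ref{lem:var} that $-\partial F(\ttt;\sssp)/\partial\ttt$ lies in the matrix class $\mathcal L(c_1e^{-2\|\ttt^*\|_{\infty}},c_2)$ of \cite{yan2016asymptotics}, invokes their Theorem~7 (itself a Newton--Kantorovich result built on the diagonal-plus-rank-one approximate inverse) with $\rho\asymp e^{6\|\ttt^*\|_{\infty}}$ and $r\lesssim e^{6\|\ttt^*\|_{\infty}}\sqrt{\log n/n}$, and proves uniqueness exactly by your positive-definiteness/injectivity argument. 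One small correction first: $F$ is the moment (degree) equation, not the score. Since the signed $\beta$-model is not an exponential family when $\ssp_i>0$, $F_i(\ttt^*;\sssp^*)=g_i-\EEE g_i$ does \emph{not} coincide with $\partial l(\ttt^*;\sssp^*)/\partial\theta_i$; the correct input is \eqref{eq:con1} (plus the $(n-1)\|\sssp-\sssp^*\|_{\infty}$ term), not \eqref{eq:con2}. The orders agree, so this is easily repaired, but the distinction is precisely why the paper needs the one-step likelihood update afterwards.

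The substantive gap is in the contraction step. Because of the constraint $\vc_n=0$, the (approximate) inverse Jacobian is diagonal plus a rank-one block $u_{2n}^{-1}\EE_n\EE_n^\top$ whose row sums are of order $n\cdot e^{2\|\ttt^*\|_{\infty}}/n=e^{2\|\ttt^*\|_{\infty}}$; hence the operator norm of $V(\ttt^*)^{-1}$ induced by $\|\cdot\|_{\infty}$ is of order $e^{2\|\ttt^*\|_{\infty}}$, not $e^{2\|\ttt^*\|_{\infty}}/n$. The Taylor remainder of $F$ on the ball of radius $r$ has sup-norm of order $nr^2$ and the Jacobian perturbation has induced norm of order $nr$, so the crude product bounds you invoke give a self-map error of order $e^{2\|\ttt^*\|_{\infty}}nr^2\asymp e^{14\|\ttt^*\|_{\infty}}\log n\gg r$ and a Lipschitz constant of order $e^{2\|\ttt^*\|_{\infty}}nr\asymp e^{8\|\ttt^*\|_{\infty}}\sqrt{n\log n}\to\infty$; the claim that the Lipschitz constant of $\Phi$ is $o(1)$ therefore does not follow from the ingredients you list. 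What rescues the argument (and is what Theorem~7 of \cite{yan2016asymptotics} encodes) is the finer structure: entrywise control of the error of the diagonal-plus-rank-one approximation of the inverse, of order $e^{6\|\ttt^*\|_{\infty}}/n^2$ per entry rather than an operator-norm bound, together with the cancellation coming from the identity that the sum of the first $n$ components of $F$ equals the sum of the last $n$ components (total out-degrees equal total in-degrees), so that the rank-one part of the inverse acts on a telescoped quantity of size $O(nr^2)$ instead of on $2n-1$ terms each of that size. These two facts are exactly where the exponent $6$ in the rate and the admissible range of $c$ come from; without them (or a direct appeal to Theorem~7, as the paper makes), the fixed-point step is not established.
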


By Lemmas \ref{lem:var}-\ref{lem:initial consis}, 
we have $$
\max_{1\leq i\leq 2n} \left|\check u_i - u_i \right| \leq \max_{1\leq i\leq 2n} \left|\frac{\partial^2 l(\check\ttt;\sssp)}{\partial \theta_i^2} - \frac{\partial^2 l(\ttt^*;\sssp)}{\partial \theta_i^2} \right| + \max_{1\leq i\leq 2n} \left|\frac{\partial^2 l(\ttt^*;\sssp)}{\partial \theta_i^2} + u_i \right| \lesssim_P e^{6\|\ttt^*\|_{\infty}}\sqrt{n\log n},
$$
and \begin{equation}\label{eq:inverse}
\max_{1\leq i\leq 2n} \left|\check u_i^{-1} - (u_i\\
^*)^{-1} \right| = \max_{1\leq i\leq 2n}\frac{ \left|\check u_i - u_i \right|}{\left|u_i\check u_i\right|} \lesssim_P e^{10\|\ttt^*\|_{\infty}} \frac{\sqrt{\log n}}{n^{3/2}}.
\end{equation}

\begin{lemma}\label{lem:initial decomp}
Under the conditions of Theorem~\ref{thm:initial},
it holds true that 
$$
\begin{aligned}
\check \vs_i - \vs_i^* &= v_i^{-1}(g_i-\EEE g_i) + v_{2n}^{-1}(g_{2n} - \EEE g_{2n}) + \epsilon_{i},&&~~~~i=1,...,n,\\
\check \vc_j - \vc_j^* &= v_{n+j}^{-1}(g_{n+j}-\EEE g_{n+j}) - v_{2n}^{-1}(g_{2n} - \EEE g_{2n}) + \epsilon_{n+j},&&~~~~j=1,...,n-1,
\end{aligned}
$$ 
where $\epsilon_i$ satisfies that $$
P\Big(\max_{1\leq i\leq 2n-1}|\epsilon_i| \lesssim \frac{e^{18\|\ttt^*\|_{\infty}}\log n}{n} \Big) \ge 1-\frac{4n}{(n-1)^2}.$$
\end{lemma}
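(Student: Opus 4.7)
The plan is to linearize the estimating equation $F(\check\ttt;\sssp)=\mathbf{0}$ around $\ttt^*$ and invert the resulting linear system. A second-order Taylor expansion gives $\mathbf{0}=F(\ttt^*;\sssp)-V(\check\ttt-\ttt^*)+R$, where $V=-\nabla_{\ttt}F(\ttt^*;\sssp)$ is the Jacobian at the truth and $R$ is the quadratic remainder. Since $F_i(\ttt^*;\sssp^*)$ equals $g_i-\EEE g_i$ (respectively $g_{n+j}-\EEE g_{n+j}$) for $i\le n$ (respectively $i=n+j\le 2n-1$), I would absorb the $\sssp$-perturbation into a further term by writing $F(\ttt^*;\sssp)=(g-\EEE g)_{[1:2n-1]}+R'$, where $R'$ depends only on $\sssp-\sssp^*$ and obeys $\|R'\|_\infty\lesssim n\|\sssp-\sssp^*\|_\infty\lesssim e^{12\|\ttt^*\|_\infty}\log n$ under the hypothesis. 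The Hessian entries of each $F_i$ are bounded and supported on the $\vs_i$ row and the $\vc$-block only, so that $\|R\|_\infty\lesssim n\|\check\ttt-\ttt^*\|_\infty^2$, which Lemma~\ref{lem:initial consis} controls by $e^{12\|\ttt^*\|_\infty}\log n$ with probability at least $1-4n/(n-1)^2$.

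Next, I would solve for $\check\ttt-\ttt^*$ by approximating $V^{-1}$ through the arrowhead inversion technique developed for the directed $\beta$-model in Yan et al.~(2016). The matrix $V$ is diagonal $(v_1,\dots,v_{2n-1})$ plus couplings bounded entrywise by $(1+\ssp)e^{\vs^*+\vc^*}/(1+e^{\vs^*+\vc^*})^2=O(1)$. I would establish $V^{-1}=S+W$, where $S$ is exactly the matrix appearing implicitly in the lemma, i.e., $\diag(v_i^{-1})$ augmented by the $v_{2n}^{-1}$ rank-one correction coupling the $\vs$- and $\vc$-blocks with the sign pattern of $\check\HH$ in \eqref{eq:H def}, and the residual obeys $\|W\|_{\max}\lesssim e^{c\|\ttt^*\|_\infty}/n^2$ for an explicit $c$. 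Written row by row, $S\,(g-\EEE g)_{[1:2n-1]}$ produces precisely the terms $v_i^{-1}(g_i-\EEE g_i)\pm v_{2n}^{-1}(g_{2n}-\EEE g_{2n})$ that appear in the claimed expansion.

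It then remains to bound the sup-norm of
\begin{equation*}
\epsilon \;=\; W(g-\EEE g)_{[1:2n-1]} \;+\; V^{-1}(R+R').
\end{equation*}
For the first piece, $\|g-\EEE g\|_\infty\lesssim\sqrt{n\log n}$ by Lemma~\ref{lem:concentrate}, and combined with the max-norm control on $W$ this contribution is comfortably within budget. The delicate piece is $V^{-1}(R+R')$: the naive bound $\|V^{-1}\|_\infty\lesssim e^{2\|\ttt^*\|_\infty}$ is too crude, so one must split $V^{-1}$ into its diagonal part $\diag(v_i^{-1})$ and the rank-one correction separately, exploit the cancellation in the row-sums of $R$ and $R'$ coming from the Hessian sparsity of $F_i$, and assemble the pieces to reach the stated $e^{18\|\ttt^*\|_\infty}\log n/n$ bound. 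The main obstacle is precisely this finer handling of $V^{-1}R$ and $V^{-1}R'$: tracking how the exponential prefactors $e^{k\|\ttt^*\|_\infty}$ arising from the lower bound on $v_i$ in Lemma~\ref{lem:var}, the squared consistency rate $e^{12\|\ttt^*\|_\infty}\log n/n$, the inverse approximation, and the $\sssp$-perturbation combine to exactly $e^{18\|\ttt^*\|_\infty}$ is where the proof's bookkeeping lives.
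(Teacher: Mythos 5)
Your overall route is the same as the paper's: Taylor-expand the estimating equation at $\ttt^*$, approximate the inverse Jacobian $V^{-1}$ by the structured matrix $S$ (the diagonal-plus-rank-one approximation for diagonally dominant matrices from Yan et al.\ (2016), which the paper invokes via its Lemmas 8 and 9), and push the Taylor remainder $R$, the $\sssp$-perturbation $R'$, and the inversion error $W=V^{-1}-S$ into $\epsilon$. Your identification of the main term and the bounds $\|R\|_\infty\lesssim n\|\check\ttt-\ttt^*\|_\infty^2\lesssim e^{12\|\ttt^*\|_\infty}\log n$ and $\|R'\|_\infty\lesssim n\|\sssp-\sssp^*\|_\infty$ are fine.

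The genuine gap is your treatment of $W(\g-\EEE\g)$. You declare it ``comfortably within budget'' by combining $\|W\|_{\max}\lesssim e^{c\|\ttt^*\|_\infty}/n^2$ with $\|\g-\EEE\g\|_\infty\lesssim\sqrt{n\log n}$, but that product only yields $(2n-1)\|W\|_{\max}\|\g-\EEE\g\|_\infty\asymp e^{c\|\ttt^*\|_\infty}\sqrt{\log n/n}$, which misses the claimed $e^{18\|\ttt^*\|_\infty}\log n/n$ by a factor of order $\sqrt{n/\log n}$ in every admissible regime (and is not even $o(n^{-1/2})$, so it would break the asymptotic normality argument in Theorem 1 that consumes this lemma). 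This term needs exactly the kind of finer argument you reserve for $V^{-1}(R+R')$: for instance, write $[W(\g-\EEE\g)]_i=\sum_{a\neq b}c^{(i)}_{ab}\,(y_{ab}-\EEE y_{ab})$ with deterministic weights $|c^{(i)}_{ab}|\le 2\|W\|_{\max}$, and apply Hoeffding's inequality over the $O(n^2)$ independent edges plus a union bound over $i$, giving $\max_i|[W(\g-\EEE\g)]_i|\lesssim_P \|W\|_{\max}\,n\sqrt{\log n}\asymp e^{c\|\ttt^*\|_\infty}\sqrt{\log n}/n$, which is within budget; alternatively exploit the explicit structure of $W$ as in Yan et al. Conversely, the piece you flag as the main obstacle, $V^{-1}(R+R')$, is handled essentially as you sketch: split $V^{-1}$ into $\diag(v_i^{-1})$ plus the rank-one $v_{2n}^{-1}$ part plus $W$, and note that the signed row sum hitting the rank-one part, $\sum_{i\le n}(R+R')_i-\sum_{j\le n-1}(R+R')_{n+j}$, collapses to the remainder of the single equation for node $n$'s in-degree and hence involves only $n-1$ edge terms, so it is $\lesssim n\|\check\ttt-\ttt^*\|_\infty^2+n\|\sssp-\sssp^*\|_\infty$ rather than $n^2(\cdot)$; but you have only described this step, whereas the paper delegates precisely this bookkeeping to Lemmas 8 and 9 of Yan et al.\ (2016), obtaining $\|V^{-1}\hh\|_\infty\lesssim e^{6\|\ttt^*\|_\infty}\|\check\ttt-\ttt^*\|_\infty^2$ and the analogous bound for the $\sssp$-perturbation. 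So the structure of your proposal matches the paper, but as written the $W(\g-\EEE\g)$ bound fails and the $V^{-1}(R+R')$ bound is asserted rather than proved.
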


\begin{lemma}\label{lem:multi concentrate12}
	Under the conditions of Theorem~\ref{thm:initial},
it holds true that 
	\begin{equation}\label{eq:multi concentrate1}
	\begin{aligned}
	&\Pr\left(\max_{1\leq i\leq n}\left| \sum_{l=1}^{n-1}  u_i^{-1}\EEE\left[\frac{\partial^2l(\ttt^*;\sssp)}{\partial\vs_i\partial\vc_l}\right] \sqrt{n} v_{n+l}^{-1}(g_{n+l}-\EEE g_{n+l})  \right| \lesssim e^{4\|\ttt^*\|_{\infty}}
\sqrt{\frac{\log n}{n}}\right) \geq 1-\frac{2}{n},\\
	&\Pr\left(\max_{1\leq j\leq n}\left| \sum_{k=1}^n  u_{n+j}^{-1}\EEE\left[\frac{\partial^2 l(\ttt^*;\sssp)}{\partial\vs_k\partial\vc_j}\right] \sqrt{n} v_k^{-1}(g_k-\EEE g_k) \right| \lesssim e^{4\|\ttt^*\|_{\infty}}\sqrt{\frac{\log n}{n}}\right)\geq 1-\frac{2}{n},
	\end{aligned}
	\end{equation} and 
	\begin{equation}\label{eq:multi concentrate2}
	\begin{aligned}
	&\Pr\left(\max_{1\leq i\leq n}\left| \sum_{l=1}^{n-1}  u_i^{-1}\III_{il}\sqrt{n} v_{n+l}^{-1}(g_{n+l\backslash i}-\EEE g_{n+l\backslash i})  \right| \lesssim e^{4\|\ttt^*\|_{\infty}}\sqrt{\frac{\log n}{n}}\right)\geq 1-\frac{2}{n},\\
	&\Pr\left(\max_{1\leq j\leq n}\left| \sum_{k=1}^n  u_{n+j}^{-1}\III_{kj} \sqrt{n}v_k^{-1}(g_{k\backslash j}-\EEE g_{k\backslash j}) \right| \lesssim e^{4\|\ttt^*\|_{\infty}}\sqrt{\frac{\log n}{n}}\right)\geq 1-\frac{2}{n}.
	\end{aligned}
	\end{equation}
\end{lemma}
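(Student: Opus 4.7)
The plan is to apply Bernstein's (or Hoeffding's) inequality to sums of bounded independent random variables, after rewriting each of the four inner sums as a double sum over distinct directed edges. A preliminary observation from Lemma~\ref{lem:var} will control the magnitudes of the coefficients: both $u_i$ and $v_{n+l}$ are lower bounded by $c^{-1}ne^{-2\|\ttt^*\|_{\infty}}$, while $\partial^2 l(\ttt^*;\sssp)/\partial\vs_i\partial\vc_l = l''_{il}(\vs_i^*+\vc_l^*;\ssp_i)$ depends only on the single edge $(i,l)$, so both $\bigl|\EEE[\partial^2 l/\partial\vs_i\partial\vc_l]\bigr|$ and $|\III_{il}|$ are $O(1)$ almost surely. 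Hence the scalar coefficients $a_{il} := u_i^{-1}\EEE[\partial^2 l/\partial\vs_i\partial\vc_l]\sqrt n\,v_{n+l}^{-1}$ in \eqref{eq:multi concentrate1} and the random coefficients $b_{il} := u_i^{-1}\III_{il}\sqrt n\,v_{n+l}^{-1}$ in \eqref{eq:multi concentrate2} satisfy $|a_{il}|\vee|b_{il}| \lesssim e^{4\|\ttt^*\|_{\infty}}n^{-3/2}$.

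For \eqref{eq:multi concentrate1}, fix $i$ and write $S_i = \sum_{l=1}^{n-1}a_{il}(g_{n+l}-\EEE g_{n+l}) = \sum_{l=1}^{n-1}\sum_{k\neq l}a_{il}(y_{kl}-\EEE y_{kl})$. The summands are indexed by distinct edges $(k,l)$ with $l \le n-1$, $k \ne l$, hence they are mutually independent with mean zero, individually bounded by $2|a_{il}|\lesssim e^{4\|\ttt^*\|_{\infty}}n^{-3/2}$, and have total variance bounded by $\sum_{l,k}a_{il}^2\var(y_{kl})\lesssim e^{8\|\ttt^*\|_{\infty}}/n$. Bernstein's inequality with $t = Ce^{4\|\ttt^*\|_{\infty}}\sqrt{\log n/n}$ yields $\Pr(|S_i|\ge t)\le 2n^{-3}$, and a union bound over $i\in[n]$ gives the stated tail. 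The analogous bound over $j$ follows from the same argument with the roles of in-status and out-status swapped.

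For \eqref{eq:multi concentrate2}, the key structural observation is that $\III_{il}$ is a centered function of $y_{il}$ alone, whereas $g_{n+l\backslash i} - \EEE g_{n+l\backslash i} = \sum_{k\neq i,l}(y_{kl}-\EEE y_{kl})$ involves only $\{y_{kl}:k\neq i,l\}$; these two quantities are therefore independent. Writing $T_i = \sum_{l=1}^{n-1}\sum_{k\neq i,l}u_i^{-1}\sqrt{n}\,v_{n+l}^{-1}\III_{il}(y_{kl}-\EEE y_{kl})$ and conditioning on $(y_{il})_{l=1}^{n-1}$ freezes each $\III_{il}$ at a deterministic value of magnitude $O(1)$, after which the remaining randomness $\{y_{kl}:l\in[n-1],\,k\neq i,l\}$ consists of independent bounded variables. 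The conditional variance is bounded by $\sum_{l,k}c_{il}^2\III_{il}^2\var(y_{kl})\lesssim e^{8\|\ttt^*\|_{\infty}}/n$, and each summand is uniformly bounded by $C e^{4\|\ttt^*\|_{\infty}}n^{-3/2}$. A conditional Bernstein bound followed by tower-property integration and a union bound over $i\in[n]$ yields the claim; the $j$-indexed version is proved symmetrically.

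The main obstacle is the structural independence check in Step~3: one must recognize that although $\III_{il}$ is a random coefficient, its edge-dependence is disjoint from that of $g_{n+l\backslash i}-\EEE g_{n+l\backslash i}$, which is precisely what allows the conditioning argument to reduce \eqref{eq:multi concentrate2} to the same Bernstein framework as \eqref{eq:multi concentrate1}. The remainder of the argument consists of bookkeeping with the uniform bounds from Lemma~\ref{lem:var}.
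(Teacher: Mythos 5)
Your proposal is correct and rests on the same core mechanism as the paper's proof: the terms indexed by distinct $l$ (resp.\ distinct $k$) involve disjoint sets of mutually independent edges, the coefficient bounds $u_i^{-1},v_{n+l}^{-1}\lesssim e^{2\|\ttt^*\|_{\infty}}/n$ from Lemma~\ref{lem:var} together with $|\EEE[\partial^2 l(\ttt^*;\sssp)/\partial\vs_i\partial\vc_l]|=O(1)$ and $|\III_{il}|=O(1)$ give per-term variance of order $e^{8\|\ttt^*\|_{\infty}}/n^2$, and an exponential tail bound plus a union bound over the free index yields the $e^{4\|\ttt^*\|_{\infty}}\sqrt{\log n/n}$ rate with probability at least $1-2/n$. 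The only real difference is tactical: the paper keeps each $l$-indexed aggregate $u_i^{-1}\EEE[\cdot]\sqrt{n}\,v_{n+l}^{-1}(g_{n+l}-\EEE g_{n+l})$ (and, for \eqref{eq:multi concentrate2}, the product with $\III_{il}$) as a single zero-mean sub-Gaussian variable and applies Hoeffding across $l$, implicitly using exactly the independence of $\III_{il}$ from $g_{n+l\backslash i}$ that you isolate; you instead flatten to edge-level sums and, for the random-coefficient case, condition on the row-$i$ edges to freeze $\III_{il}$ before applying a conditional Bernstein bound and the tower property. Your route makes the zero-mean/independence structure of \eqref{eq:multi concentrate2} fully explicit at the cost of a bit more bookkeeping, while the paper's grouping-by-$l$ argument is shorter; both are valid and give the stated (indeed $1-O(n^{-2})$) guarantee, with the $j$-indexed statements following symmetrically in either case.
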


\begin{lemma}\label{lem:multi cent}
Under the conditions of Theorem~\ref{thm:initial},
it holds true that 
{\footnotesize $$
\max_{1\leq i\leq n}\left|\sum_{k=1}^n  u_{2n}^{-1}\EEE\left[\frac{\partial^2 l(\ttt^*;\sssp)}{\partial\vs_k\partial\vc_n}\right] \sqrt{n}(\check\vs_k - \vs_k^*) + \sum_{l=1}^{n-1}  u_i^{-1}\EEE\left[\frac{\partial^2l(\ttt^*;\sssp)}{\partial\vs_i\partial\vc_l}\right] \sqrt{n}(\check\vc_l - \vc_l^*)\right| \lesssim_P \frac{e^{20\|\ttt^*\|_{\infty}}\log n}{\sqrt{n}},
$$} and {\footnotesize $$
\max_{1\leq j\leq n-1}\left|\sum_{k=1}^n \sqrt{n} (\check\vs_k - \vs_k^*) \left\{ u_{n+j}^{-1} \EEE\left[\frac{\partial^2 l(\ttt^*;\sssp)}{\partial\vs_k\partial\vc_j}\right] - u_{2n}^{-1} \EEE\left[\frac{\partial^2 l(\ttt^*;\sssp)}{\partial\vs_k\partial\vc_n}\right] \right\} \right| \lesssim_P \frac{e^{20\|\ttt^*\|_{\infty}}\log n}{\sqrt{n}}.
$$} 
\end{lemma}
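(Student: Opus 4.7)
The plan is to plug the first-step expansion of Lemma~\ref{lem:initial decomp} into each of the two expressions and split the resulting sums into three pieces: a centered piece controlled by Lemma~\ref{lem:multi concentrate12}, a common $(g_{2n}-\EEE g_{2n})$-piece that either telescopes to a negligible term or vanishes exactly, and a residual coming from the remainder $\epsilon_k$. Throughout I would abbreviate $m_{kl}=\EEE\bigl[\partial^2 l(\ttt^*;\sssp)/\partial\vs_k\partial\vc_l\bigr]$ (with the convention $m_{ii}=0$).

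For the first claim, I would fix $i\in[n]$ and substitute $\check\vs_k-\vs_k^*=v_k^{-1}(g_k-\EEE g_k)+v_{2n}^{-1}(g_{2n}-\EEE g_{2n})+\epsilon_k$ together with $\check\vc_l-\vc_l^*=v_{n+l}^{-1}(g_{n+l}-\EEE g_{n+l})-v_{2n}^{-1}(g_{2n}-\EEE g_{2n})+\epsilon_{n+l}$. The centered piece separates into $\sqrt n\sum_k u_{2n}^{-1}m_{kn}v_k^{-1}(g_k-\EEE g_k)$, controlled by the second part of Lemma~\ref{lem:multi concentrate12} with $j=n$, and $\sqrt n\sum_l u_i^{-1}m_{il}v_{n+l}^{-1}(g_{n+l}-\EEE g_{n+l})$, controlled uniformly in $i$ by the first part; both are $\lesssim_P e^{4\|\ttt^*\|_\infty}\sqrt{\log n/n}$.

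Next, I would collect the common $(g_{2n}-\EEE g_{2n})$-piece into $\sqrt n v_{2n}^{-1}(g_{2n}-\EEE g_{2n})\bigl\{u_{2n}^{-1}\sum_k m_{kn}-u_i^{-1}\sum_{l=1}^{n-1}m_{il}\bigr\}$. The Fisher information identities $\sum_k m_{kn}=-u_{2n}$ and $\sum_{l=1}^{n-1}m_{il}=-u_i-m_{in}$ (both from the definitions of $u_{2n}$ and $u_i$, after accounting for the omitted $l=n$ term) collapse the bracket to $u_i^{-1}m_{in}$, so this piece reduces to $\sqrt n v_{2n}^{-1}(g_{2n}-\EEE g_{2n})u_i^{-1}m_{in}$, which by Lemmas~\ref{lem:var} and \ref{lem:concentrate} is uniformly $\lesssim_P e^{4\|\ttt^*\|_\infty}\sqrt{\log n}/n$. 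Finally, the $\epsilon$-residual $\sqrt n\sum_k u_{2n}^{-1}m_{kn}\epsilon_k+\sqrt n\sum_l u_i^{-1}m_{il}\epsilon_{n+l}$ I would bound by $\sqrt n\,(\max_k|\epsilon_k|)(u_{2n}^{-1}\sum_k|m_{kn}|+u_i^{-1}\sum_l|m_{il}|)$, using $|\epsilon_k|\lesssim e^{18\|\ttt^*\|_\infty}\log n/n$ from Lemma~\ref{lem:initial decomp} together with $u_{2n}^{-1}\sum_k|m_{kn}|+u_i^{-1}\sum_l|m_{il}|\lesssim e^{2\|\ttt^*\|_\infty}$ from Lemma~\ref{lem:var}; this yields the dominant contribution $e^{20\|\ttt^*\|_\infty}\log n/\sqrt n$ matching the claimed bound.

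For the second claim I would follow the same template. The centered piece is bounded uniformly in $j$ by the second part of Lemma~\ref{lem:multi concentrate12}; the common $(g_{2n}-\EEE g_{2n})$-piece has coefficient $u_{n+j}^{-1}\sum_k m_{kj}-u_{2n}^{-1}\sum_k m_{kn}=(-1)-(-1)=0$ and therefore vanishes exactly; the $\epsilon$-residual is handled verbatim. The hard part will be arranging the cancellation in the $(g_{2n}-\EEE g_{2n})$-piece of the first claim: each individual $u^{-1}\sum m\cdot v_{2n}^{-1}(g_{2n}-\EEE g_{2n})$-term is already of order $\sqrt{\log n}$, and only the structured cancellation stemming from the identifiability constraint $\beta_n=0$ brings it down to the required lower order.
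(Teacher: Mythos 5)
Your proposal is correct and follows essentially the same route as the paper's proof: substitute the expansion of Lemma~\ref{lem:initial decomp}, control the centered sums by \eqref{eq:multi concentrate1} in Lemma~\ref{lem:multi concentrate12} (the first term being the $j=n$ case), and absorb the remaining $(g_{2n}-\EEE g_{2n})$-term and the $\epsilon$-residual into a remainder of order $e^{20\|\ttt^*\|_{\infty}}\log n/\sqrt{n}$ via Lemmas~\ref{lem:var} and \ref{lem:concentrate}. Your explicit use of the identities $\sum_k \EEE[\partial^2 l/\partial\vs_k\partial\vc_n]=-u_{2n}$ and $\sum_{l=1}^{n-1}\EEE[\partial^2 l/\partial\vs_i\partial\vc_l]=-u_i-\EEE[\partial^2 l/\partial\vs_i\partial\vc_n]$ just makes transparent the cancellation that the paper performs implicitly when it collapses the $(g_{2n}-\EEE g_{2n})$-contribution to the single term $u_i^{-1}\EEE[\partial^2 l/\partial\vs_i\partial\vc_n]\sqrt{n}v_{2n}^{-1}(g_{2n}-\EEE g_{2n})$.
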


\begin{lemma}\label{lem:multi cent2}
	Under the conditions of Theorem~\ref{thm:initial},
it holds true that $$
	\max_{1\leq i\leq n}\left|\sum_{k=1}^n  u_{2n}^{-1}\III_{kn} \sqrt{n}(\check\vs_k - \vs_k^*) + \sum_{l=1}^{n-1}  u_i^{-1}\III_{il} \sqrt{n}(\check\vc_l - \vc_l^*)\right| \lesssim_P \frac{e^{20\|\ttt^*\|_{\infty}}\log n}{\sqrt{n}},
	$$ and $$
	\max_{1\leq j\leq n-1}\left|\sum_{k=1}^n \sqrt{n} (\check\vs_k - \vs_k^*) \left\{ u_{n+j}^{-1} \III_{kj} - u_{2n}^{-1} \III_{kn}\right\} \right| \lesssim_P \frac{e^{20\|\ttt^*\|_{\infty}}\log n}{\sqrt{n}}.
	$$ 
\end{lemma}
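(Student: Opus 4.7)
The plan is to mirror the argument of Lemma~\ref{lem:multi cent} but replace the deterministic means $\EEE[\partial^2 l(\ttt^*;\sssp)/\partial\vs_i\partial\vc_l]$ by the centered Hessian entries $\III_{il}$, invoking the second half of Lemma~\ref{lem:multi concentrate12}. First I would plug the decomposition from Lemma~\ref{lem:initial decomp},
$$\sqrt{n}(\check\vs_k-\vs_k^*)=\sqrt{n}v_k^{-1}(g_k-\EEE g_k)+\sqrt{n}v_{2n}^{-1}(g_{2n}-\EEE g_{2n})+\sqrt{n}\epsilon_k,$$
and the analogous one for $\sqrt{n}(\check\vc_l-\vc_l^*)$, and split the target expression into three pieces: (i) an inner product of the $\III$-factors against the per-node centered degrees; (ii) a common scalar multiple of $v_{2n}^{-1}(g_{2n}-\EEE g_{2n})$; and (iii) a residual involving $\epsilon_\cdot$.

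For piece (i), I would further isolate the single edge that couples each $\III$-factor to the relevant degree: $g_k-\EEE g_k=(g_{k\backslash n}-\EEE g_{k\backslash n})+(y_{kn}-\EEE y_{kn})$, where $g_{k\backslash n}$ is independent of $\III_{kn}$; likewise $g_{n+l}-\EEE g_{n+l}=(g_{n+l\backslash i}-\EEE g_{n+l\backslash i})+(y_{il}-\EEE y_{il})$, with $g_{n+l\backslash i}$ independent of $\III_{il}$. The ``clean'' pieces $\sum_{k}u_{2n}^{-1}\III_{kn}\sqrt{n}v_k^{-1}(g_{k\backslash n}-\EEE g_{k\backslash n})$ and $\sum_{l}u_i^{-1}\III_{il}\sqrt{n}v_{n+l}^{-1}(g_{n+l\backslash i}-\EEE g_{n+l\backslash i})$ are controlled at rate $e^{4\|\ttt^*\|_\infty}\sqrt{\log n/n}$ by Lemma~\ref{lem:multi concentrate12} (the second bound specialized to $j=n$ and the first bound uniformly in $i$, respectively). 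The single-edge leftovers like $\sum_{l}u_i^{-1}\III_{il}\sqrt{n}v_{n+l}^{-1}(y_{il}-\EEE y_{il})$ are bounded termwise using $|\III_{il}|\lesssim 1$, $|y_{il}-\EEE y_{il}|\le 2$, and $u_i^{-1},v_{n+l}^{-1}\lesssim e^{2\|\ttt^*\|_\infty}/n$ from Lemma~\ref{lem:var}, which after summing $n-1$ terms contributes only $e^{4\|\ttt^*\|_\infty}/\sqrt{n}$.

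For piece (ii), I would factor out $v_{2n}^{-1}(g_{2n}-\EEE g_{2n})$ to leave the scalar sums $\sum_{k=1}^n\III_{kn}$ (which equals $\sum_{k=1}^{n-1}\III_{kn}$ since $\III_{nn}=0$) and $\sum_{l=1}^{n-1}\III_{il}$; each is a sum of independent bounded centered random variables, so Hoeffding's inequality together with a union bound over $i$ gives each an $O_P(\sqrt{n\log n})$ control. Moreover, $g_{2n}$ depends only on $\{y_{kn}:k\le n-1\}$, which is disjoint from $\{y_{il}:l\le n-1\}$ driving $\sum_l\III_{il}$, so these two factors are independent; combining with $|g_{2n}-\EEE g_{2n}|\lesssim\sqrt{n\log n}$ from Lemma~\ref{lem:concentrate} and the scale $u^{-1}v_{2n}^{-1}\lesssim e^{4\|\ttt^*\|_\infty}/n^2$ yields a contribution of order $e^{4\|\ttt^*\|_\infty}\log n/\sqrt{n}$. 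Piece (iii) is bounded crudely by $\sqrt{n}\cdot n\cdot u_i^{-1}\cdot \max|\III|\cdot \max_k|\epsilon_k|$; using the $\epsilon$-rate $e^{18\|\ttt^*\|_\infty}\log n/n$ from Lemma~\ref{lem:initial decomp} with $u_i^{-1}\lesssim e^{2\|\ttt^*\|_\infty}/n$ produces exactly the dominant rate $e^{20\|\ttt^*\|_\infty}\log n/\sqrt{n}$ advertised on the right-hand side. The second display is proved by identical manipulations with $u_{n+j}^{-1}\III_{kj}$ and $-u_{2n}^{-1}\III_{kn}$ in place of the leading coefficients; the same independence structure carries over, and a union bound over $j\in[n-1]$ preserves the high-probability statement.

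The principal obstacle is careful bookkeeping of the stochastic dependencies: unless the $g_\cdot-\EEE g_\cdot$ factor is properly decoupled from its associated $\III$-factor by peeling off the shared single edge, any concentration argument loses a $\sqrt{n}$ factor and the claimed rate is no longer attainable. Exposing precisely this cancellation is why the decomposition must be aligned with the structure already present in the second half of Lemma~\ref{lem:multi concentrate12}.
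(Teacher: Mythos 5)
Your proposal is correct and follows essentially the same route as the paper's proof: the same Lemma~\ref{lem:initial decomp} decomposition, the same peeling of the single shared edge ($g_{k\backslash n}$, $g_{n+l\backslash i}$) so that the second half of Lemma~\ref{lem:multi concentrate12} applies (with $j=n$ for the $\III_{kn}$ block), crude termwise bounds for the single-edge leftovers and the $\epsilon$-residuals (which give the dominant $e^{20\|\ttt^*\|_{\infty}}\log n/\sqrt{n}$ rate), and concentration of the scalar sums $\sum_k\III_{kn}$, $\sum_l\III_{il}$ — which the paper controls via the Hessian concentration \eqref{eq:con3} of Lemma~\ref{lem:concentrate}, i.e.\ exactly the Hoeffding-plus-union-bound argument you invoke directly. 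The only differences are cosmetic (your independence remark in piece (ii) is not actually needed, since both factors are bounded separately with high probability), so no gap.
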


\section*{Appendix B:  main proofs}

\begin{proof}[\bf Proof of Theorem \ref{thm:initial}]
	The asymptotic bound for $\|\check\ttt-\ttt^*\|_{\infty}$ in \eqref{eq:consis0} follows from Lemma \ref{lem:initial consis} immediately. To obtain asymptotic normality, it follows from Lemma \ref{lem:initial decomp} that
	\begin{equation}\label{eqn:alpha}
	\begin{aligned}
	&\max_{1\leq i\leq n} \left| (\check \vs_i - \vs_i^*) - \left\{v_i^{-1}(g_i-\EEE g_i) + v_{2n}^{-1}(g_{2n} - \EEE g_{2n}) \right\} \right| \lesssim_P \frac{e^{18\|\ttt^*\|_{\infty}}\log n}{n}, \\
	&\max_{1\leq j\leq n-1} \left| (\check \vc_j - \vc_j^*) - \left\{v_{n+j}^{-1}(g_{n+j}-\EEE g_{n+j}) - v_{2n}^{-1}(g_{2n} - \EEE g_{2n}) \right\} \right| \lesssim_P \frac{e^{18\|\ttt^*\|_{\infty}}\log n}{n}.
	\end{aligned}
	\end{equation} 
	Define $\qq\in\R^{2n-1}$ with $q_i = v_i^{-1}(g_i-\EEE g_i) + v_{2n}^{-1}(g_{2n} - \EEE g_{2n})$ for $i=1,...,n$, and $q_{n+j} = v_{n+j}^{-1}(g_{n+j}-\EEE g_{n+j}) - v_{2n}^{-1}(g_{2n} - \EEE g_{2n})$ for $j=1,...,n-1$.	
	It is not difficult to verify that for any fixed $d$, $\qq_{[1:d]}$ is asymptotically multivariate normal with mean zero and covariance matrix $\SG_{[1:d,1:d]}$. By \eqref{eqn:alpha} and Lemma~\ref{lem:var}, we have $$
	\|(\SG_{[1:d,1:d]})^{-\frac{1}{2}}(\check\ttt-\ttt^* - \qq)_{[1:d]}\|_{\infty} \lesssim_P \sqrt{n}e^{\|\ttt^*\|_{\infty}}\frac{e^{18\|\ttt^*\|_{\infty}}\log n}{n}= \frac{e^{19\|\ttt^*\|_{\infty}}\log n}{\sqrt{n}}  = o(1),
	$$ which completes the proof of Theorem \ref{thm:initial}.
\end{proof}

\begin{proof}[\bf Proof of Theorem~\ref{thm:normality}]
By Taylor's expansion, there exists a $\widetilde\ttt$ such that $$
\frac{\partial l(\check\ttt;\sssp)}{\partial\ttt} = \frac{\partial l(\ttt^*;\sssp)}{\partial\ttt} + \frac{\partial^2 l(\widetilde\ttt;\sssp)}{\partial\ttt^2}(\check\ttt-\ttt^*),
$$ where $\widetilde \theta_i$ lies between $\theta_i^*$ and $\check\theta_i$ for $i \in [2n-1]$ and $\widetilde \theta_{2n}=\theta_{2n}^*=\check\theta_{2n}=0$. It is not difficult to verify that $\partial l(\ttt;\sssp)/\partial\vc_n = \sum_{k=1}^{n}\partial l(\ttt;\sssp)/\partial\vs_k - \sum_{l=1}^{n-1}\partial l(\ttt;\sssp)/\partial\vc_l$. 
Then, we have $$
\begin{aligned}
	\widehat\vs_i - \vs_i^* =&(\check\vs_i - \vs_i^*) + \check u_i^{-1} \frac{\partial l(\check\ttt;\sssp)}{\partial \vs_i} + \check u_{2n}^{-1}\frac{\partial l(\check\ttt;\sssp)}{\partial\vc_n} \\
	=& (\check\vs_i - \vs_i^*) + \check u_i^{-1}\left\{ \frac{\partial l(\ttt^*;\sssp)}{\partial\vs_i} + \frac{\partial^2 l(\widetilde\ttt;\sssp)}{\partial\vs_i^2} (\check\vs_i-\vs_i^*) + \sum_{j=1}^{n-1}  \frac{\partial^2 l(\widetilde\ttt;\sssp)}{\partial\vs_i\partial\vc_j} (\check\vc_j-\vc_j^*)  \right\} \\
	&+\check u_{2n}^{-1} \left\{ \frac{\partial l(\ttt^*;\sssp)}{\partial\vc_n} + \sum_{k=1}^{n}  \frac{\partial^2 l(\widetilde\ttt;\sssp)}{\partial\vs_k\partial\vc_n} (\check\vs_k-\vs_k^*) \right\}\\
	=& \left\{ \check u_i^{-1} \frac{\partial l(\ttt^*;\sssp)}{\partial \vs_i} + \check u_{2n}^{-1}\frac{\partial l(\ttt^*;\sssp)}{\partial \vc_n} \right\} + (\check\vs_i-\vs_i^*)\left\{ 1 + \check u_i^{-1}\frac{\partial^2 l(\widetilde\ttt;\sssp)}{\partial\vs_i^2} \right\} \\
	&+ \left\{\check u_{2n}^{-1} \sum_{k=1}^n \frac{\partial^2 l(\widetilde\ttt;\sssp)}{\partial\vs_k\partial\vc_n} (\check\vs_k - \vs_k^*) + \check u_i^{-1}\sum_{j=1}^{n-1}\frac{\partial^2l(\widetilde\ttt;\sssp)}{\partial\vs_i\partial\vc_j} (\check\vc_j - \vc_j^*)\right\}\\
	=:& I_1^{(i)}+I_2^{(i)}+I_3^{(i)}.
\end{aligned}
$$

For $I_1^{(i)}$, we have
$$
\begin{aligned}
	I_1^{(i)} =& \left\{ u_i^{-1} \frac{\partial l(\ttt^*;\sssp)}{\partial \vs_i} + u_{2n}^{-1}\frac{\partial l(\ttt^*;\sssp)}{\partial \vc_n} \right\} + \left[ -u_i^{-1} + \check u_i^{-1} \right]\frac{\partial l(\ttt^*;\sssp)}{\partial \vs_i} + \left[ \check u_{2n}^{-1} - u_{2n}^{-1} \right]\frac{\partial l(\ttt^*;\sssp)}{\partial \vc_n}.
\end{aligned}
$$ 
The first term can be rewritten as
$$
\begin{aligned}
& u_i^{-1} \frac{\partial l(\ttt^*;\sssp)}{\partial \vs_i} + u_{2n}^{-1}\frac{\partial l(\ttt^*;\sssp)}{\partial \vc_n}\\
=& \left\{u_i^{-1} \frac{\partial l(\ttt^*;\sssp)}{\partial \vs_i} + u_{2n}^{-1}\sum_{k=1,k\neq i}^{n-1}l'_{kn}(\vs_k^*+\vc_n^*;\ssp_k)\right\} + u_{2n}^{-1}l'_{in}(\vs_i^*+\vc_n^*;\ssp_i)\\
=:&  I_{11}^{(i)} + I_{12}^{(i)}
\end{aligned}
$$
By central limit theorem,  $I_{11}^{(i)}$ is the sum of two independent variables which are asymptotic normal with variances $u_i^{-1}$ and $u_{2n}^{-1},$ respectively. Therefore, $I_{11}^{(i)}$ is also asymptotic normal with variance $u_i^{-1} + u_{2n}^{-1}.$ Also, Lemma~\ref{lem:var} implies that $\max_{1\leq i\leq n}|I_{12}^{(i)}| = O(e^{2\|\ttt^*\|_{\infty}}n^{-1}) = o(n^{-1/2}).$ Furthermore, it follows from Lemma \ref{lem:concentrate} and \eqref{eq:inverse} that 
$$
\max_{1\leq i\leq n} \left |  \left[ u_i^{-1} - \check u_i^{-1} \right]\frac{\partial l(\ttt^*;\sssp)}{\partial \vs_i} + \left[ \check u_{2n}^{-1} - u_{2n}^{-1} \right]\frac{\partial l(\ttt^*;\sssp)}{\partial \vc_n} \right | \lesssim_P \frac{e^{10\|\ttt^*\|_{\infty}}\log n}{n},
$$ which is of order $o_p(n^{-1/2})$.
Therefore, we have $\sqrt{u_i^{-1} + u_{2n}^{-1}} I_1^{(i)} \to N(0,1)$ as $n$ diverges.

For $I_2^{(i)},$ it follows from Lemmas~\ref{lem:var} and \ref{lem:initial consis} that
$$
	\max_{1\leq i\leq n}|I_2^{(i)}| \leq \max_{1\leq i\leq n}|\check\vs_i - \vs_i^*| \left| \frac{\check u_i + \frac{\partial^2 l(\widetilde \ttt)}{\partial \vs_i^2}}{\check u_i} \right| \lesssim_P e^{6\|\ttt^*\|_{\infty}}\sqrt{\frac{\log n}{n}}\frac{e^{6\|\ttt^*\|_{\infty}}\sqrt{n\log n}}{e^{-2\|\ttt^*\|_{\infty}}n} = \frac{e^{14\|\ttt^*\|_{\infty}}\log n}{n},
$$ which is of order $o_p(n^{-1/2})$.

For $I_3^{(i)},$ we have $$
\begin{aligned}
	I_3^{(i)} =& \left\{\sum_{k=1}^n \check u_{2n}^{-1}\EEE\left[\frac{\partial^2 l(\ttt^*;\sssp)}{\partial\vs_k\partial\vc_n}\right] (\check\vs_k - \vs_k^*) + \sum_{j=1}^{n-1}  \check u_{i}^{-1}\EEE\left[\frac{\partial^2l(\ttt^*;\sssp)}{\partial\vs_i\partial\vc_j}\right] (\check\vc_j - \vc_j^*)\right\}\\
	&+ \left\{ \sum_{k=1}^n \check u_{2n}^{-1}\III_{kn} (\check\vs_k - \vs_k^*) + \sum_{j=1}^{n-1}  \check u_{i}^{-1}\III_{ij}(\check\vc_j - \vc_j^*) \right\} \\
	&+ \left\{ \sum_{k=1}^n \check u_{2n}^{-1}\left(\frac{\partial^2 l(\widetilde\ttt;\sssp)}{\partial\vs_k\partial\vc_n}-\frac{\partial^2 l(\ttt^*;\sssp)}{\partial\vs_k\partial\vc_n}\right) (\check\vs_k - \vs_k^*) + \sum_{j=1}^{n-1}  \check u_{i}^{-1}\left(\frac{\partial^2l(\widetilde\ttt;\sssp)}{\partial\vs_i\partial\vc_j}-\frac{\partial^2l(\ttt^*;\sssp)}{\partial\vs_i\partial\vc_j}\right)(\check\vc_j - \vc_j^*) \right\} \\
	=:& I_{31}^{(i)} + I_{32}^{(i)} + I_{33}^{(i)}. 
\end{aligned}
$$ 

Note that $|I_{31}^{(i)}|$ can be further bounded as 
$$
\begin{aligned}
	|I_{31}^{(i)}| &\le \left | \sum_{k=1}^n  u_{2n}^{-1}\EEE\left[\frac{\partial^2 l(\ttt^*;\sssp)}{\partial\vs_k\partial\vc_n}\right] (\check\vs_k - \vs_k^*) + \sum_{j=1}^{n-1}  u_i^{-1}\EEE\left[\frac{\partial^2l(\ttt^*;\sssp)}{\partial\vs_i\partial\vc_j}\right](\check\vc_j - \vc_j^*)\right | \\
	&+ \left|\sum_{k=1}^n  \left[\check u_{2n}^{-1}-u_{2n}^{-1}\right]\EEE\left[\frac{\partial^2 l(\ttt^*;\sssp)}{\partial\vs_k\partial\vc_n}\right](\check\vs_k - \vs_k^*) + \sum_{j=1}^{n-1}  \left[\check u_{i}^{-1}-u_i^{-1}\right]\EEE\left[\frac{\partial^2l(\ttt^*;\sssp)}{\partial\vs_i\partial\vc_j}\right](\check\vc_j - \vc_j^*)\right|.
\end{aligned}
$$ 

By Lemma \ref{lem:multi cent}, the first term is upper bounded by $e^{20\|\ttt^*\|_{\infty}}\log n/n$ in probability, and by Lemma~\ref{lem:initial consis} and \eqref{eq:inverse}, the second term is upper bounded by $e^{16\|\ttt^*\|_{\infty}}\log n/n$ in probability. Therefore, we have $\max_{1\leq i\leq n}|I_{31}^{(i)}| \lesssim_P e^{20\|\ttt^*\|_{\infty}}\log n / n$.

For $I_{32}^{(i)},$ we have $$
\begin{aligned}
	|I_{32}^{(i)}| \le & \left | \sum_{k=1}^n u_{2n}^{-1}\III_{kn} (\check\vs_k - \vs_k^*) + \sum_{j=1}^{n-1}  u_i^{-1}\III_{ij}(\check\vc_j - \vc_j^*)\right |\\
	& + \left | \sum_{k=1}^n \left[\check u_{2n}^{-1} - u_{2n}^{-1} \right]\III_{kn} (\check\vs_k - \vs_k^*) + \sum_{j=1}^{n-1}  \left[\check u_i^{-1} - u_i^{-1}\right]\III_{ij}(\check\vc_j - \vc_j^*)\right |.
\end{aligned}
$$ 
By Lemma \ref{lem:multi cent2}, the first term is upper bounded by $e^{20\|\ttt^*\|_{\infty}}\log n/n$ in probability, and by Lemma \ref{lem:initial consis} and \eqref{eq:inverse},the second term is upper bounded by $e^{16\|\ttt^*\|_{\infty}}\log n/n$ in probability. Therefore, we have $\max_{1\leq i\leq n}|I_{32}^{(i)}| \lesssim_P e^{20\|\ttt^*\|_{\infty}}\log n / n$.

For $I_{33}^{(i)},$ it follows from Lemma \ref{lem:var}, Lemma \ref{lem:initial consis} and \eqref{eq:inverse} that $\max_{1\leq i\leq n}|I_{33}^{(i)}| \lesssim_{P} e^{14\|\ttt^*\|_{\infty}}\log n / n$. Then, combining the upper bounds for each term and Lemma~\ref{lem:concentrate} yields that
\begin{equation}\label{eq:vs residual}
\max_{1\leq i\leq n}\left|(\widehat \vs_i - \vs_i^*) -\left\{ u_i^{-1}\frac{\partial l(\ttt^*;\sssp)}{\partial \vs_i} + u_{2n}^{-1}\frac{\partial l(\ttt^*;\sssp)}{\partial\vc_n} \right\}  \right| \lesssim_P \frac{e^{20\|\ttt^*\|_{\infty}}\log n}{n},
\end{equation}
where $u_i^{-1}\frac{\partial l(\ttt^*;\sssp)}{\partial \vs_i} + u_{2n}^{-1}\frac{\partial l(\ttt^*;\sssp)}{\partial\vc_n}$ is asymptotic normal with variance $u_i^{-1} + u_{2n}^{-1}.$

Next, we turn to bound $\widehat\vc_j - \vc_j^*$, where
$$
\begin{aligned}
	\widehat\vc_j - \vc_j^*=& (\check\vc_j - \vc_j^*) +\check u_{n+j}^{-1} \frac{\partial l(\check\ttt;\sssp)}{\partial \vc_j} - \check u_{2n}^{-1}\frac{\partial l(\check\ttt;\sssp)}{\partial\vc_n}  \\
	=& (\check\vc_j - \vc_j^*) + \check u_{n+j}^{-1}\left\{ \frac{\partial l(\ttt^*;\sssp)}{\partial\vc_j} + \frac{\partial^2 l(\widetilde\ttt;\sssp)}{\partial\vc_j^2} (\check\vc_j-\vc_j^*) + \sum_{k=1}^{n}  \frac{\partial^2 l(\widetilde\ttt;\sssp)}{\partial\vs_k\partial\vc_j} (\check\vs_k-\vs_k^*)  \right\} \\
	&-\check u_{2n}^{-1}\left\{ \frac{\partial l(\ttt^*;\sssp)}{\partial\vc_n} + \sum_{k=1}^{n}  \frac{\partial^2 l(\widetilde\ttt;\sssp)}{\partial\vs_k\partial\vc_n} (\check\vs_k-\vs_k^*) \right\}\\
	=& \left\{ \check u_{n+j}^{-1} \frac{\partial l(\ttt^*;\sssp)}{\partial \vc_j} - \check u_{2n}^{-1}\frac{\partial l(\ttt^*;\sssp)}{\partial \vc_n} \right\} + (\check\vc_j-\vc_j^*)\left\{ 1+\check u_{n+j}^{-1}\frac{\partial^2 l(\widetilde\ttt;\sssp)}{\partial\vc_j^2} \right\} \\
	&+ \left\{\sum_{k=1}^n (\check\vs_k - \vs_k^*)\left[ -\check u_{2n}^{-1} \frac{\partial^2 l(\widetilde\ttt;\sssp)}{\partial\vs_k\partial\vc_n} + \check u_{n+j}^{-1} \frac{\partial^2 l(\widetilde\ttt;\sssp)}{\partial\vs_k\partial\vc_j} \right]\right\}\\
	=:& J_1^{(j)}+J_2^{(j)}+J_3^{(j)}.
\end{aligned}
$$ 
Similarly as the case of $\widehat\vs_i - \vs_i^*,$ we have 
$$
J_1^{(j)} = \left\{ u_{n+j}^{-1}\frac{\partial l(\ttt^*;\sssp)}{\partial \vc_j} - u_{2n}^{-1}\frac{\partial l(\ttt^*;\sssp)}{\partial\vc_n} \right\} + \left[ -u_{n+j}^{-1} + \check u_{n+j}^{-1} \right]\frac{\partial l(\ttt^*;\sssp)}{\partial \vc_j} + \left[ -\check u_{2n}^{-1} + u_{2n}^{-1} \right]\frac{\partial l(\ttt^*;\sssp)}{\partial \vc_n}, 
$$
where it follows from Lemma \ref{lem:concentrate} and \eqref{eq:inverse} that 
$$
\max_{1\leq j \leq n-1} \left | \left[ -u_{n+j}^{-1} + \check u_{n+j}^{-1} \right]\frac{\partial l(\ttt^*;\sssp)}{\partial \vc_j} + \left[ - \check u_{2n}^{-1} + u_{2n}^{-1} \right]\frac{\partial l(\ttt^*;\sssp)}{\partial \vc_n} \right | \lesssim_P \frac{e^{10\|\ttt^*\|_{\infty}}\log n}{n}.
$$ 
Further, we also have
$$
	\max_{1\leq j\leq n-1}|J_2^{(j)}| \lesssim_P  \frac{e^{14\|\ttt^*\|_{\infty}}\log n}{n}, \ \mbox{and} \ \max_{1\leq j\leq n-1}|J_3^{(j)}| \lesssim_P  \frac{e^{20\|\ttt^*\|_{\infty}}\log n}{n}.
$$

Combing all the results together and Lemma~\ref{lem:concentrate}, 
we have \begin{equation}\label{eq:vc residual}
\left|(\widehat \vc_j - \vc_j^*) - \left\{ u_{n+j}^{-1}\frac{\partial l(\ttt^*;\sssp)}{\partial \vc_j} - u_{2n}^{-1}\frac{\partial l(\ttt^*;\sssp)}{\partial\vc_n} \right\} \right| \lesssim_P  \frac{e^{20\|\ttt^*\|_{\infty}}\log n}{n}
\end{equation}
where central limit theorem implies that $u_{n+j}^{-1}\frac{\partial l(\ttt^*;\sssp)}{\partial \vc_j} - u_{2n}^{-1}\frac{\partial l(\ttt^*;\sssp)}{\partial\vc_n}$ is asymptotic normal with variance $u_{n+j}^{-1} + u_{2n}^{-1}.$ 

Define $\rr\in\R^{2n-1}$, where $r_i = u_i^{-1}\frac{\partial l(\ttt^*;\sssp)}{\partial \vs_i} + u_{2n}^{-1}\frac{\partial l(\ttt^*;\sssp)}{\partial\vc_n}$ for $i\in[n]$, and $r_{n+j} = u_{n+j}^{-1}\frac{\partial l(\ttt^*;\sssp)}{\partial \vc_j} - u_{2n}^{-1}\frac{\partial l(\ttt^*;\sssp)}{\partial\vc_n}$ for $j\in[n-1]$. It is not difficult to verify that for any fixed $d$, $\rr_{[1:d]}$ is asymptotically multivariate normal with mean zero and covariance matrix $\HH_{[1:d,1:d]}$.
By \eqref{eq:vs residual} and \eqref{eq:vc residual}, we obtain $\|\widehat\ttt - \ttt^* - \rr\|_{\infty} = o_p(n^{-1/2})$, and thus complete the proof of asymptotical normality.

	By Lemma~\ref{lem:var} and \eqref{eq:con2} in Lemma~\ref{lem:concentrate}, we have $$
\max_{1\leq i\leq 2n} \left| u_i^{-1}\frac{\partial l(\ttt^*;\sssp)}{\partial \theta_i} \right| \lesssim_P e^{2\|\ttt^*\|_{\infty}}\sqrt{\frac{\log n}{n}}.
$$ Then, \eqref{eq:consis1} is obtained from \eqref{eq:vs residual} and \eqref{eq:vc residual}.
\end{proof}

\begin{proof}[\bf Proof of Proposition~\ref{prop:var compare}] Let $\bar\vs_i$ be the solution of the estimation equation 
	$$
	d_i - \sum_{k=1,k\neq i}^n g(\vc_j^*-\vs_i) = 0.
	$$ 
	It follows from Theorem~\ref{thm:initial} that the asymptotic variance of $\bar\vs_i$ is $w_i (v_i^*)^{-2}$. But by the Cramer-Rao bound, we also have 
	$$
	\var(\bar\vs_i) \geq \left ( -\EEE\left[\frac{\partial^2 l(\ttt^*)}{\partial\vs_i^2}\right] \right)^{-1}= -(u_i^*)^{-1}.
	$$ 
	By letting $n\to\infty,$ we get $w_i (v_i^*)^{-2} \geq |u_i^*|^{-1}$ for any $i \in [n]$. Similarly, we can also get $w_i (v_i^*)^{-2} \geq |u_i^*|^{-1}$ for any $i=n+1,...,2n.$ Therefore, for any $i \in [2n]$, we have
	$$
	\frac{\avar(\check\theta_i)}{\avar(\widehat\theta_i)} = \frac{w_i (v_i^*)^{-2}+w_{2n} (v_{2n}^*)^{-2}}{|u_i^*|^{-1}+|u_{2n}^*|^{-1}} \geq \min\left\{ \frac{w_i (v_i^*)^{-2}}{|u_i^*|^{-1}}, \frac{w_{2n} (v_{2n}^*)^{-2}}{|u_{2n}^*|^{-1}} \right\} \geq 1.
	$$ 
	This completes the proof of Proposition~\ref{prop:var compare}.
\end{proof}

\begin{proof}[\bf Proof of Theorem~\ref{thm:ranking}]

	By \eqref{eq:vs residual} and \eqref{eq:vc residual}, we have $$
	\begin{aligned}
		(\widehat\vs_i - \widehat\vs_j) - (\vs_i^*- \vs_j^*) &= \left\{ u_i^{-1}\frac{\partial l(\ttt^*;\sssp)}{\partial \vs_i} - (u_j^*)^{-1}\frac{\partial l(\ttt^*;\sssp)}{\partial \vs_j} \right\} + o_p(n^{-\frac{1}{2}}),\\
		(\widehat\vc_i - \widehat\vc_j) - (\vc_i^*- \vc_j^*) &= \left\{ (u_{n+i}^*)^{-1}\frac{\partial l(\ttt^*;\sssp)}{\partial \vc_i} - u_{n+j}^{-1}\frac{\partial l(\ttt^*;\sssp)}{\partial \vc_j} \right\} + o_p(n^{-\frac{1}{2}}),
	\end{aligned}
	$$ 
	which immediately imply \eqref{eq:rank normality}.

 By \eqref{eq:consis1}, similar to the derivation of \eqref{eq:inverse}, 
	we have 
	$$
	\max_{1\leq i\leq 2n} \left|\widehat u_i^{-1} - u_i^{-1} \right| \lesssim_P e^{6\|\ttt^*\|_{\infty}}\frac{\sqrt{\log n}}{n^{3/2}},
	$$ 
	which, together with Lemma~\ref{lem:concentrate}, implies that 
	\begin{equation}\label{eq:inverse2}
	\begin{aligned}
	& \max_{i\neq j}|(\delta_{ij}^*)^{-1} - \widehat\delta_{ij}^{-1}| =\max_{i\neq j}\left| \frac{\widehat\delta_{ij}^2 - (\delta_{ij}^*)^2}{\delta_{ij}^*\widehat\delta_{ij}(\widehat\delta_{ij}+\delta^*_{ij})} \right| \\ = & \max_{i\neq j}\left| \frac{[u_i^{-1}-\widehat u_i^{-1}] + [(u_{j}^*)^{-1} - \widehat u_j^{-1}]}{\delta_{ij}^*\widehat\delta_{ij}(\widehat\delta_{ij}+\delta^*_{ij})} \right| \lesssim_P \frac{e^{6\|\ttt^*\|_{\infty}}n^{-3/2}\sqrt{\log n}}{n^{-3/2}} = e^{6\|\ttt^*\|_{\infty}}\sqrt{\log n}.
	\end{aligned}
	\end{equation} 
	Simple algebra yields that
	\begin{align*}
		&\widehat\delta_{ij}^{-1} \big[(\widehat\vs_i - \widehat\vs_j) - (\vs_i^*- \vs_j^*)\big]\\
		= & (\delta_{ij}^*)^{-1} \left[(\widehat\vs_i - \widehat\vs_j) - (\vs_i^*- \vs_j^*)\right] + ( \widehat\delta_{ij}^{-1} - (\delta_{ij}^*)^{-1}) \left[(\widehat\vs_i - \widehat\vs_j) - (\vs_i^*- \vs_j^*)\right],
	\end{align*}
	where the first term converges to $N(0,1)$ in distribution following \eqref{eq:rank normality}, and the second one is bounded by $O_p(e^{8\|\ttt^*\|_{\infty}}\log n / \sqrt{n}) = o_p(1)$ following \eqref{eq:consis1} and \eqref{eq:inverse2}. The case for $\widehat\delta_{n+i,n+j}^{-1}\big[(\widehat\vc_i - \widehat\vc_j) - (\vc_i^*- \vc_j^*)\big]$ is similar. This completes the proof of Theorem \ref{thm:ranking}. \end{proof}


\begin{proof}[\bf Proof of Theorem~\ref{thm:fdr}]
	First, by the definition of FDR, we have
	\begin{equation}\label{eq:fdr decomp}
		\begin{aligned}
			& \FDR = \EEE\left[ \frac{1}{r}\sum_{k\in\SSSS_0}1_{\{p_k\leq\frac{\alpha r}{KL}\}}1_{\{r>0\}} \right] = \sum_{k\in\SSSS_0}\EEE\left[\frac{1}{r}1_{\{p_k\leq\frac{\alpha r}{KL}\}}1_{\{r>0\}}\right]\\
			=&\sum_{l=1}^{\infty}\frac{1}{l(l+1)}\sum_{k\in\SSSS_0}\EEE\left[1_{\{p_k\leq\frac{\alpha r}{KL}\}}1_{\{0<r\leq l\}}\right] \leq \sum_{l=1}^{\infty}\frac{1}{l(l+1)}\sum_{k\in\SSSS_0}\Pr\left(p_k\leq\frac{\alpha \min(l,K)}{KL}\right),
		\end{aligned}
	\end{equation}
	where the third equality is due to that for each $r_0>0$, $$
	\begin{aligned}
	\frac{1}{r_0}1_{\{p_k \leq\frac{\alpha r_0}{KL}\}}1_{\{r_0>0\}} =& 1_{\{p_k \leq\frac{\alpha r_0}{KL}\}}1_{\{r_0>0\}}  \sum_{l\geq r_0}\left(\frac{1}{l} - \frac{1}{l+1}\right) = 1_{\{p_k \leq\frac{\alpha r_0}{KL}\}}1_{\{r_0>0\}} \sum_{l= r_0}^{\infty}\frac{1}{l(l+1)}\\ =& 1_{\{p_k \leq\frac{\alpha r_0}{KL}\}} \sum_{l=1}^{\infty}\frac{1}{l(l+1)} 1_{\{0<r_0\leq l\}}.
	\end{aligned}
	$$
	It thus suffices to establish an upper bound for $\max_{k \in \SSSS_0} \Pr\left(p_k\leq\frac{\alpha \min(l,K)}{KL}\right).$
	
	Note that for any $k\in\SSSS_0,$ it follows from \eqref{eq:vs residual} that
	$$
	\widehat\vs_i-\widehat\vs_k = V_{ik} + (\omega_i-\omega_k)~~\text{with}~~ \max_{k\in\SSSS_0} |\omega_i - \omega_k| \lesssim_P e^{20\|\ttt^*\|_{\infty}}\frac{\log n}{n},
	$$ 
	where $V_{ik} =  u_i^{-1}\frac{\partial l(\ttt^*;\sssp)}{\partial \vs_i} - u_k^{-1}\frac{\partial l(\ttt^*;\sssp)}{\partial \vs_k}$ and $
	\omega_i = (\widehat \vs_i - \vs_i^*) -\left\{ u_i^{-1}\frac{\partial l(\ttt^*;\sssp)}{\partial \vs_i} + u_{2n}^{-1}\frac{\partial l(\ttt^*;\sssp)}{\partial\vc_n} \right\}.$ Then, we have			
	$$
		\begin{aligned}
			&\max_{k\in\SSSS_0}\Pr\left(p_k\leq\frac{\alpha \min(l,K)}{KL}\right)\\
			=& \max_{k\in\SSSS_0}\Pr\left(2\left[1 - \Phi\left(\widehat\delta_{ik}^{-1}|\widehat\vs_i-\widehat\vs_k|\right)\right] \leq \frac{\alpha \min(l,K)}{KL}\right)\\
			=&\max_{k\in\SSSS_0} \Pr\left(\widehat\delta_{ik}^{-1}\big |V_{ik} + (\omega_i-\omega_k) \big |
			\geq \Phi^{-1}\left(1-\frac{\alpha \min(l,K)}{2KL}\right)\right)\\
			\leq& \max_{k\in\SSSS_0}\Pr\left((\delta_{ik}^*)^{-1}| V_{ik} |
			\geq \Phi^{-1}\left(1-\frac{\alpha \min(l,K)}{2KL}\right) - \mu_{ik} \right) \\
			\leq&\max_{k\in\SSSS_0}\Pr\left((\delta_{ik}^*)^{-1}| V_{ik} |
			\geq \Phi^{-1}\left(1-\frac{\alpha \min(l,K)}{2KL}\right) - \frac{e^{20\|\ttt^*\|_{\infty}}(\log n)^2}{\sqrt{n}}\right) + \max_{k\in\SSSS_0}\Pr \Big (\mu_{ik} \geq \frac{e^{20\|\ttt^*\|_{\infty}}(\log n)^2}{\sqrt{n}} \Big ),
		\end{aligned}
	$$
	where $\mu_{ik} = \left| \left\{(\delta_{ik}^*)^{-1} - \widehat\delta_{ik}^{-1}\right\} V_{ik}\right| + \left| \widehat\delta_{ik}^{-1}(\omega_{i}-\omega_{k}) \right|$ .
	
	Note that $(\delta_{ik}^*)^{-1}V_{ik} \to N(0,1)$ in distribution following \eqref{eq:rank normality}, then it follows from the Berry–Esseen theorem that
	$$
	\max_{k\in\SSSS_0}\sup_{t\in\R}\big | \Pr\left( (\delta_{ik}^*)^{-1}V_{ik} \leq t \right) - \Phi(t) \big | = O\left(\frac{1}{\sqrt{n}}\right).
	$$ 
   It further implies that
	\begin{equation}\label{eq:fdr 1term}
		\begin{aligned}
			&\max_{k\in\SSSS_0}\Pr\left((\delta_{ik}^*)^{-1} | V_{ik} |
			\geq \Phi^{-1}\left(1-\frac{\alpha \min(l,K)}{2KL}\right) - \frac{e^{20\|\ttt^*\|_{\infty}}(\log n)^2}{\sqrt{n}} \right) \\
			\leq& 2\left\{1 - \Phi\left[  \Phi^{-1}\left(1-\frac{\alpha \min(l,K)}{2KL}\right) - \frac{(\log n)^2}{\sqrt{n}} \right] \right\} + O\left(\frac{1}{\sqrt{n}}\right) \\
			=&\frac{\alpha \min(l,K)}{KL} + O\left(\frac{e^{20\|\ttt^*\|_{\infty}}(\log n)^2}{\sqrt{n}}\right).
		\end{aligned}
	\end{equation}
	
	Also, it follows from \eqref{eq:inverse2} that
	\begin{align*}
	& \max_{k\in\SSSS_0}\Pr\left(\mu_{ik} \geq \frac{(\log n)^2}{\sqrt{n}}\right) \\
	\leq & \max_{k\in\SSSS_0}\Pr\left(\left|(\delta_{ik}^*)^{-1} - \widehat\delta_{ik}^{-1}\right| \left\{ | V_{ik} |+\left|\omega_{i}-\omega_{k}\right|\right\} + 
		\left| (\delta_{ik}^*)^{-1}(\omega_{i}-\omega_{k}) \right| \geq \frac{e^{20\|\ttt^*\|_{\infty}}(\log n)^2}{\sqrt{n}}\right).
	\end{align*}
	But it follows from \eqref{eq:vs residual} and \eqref{eq:inverse2} that 
	 $$
	\max_{k\in\SSSS_0}\left|(\delta_{ik}^*)^{-1} - \widehat\delta_{ik}^{-1}\right| \left\{|V_{ik} |+\left|\omega_{i}-\omega_{k}\right|\right\} + 
	\left| (\delta_{ik}^*)^{-1}(\omega_{i}-\omega_{k}) \right| \lesssim_P \frac{e^{20\|\ttt^*\|_{\infty}}\log n}{\sqrt{n}}.
	$$ 
	Therefore, $\max_{k\in\SSSS_0}\Pr\left(\mu_{ik} \geq  e^{20\|\ttt^*\|_{\infty}}(\log n)^2/\sqrt{n}\right) = O\left(\frac{1}{n}\right)$ following Lemma~\ref{lem:concentrate}.
		 
	Putting the above results together, we obtain that 
	$$
	\begin{aligned}
		\max_{k\in\SSSS_0}\Pr\left(p_k\leq\frac{\alpha \min(l,K)}{KL}\right) &= \frac{\alpha \min(l,K)}{KL} + O\left( \frac{e^{20\|\ttt^*\|_{\infty}}(\log n)^2}{\sqrt{n}} \right) + O \left (\frac{1}{n}\right)\\ 
		&= \frac{\alpha \min(l,K)}{KL} + O\left(\frac{e^{20\|\ttt^*\|_{\infty}}(\log n)^2}{\sqrt{n}}\right).
	\end{aligned}
	$$ 
	It then implies that
	$$
	\begin{aligned}
		\FDR &\leq \sum_{l=1}^{\infty}\frac{1}{l(l+1)}\sum_{k\in\SSSS_0}\Pr\left(p_k\leq\frac{\alpha \min(l,K)}{KL}\right) \\
		&= \sum_{k\in\SSSS_0}\left\{\sum_{l=1}^{K}\frac{1}{l(l+1)}\Pr\left(p_k\leq\frac{\alpha l}{KL}\right)+ \sum_{l=K+1}^{\infty}\frac{1}{l(l+1)}\Pr\left(p_k\leq\frac{\alpha }{L}\right)\right\} \\
		&= \sum_{k\in\SSSS_0}\left\{\sum_{l=1}^{K}\frac{1}{l(l+1)}\left[\frac{\alpha l}{KL}+O\left(\frac{e^{20\|\ttt^*\|_{\infty}}(\log n)^2}{\sqrt{n}}\right)\right]+ \sum_{l=K+1}^{\infty}\frac{1}{l(l+1)}\left[ \frac{\alpha}{L} + O\left(\frac{e^{20\|\ttt^*\|_{\infty}}(\log n)^2}{\sqrt{n}}\right) \right]\right\} \\
		&\leq \frac{\alpha K_0}{K} + \frac{\alpha K_0}{KL} + O\left(\frac{e^{20\|\ttt^*\|_{\infty}}K_0(\log n)^2}{\sqrt{n}}\right).
	\end{aligned}
	$$ 
	As $e^{20\|\ttt^*\|_{\infty}}K_0n^{-1/2}(\log n)^2 = o(1),$ the desired upper bound on $\FDR$ follows immediately.	\end{proof}

\section*{Appendix C: proof of lemmas}

\begin{proof}[\bf Proof of Lemma~\ref{lem:var}] 
First, simple algebra yields that
	\begin{equation}
	\label{eqn:lemma1}
	\begin{aligned}
		u_i &= -\EEE \left[ \frac{\partial^2 l(\ttt^*;\sssp)}{\partial \vs_i^2} \right] = -\sum_{j=1,j\neq i}^n \EEE\left[\frac{\partial^2 l(\ttt^*;\sssp)}{\partial\vs_i\partial\vc_j}\right],~\mbox{for}~i \in [n],\\
		u_{n+j} &= -\EEE \left[ \frac{\partial^2 l(\ttt^*;\sssp)}{\partial \vc_j^2} \right] = -\sum_{i=1,i\neq j}^n \EEE\left[\frac{\partial^2 l(\ttt^*;\sssp)}{\partial\vs_i\partial\vc_j}\right],~\mbox{for}~j \in [n].
	\end{aligned}
	\end{equation}
	It is shown that $\EEE[-\frac{\partial^2 l(\ttt;\sssp)}{\partial \vs_i\partial\vc_j}] = \frac{e^{\vs_i+\vc_j}}{(1+e^{\vs_i+\vc_j})^2}\left[ 2- \frac{(1-\ssp_i)e^{\vs_i+\vc_j}}{e^{\vs_i+\vc_j}+(1-\ssp_i)} - \frac{1-\ssp_i}{e^{\vs_i+\vc_j}+\frac{1-\ssp_i}{1+\ssp_i}} \right]$ for any $i\neq j$. 
Also, we have $\max_{i,j}|\vs_i^*+\vc_j^*| \leq 2\|\ttt^*\|_{\infty}$. It can be verified that there exists a positive constant $c$ such that 
$$
c^{-1}e^{-2\|\ttt^*\|_{\infty}} \le \frac{e^{\vs_i+\vc_j}}{(1+e^{\vs_i+\vc_j})^2}\left[ 2- \frac{(1-\ssp_i)e^{\vs_i+\vc_j}}{e^{\vs_i+\vc_j}+(1-\ssp_i)} - \frac{1-\ssp_i}{e^{\vs_i+\vc_j}+\frac{1-\ssp_i}{1+\ssp_i}} \right] \leq c
$$ for any $1 \le i, j \le n$. It then follows from \eqref{eqn:lemma1} that $c^{-1}ne^{-2\|\ttt^*\|_{\infty}} \le  \min_{1\leq i\leq 2n} u_i\leq \max_{1\leq i\leq 2n} u_i \le cn$. 
Further, we have  $c^{-1}e^{-2\|\ttt^*\|_{\infty}}\leq \frac{(1+\ssp_i)e^{\vs_i^*+\vc_j^*}}{(1+e^{\vs_i^*+\vc_j^*})^2}\leq c$ for any $1 \le i, j \le n$, and then it follows from the definition of $v_i$ that $c^{-1}ne^{-2\|\ttt^*\|_{\infty}} \le  \min_{1\leq i\leq 2n} v_i\leq \max_{1\leq i\leq 2n} v_i \le cn$. Similarly, note that $c^{-1}e^{-2\|\ttt^*\|_{\infty}}\leq \frac{(1+\ssp_i)e^{\vs_i^*+\vc_j^*}+\ssp_i(1-\ssp_i)}{(1+e^{\vs_i^*+\vc_j^*})^2}\leq c$ for any $1 \le i, j \le n$, which implies that $c^{-1}ne^{-2\|\ttt^*\|_{\infty}} \le  \min_{1\leq i\leq 2n} w_i\leq \max_{1\leq i\leq 2n} w_i \le cn$.

Let $\MM = (m_{ij})_{n\times n}$ with $m_{ij} = \vc_j + \vs_i$. As $l(\ttt;\sssp)$ depends on $\alpha_i$ and $\beta_j$ only through $m_{ij}$, we may write $l(\ttt;\sssp)$ as $l(\MM;\sssp)$ with $\MM=(m_{ij})_{i,j=1}^n$ without causing any confusion, and it holds that $\frac{\partial^2 l(\ttt;\sssp)}{\partial \vs_i\partial\vc_j} = \frac{\partial^2 l(\MM;\sssp)}{\partial m_{ij}^2}$. Next, it also follows from \eqref{eqn:lemma1} that
	$$
	\begin{aligned}
		& \max_{1\leq i\leq n} \left|\frac{\partial^2 l(\ttt;\sssp)}{\partial \theta_i^2} - \frac{\partial^2 l(\ttt^*;\sssp)}{\partial \theta_i^2} \right| \leq \max_{1\leq i\leq n}\sum_{j=1,j\neq i}^n \left| \frac{\partial^2 l(\ttt;\sssp)}{\partial \vs_i\partial\vc_j} - \frac{\partial^2 l(\ttt^*;\sssp)}{\partial \vs_i\partial\vc_j} \right| \\
		= & \max_{1\leq i\leq n}\sum_{j=1,j\neq i}^n \left| \frac{\partial^2 l(\MM;\sssp)}{\partial m_{ij}^2} - \frac{\partial^2 l(\MM^*;\sssp)}{\partial m_{ij}^2} \right| \lesssim \sum_{j=1,j\neq i}^n |m_{ij} - m_{ij}^*| \lesssim n\|\ttt-\ttt^*\|_{\infty},
	\end{aligned}
	$$ 
	where the second inequality is due to the fact that there exists a constant $\epsilon>0$ such that $$
	\sup_{\ttt:\|\ttt-\ttt^*\|_{\infty}\leq \epsilon}\max_{i\neq j} \left| \frac{\partial^3 l(\MM;\sssp)}{\partial m_{ij}^3} \right| = O(1).
	$$
	Similarly, we also have 
	$$
	\max_{n+1\leq i\leq 2n} \left| \frac{\partial^2 l(\ttt;\sssp)}{\partial \theta_i^2} - \frac{\partial^2 l(\ttt^*;\sssp)}{\partial \theta_i^2} \right| \lesssim n\|\ttt-\ttt^*\|_{\infty}.
	$$ 
	The proof is similarly for $$
n^{-1} \max_{1\leq i\leq 2n} \left| \frac{\partial^2 l(\ttt;\sssp)}{\partial \theta_i^2} - \frac{\partial^2 l(\ttt;\sssp^*)}{\partial \theta_i^2} \right| = O(\|\sssp-\sssp^*\|_{\infty}),
	$$ which completes the proof of Lemma \ref{lem:var}.
\end{proof}


\begin{proof}[\bf Proof of Lemma \ref{lem:concentrate}]
	Note that $|y_{ij}|\leq 1$ for any $i,j \in [n]$, and it can also be verified that $\EEE\frac{\partial l(\ttt^*;\sssp^*)}{\partial\theta_i} = 0$ and $n^{-1} \Big |\frac{\partial l(\ttt^*;\sssp^*)}{\partial\theta_i} \Big |\leq 2$ for any $i \in [2n]$. It then follows from Hoeffding's inequality that
	$$
	\begin{aligned}
		\Pr\left(\max_{1\leq i\leq 2n}|g_i - \EEE g_i| \geq \sqrt{4(n-1)\log (n-1)}\right) & \leq \sum_{i=1}^{2n}\Pr\left(|g_i - \EEE g_i| \geq \sqrt{4(n-1)\log (n-1)}\right) \\
		&\leq 4n\exp\left( -\frac{8(n-1)\log(n-1)}{4(n-1)} \right)  = \frac{4n}{(n-1)^2},
	\end{aligned}
	$$ and $$
	\begin{aligned}
		\Pr\left(\max_{1\leq i\leq 2n}\left|\frac{\partial l(\ttt^*;\sssp^*)}{\partial\theta_i}\right| \geq \sqrt{16(n-1)\log (n-1)}\right) &\leq \sum_{i=1}^{2n}\Pr\left(\left|\frac{\partial l(\ttt^*;\sssp^*)}{\partial\theta_i}\right| \geq \sqrt{16(n-1)\log (n-1)}\right)\\
		&\leq 4n\exp\left( -\frac{32(n-1)\log(n-1)}{16(n-1)} \right) = \frac{4n}{(n-1)^2}.
	\end{aligned}
	$$ Then, \eqref{eq:con2} holds since by \eqref{lem:var}, we have $$
	\begin{aligned}
	\left|\max_{1\leq i\leq 2n}\left|\frac{\partial l(\ttt^*;\sssp)}{\partial\theta_i}\right| - \max_{1\leq i\leq 2n}\left|\frac{\partial l(\ttt^*;\sssp^*)}{\partial\theta_i}\right|\right| &\leq \max_{1\leq i\leq 2n}\left|\frac{\partial l(\ttt^*;\sssp)}{\partial\theta_i} - \frac{\partial l(\ttt^*;\sssp^*)}{\partial\theta_i}\right| \lesssim n\|\sssp-\sssp^*\|_{\infty}.
	\end{aligned}
	$$
	
Also, there exists a positive constant $c$ such that $|\III_{ij}| \leq c/2$ for any $i,j \in [n]$. Again, by Hoeffding's inequality, we have
 $$
	\begin{aligned}
		& \Pr\left(\max_{1\leq i\leq 2n} \left |\frac{\partial^2 l(\ttt^*;\sssp)}{\partial \theta_i^2}+u_i \right | \geq c\sqrt{(n-1)\log (n-1)}\right) \\
		\leq & \sum_{i=1}^{2n}\Pr\left( \left|\frac{\partial^2 l(\ttt^*;\sssp)}{\partial \theta_i^2}+u_i \right | \geq c\sqrt{(n-1)\log (n-1)}\right)\\
		\leq & 4n\exp\left( -\frac{2c^2(n-1)\log(n-1)}{c^2(n-1)} \right) = \frac{4n}{(n-1)^2}.
	\end{aligned}
	$$
This completes the proof of Lemma \ref{lem:concentrate}.
\end{proof}

\begin{proof}[\bf Proof of Lemma~\ref{lem:initial consis}]
	
	The proof mainly follows from the results in \cite{yan2016asymptotics}, except that we have $\sssp$ here which approximates true values $\sssp^*$.
	Note that $$
	\begin{aligned}
		\frac{\partial F_i(\ttt^*;\sssp)}{\partial\vs_i} &= -\sum_{k=1,k\neq i}^n \frac{(1+\ssp_i)e^{\vs_i^*+\vc_k^*}}{(1+e^{\vs_i^*+\vc_k^*})^2}= -v_i,&&~\mbox{for}~i \in [n],\\
		\frac{\partial F_{n+j}(\ttt^*;\sssp)}{\partial\vc_j} &= -\sum_{k=1,k\neq j}^n \frac{(1+\ssp_k)e^{\vs_k^*+\vc_j^*}}{(1+e^{\vs_k^*+\vc_j^*})^2}=-v_{n+j},&&~\mbox{for}~j \in [n-1].
	\end{aligned}
	$$ By Lemma~\ref{lem:var},
	there exist positive constants $c_1$ and $c_2$ such that $-\partial F(\ttt;\sssp)/\partial \ttt \in\mathcal L(c_1e^{-2\|\ttt^*\|_{\infty}},c_2),$ where $\mathcal L(c_1e^{-2\|\ttt^*\|_{\infty}},c_2)$ is defined as in Section 2.1 of \cite{yan2016asymptotics}. 
	
	By Lemma~\ref{lem:concentrate}, with probability at least $1-4n/(n-1)^2$, we have $\max\big\{ \max_{1\leq i\leq n}|d_i - \EEE d_i|, \max_{1\leq j\leq n}|b_j-\EEE b_j| \big\} \leq \sqrt{4(n-1)\log (n-1)}$. Then, $$
	\begin{aligned}
	\max_{1\leq i\leq n}|F_i(\ttt;\sssp)| \leq& \max_{1\leq i\leq n}|g_i-\EEE g_i| + \max_{1\leq i\leq n}\sum_{k=1,k\neq i}^n\frac{|\ssp_i-\ssp_i^*|}{1+e^{\vs_i^*+\vc_k^*}}\\ 
	\leq& \sqrt{4(n-1)\log (n-1)} + (n-1)\|\sssp-\sssp^*\|_{\infty} \\
	=&[1+o(1)]\sqrt{4(n-1)\log (n-1)}.
	\end{aligned}
	$$ Similarly, $\max_{1\leq j\leq n}|F_{n+j}(\ttt;\sssp)|\leq [1+o(1)]\sqrt{4(n-1)\log (n-1)}$.
	According to Theorem 7 in \cite{yan2016asymptotics} with 
	$m = c_1e^{-2\|\ttt^*\|_{\infty}}$, $M = c_2$, $K_1 = (1+\ssp)(n-1)$, $K_2 = \frac{(1+\ssp)(n-1)}{2}$, $\rho \asymp e^{6\|\ttt^*\|_{\infty}}$
	and $r \lesssim e^{6\|\ttt^*\|_{\infty}}\sqrt{\log n/ n}$, it holds that with probability approaching 1, $F(\ttt;\sssp) = 0$ has a solution $\check\ttt$, and it satisfies $$
	\|\check\ttt-\ttt^*\|_{\infty}\leq \frac{r}{1-\rho r} \leq 2r\lesssim e^{6\|\ttt^*\|_{\infty}}\sqrt{\frac{\log n}{n}}.
	$$
	
	Next, we prove the uniqueness of $\check \ttt$ by contradiction. If there exists $\widetilde \ttt \neq \check\ttt$ such that $F(\widetilde\ttt;\sssp) = \bf0$, we define $h(t) = (t \widetilde \ttt + (1-t)\check\ttt)^\top F(t \widetilde \ttt + (1-t)\check\ttt;\sssp)$, and 
	$$
	h'(t) = (\widetilde \ttt - \check\ttt)^\top \left[\frac{\partial F(t\widetilde \ttt + (1-t)\check\ttt;\sssp)}{\partial\ttt}\right](\widetilde \ttt - \check\ttt).
	$$
	Since $-\partial F(\ttt;\sssp)/\partial\ttt$ is a diagonally dominant matrix with positive diagonals \citep{yan2016asymptotics}, it implies that $-\partial F(\ttt;\sssp)/\partial\ttt$ is a positive definite matrix, and thus $h'(t) < 0$ for any $t \in[0,1]$. This contradicts with the fact that $h(0) = h(1) = 0$, and the uniqueness of $\check \ttt$ then follows immediately.	
\end{proof}

\begin{proof}[\bf Proof of Lemma~\ref{lem:initial decomp}]
	Let $	\hh = -F(\check\ttt;\sssp) - F'(\ttt^*;\sssp)(\check\ttt-\ttt^*)$ and $\g = (g_1,...,g_{2n-1})^\top$.
	Further denote $$
	\g(\ttt;\sssp) = (g_1(\ttt;\sssp),...,g_{2n-1}(\ttt;\sssp))^\top,
	$$ where $g_i(\ttt;\sssp) = \sum_{k=1,k\neq i}^n\frac{e^{\vs_i+\vc_k} - \ssp_i}{1+^{\vs_i+\vc_k}}$ and $g_{n+j}(\ttt;\sssp) = \sum_{k=1,k\neq j}^n\frac{e^{\vs_k+\vc_j} - \ssp_k}{1+^{\vs_k+\vc_j}}$ for $i,j\in[n]$. Note that $\g(\ttt^*;\sssp^*) = \EEE \g$. Then, $$
	\check\ttt-\ttt^* = -[F'(\ttt^*;\sssp)]^{-1}(\g-\EEE\g) - [F'(\ttt^*;\sssp)]^{-1}(\g(\ttt^*;\sssp^*) - \g(\ttt^*;\sssp)) - [F'(\ttt^*;\sssp)]^{-1}\hh.
	$$ By Lemma \ref{lem:initial consis}, and Lemmas 8, 9 of \cite{yan2016asymptotics}, we have $\|[F'(\ttt^*;\sssp)]^{-1}\hh\|_{\infty} \lesssim e^{6\|\ttt^*\|_{\infty}}\|\check\ttt-\ttt^*\|_{\infty}^2\lesssim_Pe^{18\|\ttt^*\|_{\infty}}\log n/n$. Similarly, we have $\|[F'(\ttt^*;\sssp)]^{-1}(\g(\ttt^*;\sssp^*) - \g(\ttt^*;\sssp))\|_{\infty}\lesssim e^{6\|\ttt^*\|_{\infty}}\|\sssp-\sssp^*\|_{\infty}\lesssim e^{18\|\ttt^*\|_{\infty}}\log n/n$.
	The desired results follows immediately.
\end{proof}


\begin{proof}[\bf Proof of Lemma~\ref{lem:multi concentrate12}]
Note that for any $l \in [n-1],$ $$
	u_i^{-1}\EEE\left[\frac{\partial^2l(\ttt^*;\sssp)}{\partial\vs_i\partial\vc_l}\right] \sqrt{n} v_{n+l}^{-1}(g_{n+l}-\EEE g_{n+l})
	$$ 
is a sub-Gaussian random variable with zero mean and by Lemma~\ref{lem:var}, 
$$
	\max_{1\leq l\leq n-1} \var\left\{ u_i^{-1}\EEE\left[\frac{\partial^2l(\ttt^*;\sssp)}{\partial\vs_i\partial\vc_l}\right] \sqrt{n} v_{n+l}^{-1}(g_{n+l}-\EEE g_{n+l}) \right\} \leq \frac{c_1e^{8\|\ttt^*\|_{\infty}}}{n^2},
	$$ 
where $c_1$ is a constant. We have 
$$
	\begin{aligned}
		&\Pr\left(\max_{1\leq i\leq n}\left| \sum_{l=1}^{n-1}  u_i^{-1}\EEE\left[\frac{\partial^2l(\ttt^*;\sssp)}{\partial\vs_i\partial\vc_l}\right] \sqrt{n} v_{n+l}^{-1}(g_{n+l}-\EEE g_{n+l})  \right|
		\leq c_2e^{4\|\ttt^*\|_{\infty}}\sqrt{\frac{\log n}{n}}\right)\\ 
		\geq& 1-n\max_{1\leq i\leq n}\Pr\left(\left| \sum_{l=1}^{n-1}  u_i^{-1}\EEE\left[\frac{\partial^2l(\ttt^*;\sssp)}{\partial\vs_i\partial\vc_l}\right] \sqrt{n} v_{n+l}^{-1}(g_{n+l}-\EEE g_{n+l})  \right|
		\leq c_2e^{4\|\ttt^*\|_{\infty}}\sqrt{\frac{\log n}{n}}\right)  \\
		\geq& 1 - 2n\exp\left(-\frac{c_2^2\log n}{2c_1}\right)= 1-\frac{2}{n},
	\end{aligned}
	$$ 
	where the second inequality is due to Hoeffding's inequality, and
	the last equality holds with $c_2 = 2\sqrt{c_1}.$ Similarly, we can establish the bound for
	$$
	\max_{1\leq j\leq n}\left| \sum_{k=1}^n  u_{n+j}^{-1}\EEE\left[\frac{\partial^2 l(\ttt^*;\sssp)}{\partial\vs_k\partial\vc_j}\right] \sqrt{n} v_k^{-1}(g_k-\EEE g_k) \right|,
	$$ 
	and thus we complete the proof of \eqref{eq:multi concentrate1}.
	
	For \eqref{eq:multi concentrate2}, note that for any $l \in [n-1],$ 
	$$
u_i^{-1}\III_{il}\sqrt{n} v_{n+l}^{-1}(g_{n+l\backslash i}-\EEE g_{n+l\backslash i})  
$$ 
is also a sub-Gaussian random variable with zero mean and by Lemma~\ref{lem:var}, 
$$
\max_{1\leq l\leq n-1} \var\left\{ u_i^{-1}\III_{il}\sqrt{n} v_{n+l}^{-1}(g_{n+l\backslash i}-\EEE g_{n+l\backslash i})   \right\} \leq\frac{c_1e^{8\|\ttt^*\|_{\infty}}}{n^2},
$$ 
where $c_1$ is a constant. The rest of the proof is similar to that of \eqref{eq:multi concentrate1}. \end{proof}

\begin{proof}[\bf Proof of Lemma~\ref{lem:multi cent}]
By Lemma~\ref{lem:initial decomp}, we have $$
\begin{aligned}
&\sum_{k=1}^n  u_{2n}^{-1}\EEE\left[\frac{\partial^2 l(\ttt^*;\sssp)}{\partial\vs_k\partial\vc_n}\right] \sqrt{n}(\check\vs_k - \vs_k^*) + \sum_{l=1}^{n-1}  u_i^{-1}\EEE\left[\frac{\partial^2l(\ttt^*;\sssp)}{\partial\vs_i\partial\vc_l}\right] \sqrt{n}(\check\vc_l - \vc_l^*) \\
=& \sum_{k=1}^n  u_{2n}^{-1}\EEE\left[\frac{\partial^2 l(\ttt^*;\sssp)}{\partial\vs_k\partial\vc_n}\right] \sqrt{n}\left[ v_k^{-1}(g_k-\EEE g_k)+ v_{2n}^{-1}(g_{2n} - \EEE g_{2n}) + \epsilon_i\right] \\
&+ \sum_{l=1}^{n-1}  u_i^{-1}\EEE\left[\frac{\partial^2l(\ttt^*;\sssp)}{\partial\vs_i\partial\vc_l}\right] \sqrt{n}\left[ v_{n+l}^{-1}(g_{n+l}-\EEE g_{n+l}) - v_{2n}^{-1}(b_{n}-\EEE b_{n}) + \epsilon_{n+l} \right] \\
=&\sum_{k=1}^n  u_{2n}^{-1}\EEE\left[\frac{\partial^2 l(\ttt^*;\sssp)}{\partial\vs_k\partial\vc_n}\right] \sqrt{n} v_k^{-1}(g_k-\EEE g_k) + \sum_{l=1}^{n-1}  u_i^{-1}\EEE\left[\frac{\partial^2l(\ttt^*;\sssp)}{\partial\vs_i\partial\vc_l}\right] \sqrt{n} v_{n+l}^{-1}(g_{n+l}-\EEE g_{n+l}) \\
&+ u_i^{-1}\EEE\left[\frac{\partial^2l(\ttt^*;\sssp)}{\partial\vs_i\partial\vc_n}\right]\sqrt{n} v_{2n}^{-1}(g_{2n}-\EEE g_{2n}) + \sum_{k=1}^n  u_{2n}^{-1}\EEE\left[\frac{\partial^2 l(\ttt^*;\sssp)}{\partial\vs_k\partial\vc_n}\right] \sqrt{n}\epsilon_i \\
&+ \sum_{l=1}^{n-1}  u_i^{-1}\EEE\left[\frac{\partial^2l(\ttt^*;\sssp)}{\partial\vs_i\partial\vc_l}\right] \sqrt{n} \epsilon_{n+l}\\
=:& \sum_{k=1}^n  u_{2n}^{-1}\EEE\left[\frac{\partial^2 l(\ttt^*;\sssp)}{\partial\vs_k\partial\vc_n}\right] \sqrt{n} v_k^{-1}(g_k-\EEE g_k) + \sum_{l=1}^{n-1}  u_i^{-1}\EEE\left[\frac{\partial^2l(\ttt^*;\sssp)}{\partial\vs_i\partial\vc_l}\right] \sqrt{n} v_{n+l}^{-1}(g_{n+l}-\EEE g_{n+l}) + r_i.
\end{aligned}
$$ 
By \eqref{eq:multi concentrate1} in Lemma~\ref{lem:multi concentrate12}, 
we have 
\begin{align*}
\left|\sum_{k=1}^n  u_{2n}^{-1}\EEE\left[\frac{\partial^2 l(\ttt^*;\sssp)}{\partial\vs_k\partial\vc_n}\right] \sqrt{n} v_k^{-1}(g_k-\EEE g_k)\right| & \lesssim_P e^{4\|\ttt^*\|_{\infty}}\sqrt{\frac{\log n}{n}}; \\
\max_{1\leq i\leq n}\left|\sum_{l=1}^{n-1}  u_i^{-1}\EEE\left[\frac{\partial^2l(\ttt^*;\sssp)}{\partial\vs_i\partial\vc_l}\right] \sqrt{n} v_{n+l}^{-1}(g_{n+l}-\EEE g_{n+l})\right| & \lesssim_P e^{4\|\ttt^*\|_{\infty}}\sqrt{\frac{\log n}{n}},
\end{align*}
and it follows from Lemmas~\ref{lem:var}, \ref{lem:concentrate} and \ref{lem:initial decomp}
that for any $i \in [n]$, $$
|r_i| \lesssim_P e^{20\|\ttt^*\|_{\infty}}\log n/\sqrt{n}+ e^{8\|\ttt^*\|_{\infty}}\sqrt{n}\|\sssp-\sssp^*\|_{\infty}\lesssim e^{20\|\ttt^*\|_{\infty}}\log n/\sqrt{n}.
$$ 
Therefore, 
it holds true that 
{\footnotesize $$
\max_{1\leq i\leq n}\left|\sum_{k=1}^n  u_{2n}^{-1}\EEE\left[\frac{\partial^2 l(\ttt^*;\sssp)}{\partial\vs_k\partial\vc_n}\right] \sqrt{n}(\check\vs_k - \vs_k^*) + \sum_{l=1}^{n-1}  u_i^{-1}\EEE\left[\frac{\partial^2l(\ttt^*;\sssp)}{\partial\vs_i\partial\vc_l}\right] \sqrt{n}(\check\vc_l - \vc_l^*)\right| \lesssim_P \frac{e^{20\|\ttt^*\|_{\infty}}\log n}{\sqrt{n}}.
$$ }
Similarly, we can also show that   
{\footnotesize $$
\max_{1\leq j\leq n-1}\left|\sum_{k=1}^n \sqrt{n} (\check\vs_k - \vs_k^*) \left\{ u_{n+j}^{-1} \EEE\left[\frac{\partial^2 l(\ttt^*;\sssp)}{\partial\vs_k\partial\vc_j}\right] - u_{2n}^{-1} \EEE\left[\frac{\partial^2 l(\ttt^*;\sssp)}{\partial\vs_k\partial\vc_n}\right] \right\} \right| \lesssim_P \frac{e^{20\|\ttt^*\|_{\infty}}\log n}{\sqrt{n}}.
$$} 
This completes the proof of Lemma  \ref{lem:multi cent}. \end{proof}



\begin{proof}[\bf Proof of Lemma~\ref{lem:multi cent2}]
By Lemma~\ref{lem:initial decomp}, we have $$
\begin{aligned}
&\sum_{k=1}^n  u_{2n}^{-1}\III_{kn}\sqrt{n}(\check\vs_k - \vs_k^*) + \sum_{l=1}^{n-1}  u_i^{-1}\III_{il} \sqrt{n}(\check\vc_l - \vc_l^*) \\
=& \sum_{k=1}^n  u_{2n}^{-1}\III_{kn} \sqrt{n}\left[ v_k^{-1}(g_k-\EEE g_k)+v_{2n}^{-1}(g_{2n} - \EEE g_{2n}) + \epsilon_i\right] \\
&+ \sum_{l=1}^{n-1}  u_i^{-1}\III_{il} \sqrt{n}\left[ v_{n+l}^{-1}(g_{n+l}-\EEE g_{n+l}) - v_{2n}^{-1}(b_{n}-\EEE b_{n}) + \epsilon_{n+l} \right] \\
=&\sum_{k=1}^n  u_{2n}^{-1}\III_{kn}\sqrt{n} v_k^{-1}(g_k-\EEE g_k) + \sum_{l=1}^{n-1}  u_i^{-1}\III_{il}\sqrt{n} v_{n+l}^{-1}(g_{n+l}-\EEE g_{n+l}) \\
&+u_{2n}^{-1}\sqrt{n} v_{2n}^{-1}(g_{2n}-\EEE g_{2n}) \sum_{k=1}^n\III_{kn} - u_i^{-1}\sqrt{n} v_{2n}^{-1}(b_{n}-\EEE b_{n})\sum_{l=1}^{n-1}\III_{il} \\
&+ \sum_{k=1}^n  u_{2n}^{-1}\III_{kn} \sqrt{n}\epsilon_i + \sum_{l=1}^{n-1}  u_i^{-1}\III_{il} \sqrt{n} \epsilon_{n+l}\\
=&\sum_{k=1}^n  u_{2n}^{-1}\III_{kn}\sqrt{n} v_k^{-1}(g_{k\backslash n}-\EEE g_{k\backslash n}) + \sum_{k=1}^n  u_{2n}^{-1}\III_{kn}\sqrt{n} v_k^{-1}(y_{kn}-\EEE y_{kn}) \\
&+ \sum_{l=1}^{n-1}  u_i^{-1}\III_{il}\sqrt{n} v_{n+l}^{-1}(g_{n+l\backslash i}-\EEE g_{n+l\backslash i}) + \sum_{l=1}^{n-1}  u_i^{-1}\III_{il}\sqrt{n} v_{n+l}^{-1}(y_{il}-\EEE y_{il}) \\
&+\sqrt{n} v_{2n}^{-1}(g_{2n}-\EEE g_{2n})\frac{u_{2n}-u_{2n}}{u_{2n}} - \sqrt{n} v_{2n}^{-1}(b_{n}-\EEE b_{n})\frac{u_i-u_i}{u_i} + \sqrt{n} v_{2n}^{-1}(b_{n}-\EEE b_{n})\frac{\III_{in}}{u_i} \\
&+ \sum_{k=1}^n  u_{2n}^{-1}\III_{kn} \sqrt{n}\epsilon_i + \sum_{l=1}^{n-1}  u_i^{-1}\III_{il} \sqrt{n} \epsilon_{n+l}\\
=:&\sum_{k=1}^n  u_{2n}^{-1}\III_{kn}\sqrt{n} v_k^{-1}(g_{k\backslash n}-\EEE g_{k\backslash n}) + \sum_{l=1}^{n-1}  u_i^{-1}\III_{il}\sqrt{n} v_{n+l}^{-1}(g_{n+l\backslash i}-\EEE g_{n+l\backslash i}) + s_i.
\end{aligned}
$$
By \eqref{eq:multi concentrate2} in Lemma~\ref{lem:multi concentrate12}, 
we have 
\begin{align*}
\left|\sum_{k=1}^n  u_{2n}^{-1}\III_{kn}\sqrt{n} v_k^{-1}(g_{k\backslash n}-\EEE g_{k\backslash n}) \right| & \lesssim_P e^{4\|\ttt^*\|_{\infty}}\sqrt{\frac{\log n}{n}}; \\
\max_{1\leq i\leq n}\left| \sum_{l=1}^{n-1}  u_i^{-1}\III_{il}\sqrt{n} v_{n+l}^{-1}(g_{n+l\backslash i}-\EEE g_{n+l\backslash i})  \right| & \lesssim_P e^{4\|\ttt^*\|_{\infty}}\sqrt{\frac{\log n}{n}},
\end{align*}
and it follows from Lemmas~\ref{lem:var}, \ref{lem:concentrate} and \ref{lem:initial decomp}
that $$
\max_{1\leq i\leq n} |s_i| \lesssim_P e^{20\|\ttt^*\|_{\infty}}\log n/\sqrt{n} + e^{8\|\ttt^*\|_{\infty}}\sqrt{n}\|\sssp-\sssp^*\|_{\infty}\lesssim e^{20\|\ttt^*\|_{\infty}}\log n/\sqrt{n}.
$$

Therefore, it holds true that 
$$
\begin{aligned}
\max_{1\leq i\leq n}\left|\sum_{k=1}^n  u_{2n}^{-1}\III_{kn}\sqrt{n}(\check\vs_k - \vs_k^*) + \sum_{l=1}^{n-1}  u_i^{-1}\III_{il} \sqrt{n}(\check\vc_l - \vc_l^*)\right| \lesssim_P \frac{e^{20\|\ttt^*\|_{\infty}}\log n}{\sqrt{n}}.
\end{aligned}
$$ 
Similarly, we can also show that 
$$
\max_{1\leq j\leq n-1}\left|\sum_{k=1}^n \sqrt{n} (\check\vs_k - \vs_k^*) \left\{ u_{n+j}^{-1} \III_{kj} - u_{2n}^{-1} \III_{kn}\right\} \right| \lesssim_P \frac{e^{20\|\ttt^*\|_{\infty}}\log n}{\sqrt{n}}.
$$ 
This completes the proof of Lemma  \ref{lem:multi cent2}.  \end{proof}

\section*{Appendix D: estimation of $\sssp$ and asymptotics}

 Define $\III_1^* = \{i\in[n]:\ssp_i = \ssp_{01}\}$ and $n_1^* = |\III_1^*|$. 
We develop a selection procedure to determine the pattern of each node in sending negative edges. 
Specifically, for each $i\in[n]$, we calculate $\zeta_i = n^{-1}\sum_{k=1,k\neq i}^n1_{\{y_{ik}=-1\}}$, and define $\III_1 = \{i\in[n]:\zeta_i> \xi_n\}$, where $\xi_n$ is a pre-specified threshold. Further, let $\ssp_i = \ssp$ if $i\in\III_1$, where $\ssp$ is the unknown parameter we are going to estimate, and $\ssp_i = \log n / n$ if otherwise. 
Without loss of generality, we assume $\III_1 = [{n_1}]$ and define $\ttt_{1} = (\vs_1,...,\vs_{n_1},\vc_1,...,\vc_{{n_1}-1})^\top$. 
Then, the log likelihood based on the subnetwork with $n_1$ nodes takes the form $$
l(\ttt_{1};\ssp) = \sum_{i,j=1,i\neq j}^{n_1} l_{ij}(\vs_i+\vc_j;\ssp).
$$ 
To estimate $\ssp$, we consider the following restricted MLE $$
(\widetilde\ttt_{1},\widetilde\ssp) = \argmin_{\substack{\|\ttt_{1}\|_{\infty}\leq \tau\log n_1 \\ \ssp\in[\gamma_n,1-\gamma_n]}}l(\ttt_{1};\ssp) ,
$$ where $\tau\geq 1/40$ and $\gamma_n\in(0,1/2)$. Given $\widetilde\ssp$, we let $\widetilde\sssp = (\widetilde\ssp_1,...,\widetilde\ssp_n)$, where $\widetilde\ssp_i = \widetilde\ssp$ if $i\in\III_1$ and $\widetilde\ssp_i = \log n / n$ if otherwise.

\begin{lemma}\label{lem:selection}
Suppose $\|\ttt^*\|_{\infty} \leq c\log n$ with $0<c<1/40$, $\ssp_{00}\lesssim e^{12\|\ttt^*\|_{\infty}}\log n/n$ and $\ssp_{01}\in(\gamma_n,1-\gamma_n)$.
Choose $\xi_n$ such that $\sqrt{\log n/ n}\ll \xi_n \ll \gamma_nn^{-\frac{1}{10}}$. Then, $\Pr(\III_1= \III_1^*)\to 1$ as $n$ grows to infinity.
\end{lemma}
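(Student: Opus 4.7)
The plan is to show that the statistic $\zeta_i = n^{-1}\sum_{k\neq i}1_{\{y_{ik}=-1\}}$ cleanly separates the two classes of nodes by bounding its expectation on each side and then applying uniform concentration. First I would compute $\EEE\zeta_i$ directly from \eqref{eq:model}. Since $\Pr(y_{ik}=-1) = \ssp_i/(1+e^{\vs_i^*+\vc_k^*})^2$, we have
$$
\EEE\zeta_i = \frac{\ssp_i}{n}\sum_{k\neq i}\frac{1}{(1+e^{\vs_i^*+\vc_k^*})^2}.
$$
For $i\notin\III_1^*$, $\ssp_i=\ssp_{00}$, and the hypothesis $\ssp_{00}\lesssim e^{12\|\ttt^*\|_\infty}\log n/n\leq n^{12c-1}\log n$ together with $c<1/40$ gives $\EEE\zeta_i \lesssim n^{-7/10}\log n = o(\sqrt{\log n/n})$. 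For $i\in\III_1^*$, $\ssp_i=\ssp_{01}>\gamma_n$, and using $e^{\vs_i^*+\vc_k^*}\leq e^{2\|\ttt^*\|_\infty}\leq n^{2c}$ with $2c<1/20$, each summand is at least $(1+n^{2c})^{-2}\geq \tfrac14 n^{-4c}\gtrsim n^{-1/10}$, giving $\EEE\zeta_i \gtrsim \gamma_n n^{-1/10}$.

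Next I would apply Hoeffding's inequality to the sum $\sum_{k\neq i}1_{\{y_{ik}=-1\}}$ of mutually independent indicators. This yields, for each fixed $i$ and any $t>0$,
$$
\Pr\bigl(|\zeta_i - \EEE\zeta_i|\geq t\bigr) \leq 2\exp(-2nt^2).
$$
Taking $t = \sqrt{2\log n/n}$ and union bounding over $i\in[n]$ produces an event $\Omega_n$ of probability at least $1-2n^{-1}$ on which $\max_i|\zeta_i-\EEE\zeta_i|\lesssim \sqrt{\log n/n}$.

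Finally I would combine the two ingredients. On $\Omega_n$, for $i\notin\III_1^*$ we get $\zeta_i \leq \EEE\zeta_i + O(\sqrt{\log n/n}) = O(\sqrt{\log n/n}) = o(\xi_n)$ by the assumed upper bound on $\xi_n$. For $i\in\III_1^*$ we get $\zeta_i \geq c\gamma_n n^{-1/10} - O(\sqrt{\log n/n})$, and since $\xi_n \ll \gamma_n n^{-1/10}$ and $\xi_n \gg \sqrt{\log n/n}$, this dominates $\xi_n$. Hence the thresholding $\III_1 = \{i:\zeta_i>\xi_n\}$ coincides exactly with $\III_1^*$ on $\Omega_n$, so $\Pr(\III_1 = \III_1^*)\geq 1 - 2n^{-1} \to 1$.

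The only subtle point, which is really a bookkeeping issue rather than a genuine obstacle, is verifying that the lower bound $\gamma_n n^{-1/10}$ in the positive-edge regime genuinely survives the concentration error $\sqrt{\log n/n}$; this is precisely what the bandwidth assumption $\sqrt{\log n/n}\ll \xi_n\ll \gamma_n n^{-1/10}$ guarantees, and the choice $c<1/40$ is exactly what ensures $4c\leq 1/10$ so that the bound on $e^{2\|\ttt^*\|_\infty}$ does not close the gap.
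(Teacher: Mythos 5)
Your proposal is correct and follows essentially the same route as the paper: a Hoeffding bound uniform over $i$ giving $\max_i|\zeta_i-\EEE\zeta_i|\lesssim_P\sqrt{\log n/n}$, combined with the expectation bounds $\EEE\zeta_i\lesssim e^{12\|\ttt^*\|_\infty}\log n/n$ off $\III_1^*$ and $\EEE\zeta_i\gtrsim\gamma_n e^{-4\|\ttt^*\|_\infty}\gtrsim\gamma_n n^{-1/10}$ on $\III_1^*$, then separation via $\sqrt{\log n/n}\ll\xi_n\ll\gamma_n n^{-1/10}$. You simply make explicit the calculation of $\EEE\zeta_i$ and the exponent bookkeeping ($4c<1/10$, $12c-1<-1/2$) that the paper leaves implicit.
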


\begin{proof}[{\bf Proof of Lemma~\ref{lem:selection}}]
By Hoeffding's inequality, $$
\max_{i\in[n]}|\zeta_i - \EEE \zeta_i| \lesssim_P \sqrt{\frac{\log n}{n}},
$$ which implies that $$
\begin{aligned}
\min_{i\in\III_1^*}|\zeta_i| &\gtrsim_P \min_{i\in\III_1^*}|\EEE\zeta_i| - \sqrt{\frac{\log n}{n}}\gtrsim \gamma_ne^{-4\|\ttt^*\|_{\infty}} - \sqrt{\frac{\log n}{n}}\gtrsim \gamma_nn^{-\frac{1}{10}},\\
\max_{i\in[n]\backslash\III_1^*}|\zeta_i| &\lesssim_P \max_{i\in[n]\backslash\III_1^*}|\EEE\zeta_i| + \sqrt{\frac{\log n}{n}} \lesssim \frac{e^{12\|\ttt^*\|_{\infty}}\log n}{n} + \sqrt{\frac{\log n}{n}}\lesssim \sqrt{\frac{\log n}{n}}.
\end{aligned}
$$ This completes the proof.
\end{proof}

Let $p_{ij}(\vs_i,\vc_j,\ssp)$ represent the distribution of $y_{ij}$ under parameters $(\vs_i,\vc_j,\ssp)$ and $p_{ij} = p_{ij}(\vs_i^*,\vc_j^*,\ssp_{01})$. Define the KL-divergence of $p_{ij}$ from $p_{ij}(\vs_i,\vc_j,\ssp)$ as $$
D_{KL}(p_{ij} || p_{ij}(\vs_i,\vc_j,\ssp)) = \sum_{y\in\{-1,0,1\}}p(y\mid \vs_i^*+\vc_j^*,\ssp_{01})\log\frac{p(y\mid \vs_i^*+\vc_j^*,\ssp_{01})}{p(y\mid \vs_i+\vc_j,\ssp)}.
$$ For any $\rho>0$, define $B_n(\rho) = \{\ssp:|\ssp-\ssp_{01}|<\rho\}$ and $B_n^c(\rho) = [\gamma_n,1-\gamma_n]\backslash B_n(\rho)$.

\begin{lemma}\label{lem:consis ssp}
Under the same conditions of Lemma~\ref{lem:selection}, suppose
\begin{equation}\label{eq:KL}
\min_{\substack{\|\ttt_{1}\|_{\infty}\leq \tau\log n_1 \\ \ssp\in B_n^c(\rho)}} \frac{1}{n_1(n_1-1)}\sum_{i,j=1,i\neq j}^{n_1}D_{KL}(p_{ij}||p_{ij}(\vs_i,\vc_j,\ssp)) \gtrsim \rho e^{-12\|\ttt_1^*\|_{\infty}},
\end{equation}
then we have $|\widetilde\ssp - \ssp_{01}|\lesssim_P e^{12\|\ttt_1^*\|_{\infty}}\log n_1/n_1$.
\end{lemma}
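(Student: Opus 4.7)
The plan is to follow the classical M-estimator consistency argument, combining three ingredients: the optimality of $(\widetilde\ttt_1, \widetilde\ssp)$, the KL-separation hypothesis \eqref{eq:KL}, and a uniform concentration of the log-likelihood over the feasible set $\Theta_{n_1} := \{\ttt_1 : \|\ttt_1\|_\infty \le \tau \log n_1\} \times [\gamma_n, 1-\gamma_n]$.

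First I would verify that $(\ttt_1^*, \ssp_{01})$ lies in $\Theta_{n_1}$ for $n_1$ large enough, since $\|\ttt_1^*\|_\infty \le \|\ttt^*\|_\infty \le c \log n$ and $\ssp_{01} \in (\gamma_n, 1-\gamma_n)$ by assumption. By optimality of $(\widetilde\ttt_1, \widetilde\ssp)$, adding and subtracting expectations yields
\[
\EEE l(\ttt_1^*; \ssp_{01}) - \EEE l(\widetilde\ttt_1; \widetilde\ssp) \;\le\; 2 \sup_{(\ttt_1,\ssp) \in \Theta_{n_1}} \bigl| l(\ttt_1;\ssp) - \EEE l(\ttt_1;\ssp) \bigr|.
\]
Since the edges $y_{ij}$ are mutually independent, the left-hand side equals $\sum_{i \ne j} D_{KL}\bigl(p_{ij} \,\|\, p_{ij}(\widetilde\vs_i, \widetilde\vc_j, \widetilde\ssp)\bigr)$; on the event $\{\widetilde\ssp \in B_n^c(\rho)\}$, hypothesis \eqref{eq:KL} applied to the (random but feasible) point $(\widetilde\ttt_1, \widetilde\ssp)$ lower-bounds this by $c\,n_1(n_1-1)\rho e^{-12\|\ttt_1^*\|_\infty}$ for some constant $c > 0$.

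The main technical obstacle is controlling the uniform concentration in the display above. Each term $l_{ij}$ is bounded on $\Theta_{n_1}$ by $O(\log n_1)$, since $\ssp \ge \gamma_n$ is bounded away from $0$ and $\|\ttt_1\|_\infty = O(\log n_1)$. A direct $\epsilon$-net argument over the $2n_1$-dimensional parameter space combined with Hoeffding's inequality only yields $O_P(n_1^{3/2}\log^{3/2} n_1)$, which is too loose by a factor of $\sqrt{n_1}$. The remedy is to exploit the fact that $\ssp$ is only a scalar nuisance: define $\widehat\ttt_1(\ssp) := \argmax_{\|\ttt_1\|_\infty \le \tau\log n_1} l(\ttt_1;\ssp)$ and its population counterpart $\ttt_1^{**}(\ssp) := \argmax_{\ttt_1} \EEE l(\ttt_1;\ssp)$, and adapt the estimating-equation analysis of Lemmas~\ref{lem:initial consis}--\ref{lem:initial decomp} to the $n_1$-node subnetwork at each fixed $\ssp$ to obtain $\sup_{\ssp} \|\widehat\ttt_1(\ssp) - \ttt_1^{**}(\ssp)\|_\infty \lesssim_P e^{6\|\ttt_1^*\|_\infty}\sqrt{\log n_1/n_1}$ along a polynomial grid of $[\gamma_n, 1-\gamma_n]$. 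Hoeffding applied to the resulting one-dimensional profile log-likelihood, together with its polynomial Lipschitz continuity in $\ssp$, then yields the sharp uniform bound $\sup_{\Theta_{n_1}} |l(\ttt_1;\ssp) - \EEE l(\ttt_1;\ssp)| = O_P(n_1 \log n_1)$.

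Combining the three ingredients, on the event $\{\widetilde\ssp \in B_n^c(\rho)\}$ we obtain $c\, n_1^2 \rho e^{-12\|\ttt_1^*\|_\infty} \lesssim_P n_1 \log n_1$. Choosing $\rho = C e^{12\|\ttt_1^*\|_\infty} \log n_1/n_1$ for a sufficiently large constant $C$ produces a contradiction with probability $1-o(1)$, so $\widetilde\ssp \in B_n(\rho)$ with probability $1-o(1)$, which is the claimed rate. The hardest step is indeed the sharp uniform concentration: naive covering loses $\sqrt{n_1}$ from the high-dimensional nuisance $\ttt_1$, and the key is the profiling reduction that recycles the already-established estimating-equation machinery to effectively reduce the sup to a one-dimensional problem in $\ssp$.
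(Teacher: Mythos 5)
Your overall skeleton — feasibility of $(\ttt_1^*,\ssp_{01})$, the basic inequality for the constrained maximizer, the identity $\EEE l(\ttt_1^*;\ssp_{01})-\EEE l(\ttt_1;\ssp)=\sum_{i\neq j}D_{KL}(p_{ij}\|p_{ij}(\vs_i,\vc_j,\ssp))$, the separation hypothesis \eqref{eq:KL}, and a uniform concentration bound — is exactly the paper's M-estimation argument, which the paper organizes slightly differently (it defines $\epsilon_n(\rho)$ as the normalized gap between $\max_{\ttt_1}\EEE l(\ttt_1;\ssp_{01})$ and $\max_{\ttt_1,\ssp\in B_n^c(\rho)}\EEE l(\ttt_1;\ssp)$, sets $\rho_n=\inf\{\rho:\epsilon_n(\rho)>2c_1\log n_1/n_1\}$, shows $\widetilde\ssp\in B_n(\rho_n)$, and then bounds $\rho_n$ via \eqref{eq:KL} and continuity of $\epsilon_n$, following the proof of Theorem 2 of Yan et al.\ 2019). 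The extraction of the rate from \eqref{eq:KL} in your last paragraph is fine.

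The gap is in your treatment of the concentration step, which you yourself identify as the crux. First, the conclusion you state, $\sup_{\Theta_{n_1}}|l(\ttt_1;\ssp)-\EEE l(\ttt_1;\ssp)|=O_P(n_1\log n_1)$, cannot follow from a profiling reduction: profiling only controls the deviation along the path $(\widehat\ttt_1(\ssp),\ssp)$, not over the full $2n_1$-dimensional box, and by your own covering estimate the full supremum is larger by roughly $\sqrt{n_1}$, so the displayed claim is internally inconsistent (what saves the argument is that the basic inequality only needs the deviation at $(\ttt_1^*,\ssp_{01})$ and at the realized maximizer, which does lie on the profile path — but then you must state and use the bound along the path, not the full sup). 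Second, the key ingredient of your reduction — $\sup_{\ssp}\|\widehat\ttt_1(\ssp)-\ttt_1^{**}(\ssp)\|_\infty\lesssim_P e^{6\|\ttt_1^*\|_\infty}\sqrt{\log n_1/n_1}$ — is not an ``adaptation'' of Lemmas~\ref{lem:initial consis}--\ref{lem:initial decomp}: those lemmas analyze the \emph{estimating-equation} solution at (approximately) the true sparsity parameters, whereas $\widehat\ttt_1(\ssp)$ is a constrained maximizer of the \emph{non-concave} log-likelihood at a misspecified $\ssp$, with population target $\ttt_1^{**}(\ssp)\neq\ttt_1^*$; sup-norm consistency of such an MLE is precisely the kind of statement the paper's one-step construction is designed to avoid, and nothing in the paper supplies it. The paper itself does not profile at all: it invokes a Hoeffding-type uniform bound \eqref{eq:hoeffding ssp} over the whole feasible region (inherited from Yan et al.\ 2019, exploiting that each summand is bounded by $c\log n_1$) and proceeds directly, so your proposed machinery both overclaims its output and rests on an unproved and substantially harder intermediate result.
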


\begin{proof}[\bf Proof of Lemma~\ref{lem:consis ssp}]
The proof is similar to the proof of Theorem 2 in \cite{yan2019stat}, and thus we only show the main steps and different parts. We have $$
\begin{aligned}
&l(\ttt_{1};\ssp) - \EEE l(\ttt_{1};\ssp) = \sum_{i\neq j}^{n_1} \{1_{\{y_{ij}=1\}} - \EEE 1_{\{y_{ij}=1\}}\} \log p(1\mid \vs_i+\vc_j,\ssp)\\
&+\{1_{\{y_{ij}=0\}} - \EEE 1_{\{y_{ij}=0\}}\} \log p(0\mid \vs_i+\vc_j,\ssp)
+\{1_{\{y_{ij}=-1\}} - \EEE 1_{\{y_{ij}=-1\}}\} \log p(-1\mid \vs_i+\vc_j,\ssp).
\end{aligned}
$$ Note that there exists a constant $c$ such that $$
\max_{\substack{\|\ttt_{1}\|_{\infty}\leq \tau\log n_1 \\ \ssp\in[\gamma_n,1-\gamma_n]}}\max_{\substack{y\in\{-1,0,1\} \\ i\neq j}} |\log p(y\mid \vs_i+\vc_j,\ssp)| \leq c\log n_1.
$$ Then, by Hoeffding's equality, there exists a possibly different constant $c_1$ such that
\begin{equation}\label{eq:hoeffding ssp}
\max_{\substack{\|\ttt_{1}\|_{\infty}\leq \tau\log n_1 \\ \ssp\in[\gamma_n,1-\gamma_n]}}\frac{1}{{n_1}({n_1}-1)} |l(\ttt_{1};\ssp) - \EEE l(\ttt_{1};\ssp)| <_P c_1\log n_1\sqrt{\frac{\log n_1(n_1-1)}{n_1(n_1-1)}}<c_1\frac{\log n_1}{n_1},
\end{equation}
where $c_1$ may take different values. 
For any $\rho>0$, define $$
\epsilon_n(\rho) = \frac{1}{{n_1}({n_1}-1)}\left\{ \max_{\|\ttt_{1}\|_{\infty}\leq \tau\log n_1} \EEE[l(\ttt_{1};\ssp_{01})] - \max_{\substack{\|\ttt_{1}\|_{\infty}\leq \tau\log n_1 \\ \ssp\in B_n^c(\rho)}} \EEE[l(\ttt_{1};\ssp)] \right\},
$$ and $\rho_n = \inf\left\{\rho>0:\epsilon_n(\rho)>2c_1\log n_1/n_1\right\}$. Similar as proof of Theorem 2 in \cite{yan2019stat}, \eqref{eq:hoeffding ssp} implies $$
\max_{\|\ttt_{1}\|_{\infty}\leq \tau\log n_1}\frac{1}{{n_1}({n_1}-1)}\EEE[l(\ttt_{1};\widetilde\ssp)] > \max_{\substack{\|\ttt_{1}\|_{\infty}\leq \tau\log n_1 \\ \ssp\in B_n^c(\rho_n)}}\frac{1}{{n_1}({n_1}-1)}\EEE[l(\ttt_{1};\ssp)],
$$ which further leads that $\widetilde\ssp\in B_n(\rho_n)$.

Note that $$
\EEE[l(\ttt_1,\ssp)] = -\sum_{i,j=1,i\neq j}^{n_1} D_{KL}(p_{ij} || p_{ij}(\vs_i,\vc_j,\ssp)) + \sum_{i,j=1,i\neq j}^{n_1} S(p_{ij}),
$$ where $D_{KL}(p_{ij} || p_{ij}(\vs_i,\vc_j,\ssp))$ is the KL-divergence of $p_{ij}$ from $p_{ij}(\vs_i,\vc_j,\ssp)$ as defined earlier, and $$
S(p_{ij}) =\sum_{y\in\{-1,0,1\}}p(y\mid \vs_i^*+\vc_j^*,\ssp_{01})\log p(y\mid \vs_i^*+\vc_j^*,\ssp_{01}).
$$ 
Then by \eqref{eq:KL}, we have $$
\begin{aligned}
&\max_{\|\ttt_{1}\|_{\infty}\leq \tau\log n_1} \EEE[l(\ttt_{1};\ssp_{01})] - \max_{\substack{\|\ttt_{1}\|_{\infty}\leq \tau\log n_1 \\ \ssp\in B_n^c(\rho)}} \EEE[l(\ttt_{1};\ssp)] \\
=& -\min_{\|\ttt_{1}\|_{\infty}\leq \tau\log n_1}\sum_{i,j=1,i\neq j}^{n_1} D_{KL}(p_{ij}||p_{ij}(\vs_i,\vc_j,\ssp_{01})) + \min_{\substack{\|\ttt_{1}\|_{\infty}\leq \tau\log n_1 \\ \ssp\in B_n^c(\rho)}}\sum_{i,j=1,i\neq j}^{n_1} D_{KL}(p_{ij}||p_{ij}(\vs_i,\vc_j,\ssp)) \\
=&\min_{\substack{\|\ttt_{1}\|_{\infty}\leq \tau\log n_1 \\ \ssp\in B_n^c(\rho)}} \sum_{i,j=1,i\neq j}^{n_1}D_{KL}(p_{ij}||p_{ij}(\vs_i,\vc_j,\ssp)) \gtrsim \rho n_1(n_1-1) e^{-12\|\ttt_1^*\|_{\infty}} > 0,
\end{aligned}
$$ which implies that $\epsilon_n(\rho)\gtrsim e^{-12\|\ttt_1^*\|_{\infty}}$. Since $\epsilon_n(\rho)$ is continuous, we get $2c_1\log n_1/n_1 = \epsilon_n(\rho_n)\gtrsim \rho e^{-12\|\ttt_1^*\|_{\infty}}$, which leads that $\rho_n \lesssim e^{12\|\ttt_1^*\|_{\infty}}\log n_1/n_1$. 
Therefore, $|\widetilde\ssp - \ssp_{01}|\lesssim_P e^{12\|\ttt_1^*\|_{\infty}}\log n_1/n_1$.
\end{proof}

Based on Lemmas~\ref{lem:selection} and \ref{lem:consis ssp}, we have the following proposition specifying the convergence rate of $\widetilde\sssp$, which satisfy the requirement in Theorem~\ref{thm:initial}.

\begin{proposition}
Under the same conditions of Lemma~\ref{lem:consis ssp}, 
suppose $n_1^*\gtrsim n$. Then, we have $\|\widetilde\sssp-\sssp^*\|_{\infty}\lesssim_P e^{12\|\ttt^*\|_{\infty}}\log n/n$.
\end{proposition}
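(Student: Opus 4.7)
The plan is to combine the two preceding lemmas. Define the selection event $A_n := \{\III_1 = \III_1^*\}$. By Lemma~\ref{lem:selection}, $\Pr(A_n) \to 1$, so it suffices to control $\|\widetilde\sssp - \sssp^*\|_{\infty}$ restricted to $A_n$, and the $\lesssim_P$ conclusion will follow. On $A_n$, the estimator has the simple form $\widetilde\ssp_i = \widetilde\ssp$ for $i\in\III_1^*$ and $\widetilde\ssp_i = \log n/n$ for $i\notin\III_1^*$, and the subnetwork size is the deterministic quantity $n_1 = n_1^* \gtrsim n$. Since $\ttt_1$ is a subvector of $\ttt$, we also have $\|\ttt_1^*\|_{\infty}\leq\|\ttt^*\|_{\infty}$.

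First I would bound the coordinates with $i\in\III_1^*$. On $A_n$, $|\widetilde\ssp_i - \ssp_i^*| = |\widetilde\ssp - \ssp_{01}|$, and Lemma~\ref{lem:consis ssp} directly yields
$$
|\widetilde\ssp - \ssp_{01}| \lesssim_P \frac{e^{12\|\ttt_1^*\|_{\infty}}\log n_1}{n_1} \lesssim \frac{e^{12\|\ttt^*\|_{\infty}}\log n}{n},
$$
where the second inequality uses $n_1 = n_1^* \asymp n$ and $\|\ttt_1^*\|_{\infty}\leq\|\ttt^*\|_{\infty}$.

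Next I would bound the coordinates with $i\notin\III_1^*$. On $A_n$, $\widetilde\ssp_i = \log n/n$ and $\ssp_i^* = \ssp_{00}$, so by the triangle inequality and the standing hypothesis $\ssp_{00}\lesssim e^{12\|\ttt^*\|_{\infty}}\log n/n$,
$$
|\widetilde\ssp_i - \ssp_i^*| \leq \frac{\log n}{n} + \ssp_{00} \lesssim \frac{e^{12\|\ttt^*\|_{\infty}}\log n}{n},
$$
since $e^{12\|\ttt^*\|_{\infty}}\geq 1$. Taking the maximum over the two cases gives $\|\widetilde\sssp - \sssp^*\|_{\infty}\lesssim_P e^{12\|\ttt^*\|_{\infty}}\log n/n$ on $A_n$, and combining with $\Pr(A_n^c)\to 0$ completes the proof.

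The work is essentially bookkeeping: the genuinely hard estimates live in Lemma~\ref{lem:consis ssp}, whose proof goes through the KL-divergence control of the restricted MLE on a subnetwork of size $n_1^*$. The only nontrivial point here is that we must first reduce to the event $A_n$ so that $\widetilde\ssp$ is the MLE on a \emph{deterministic} subnetwork of size $n_1^*\gtrsim n$ whose true sparsity parameter is exactly $\ssp_{01}$; without the set-selection consistency of Lemma~\ref{lem:selection}, one would have to argue directly about the MLE on a mis-specified subnetwork, which would considerably complicate the KL-type argument used in Lemma~\ref{lem:consis ssp}.
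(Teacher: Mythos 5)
Your proof is correct and follows exactly the route the paper intends: the proposition is stated as an immediate consequence of Lemmas~\ref{lem:selection} and \ref{lem:consis ssp}, i.e.\ restrict to the selection event $\{\III_1=\III_1^*\}$, apply Lemma~\ref{lem:consis ssp} on the resulting deterministic subnetwork of size $n_1^*\gtrsim n$ for the coordinates in $\III_1^*$, and use $\ssp_{00}\lesssim e^{12\|\ttt^*\|_{\infty}}\log n/n$ together with the plug-in value $\log n/n$ for the remaining coordinates. Your bookkeeping (including $\|\ttt_1^*\|_{\infty}\leq\|\ttt^*\|_{\infty}$ and $\log n_1/n_1\lesssim \log n/n$) matches what the paper leaves implicit.
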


\bibliographystyle{apalike}
\bibliography{ref}

\end{document}